\documentclass[10pt,conference]{IEEEtran}
\IEEEoverridecommandlockouts
\usepackage{microtype}
\usepackage{amsthm}
\usepackage{amsmath,amssymb}
\usepackage{complexity}
\usepackage{xspace}
\usepackage{bm}
\usepackage[margin=1in]{geometry}
\usepackage{soul}
\usepackage{tabularx}
\usepackage{graphicx}
\usepackage[font=small,skip=0pt]{caption}
\usepackage{enumitem}

\newcommand{\subparagraph}{}

\usepackage{framed}
\usepackage{algorithm}
\usepackage{array}
\usepackage{xcolor}
\usepackage[noend]{algpseudocode}
\usepackage{enumitem}
\usepackage{multirow}
\usepackage{url}
\usepackage{titlesec}
\usepackage{threeparttable}
\usepackage{url}
\usepackage[hidelinks]{hyperref}

\newboolean{short}
\newcommand{\shortOnly}[1]{\ifthenelse{\boolean{short}}{#1}{}}
\newcommand{\onlyShort}[1]{\ifthenelse{\boolean{short}}{#1}{}}
\newcommand{\longOnly}[1]{\ifthenelse{\boolean{short}}{}{#1}}
\newcommand{\onlyLong}[1]{\ifthenelse{\boolean{short}}{}{#1}}

\setlength\extrarowheight{2pt}

\titlespacing*{\section}
{0pt}{.5ex plus 1ex minus 1ex}{.5ex plus .8ex}
\titlespacing*{\subsection}
{0pt}{.5ex plus 1ex minus 1ex}{.5ex plus .8ex}

\setlength{\textfloatsep}{5pt}

\date{}

\newcommand{\para}[1]{\vspace{0.1em}\noindent\textbf{#1.}~}

\newtheorem{theorem}{Theorem}

\newtheorem{lemma}[theorem]{Lemma}
\newtheorem{corollary}[theorem]{Corollary}

\begin{document}
\title{Latency, Capacity, and Distributed Minimum Spanning Tree}

\makeatletter
\newcommand{\linebreakand}{%
  \end{@IEEEauthorhalign}
  \hfill\mbox{}\par
  \mbox{}\hfill\begin{@IEEEauthorhalign}
}
\makeatother

\author{\IEEEauthorblockN{John Augustine}
\IEEEauthorblockA{\textit{Indian Institute of Technology Madras} \\
augustine@iitm.ac.in}
\and
\IEEEauthorblockN{Seth Gilbert}
\IEEEauthorblockA{\textit{National University of Singapore} \\
seth.gilbert@comp.nus.edu.sg}
\and
\IEEEauthorblockN{Fabian Kuhn}
\IEEEauthorblockA{\textit{University of Freiburg} \\
kuhn@cs.uni-freiburg.de}
\linebreakand %
\IEEEauthorblockN{Peter Robinson}
\IEEEauthorblockA{\textit{City University of Hong Kong} \\
peter.robinson@cityu.edu.hk}
\and
\IEEEauthorblockN{Suman Sourav}
\IEEEauthorblockA{\textit{National University of Singapore} \\
sourav@comp.nus.edu.sg}
}

\maketitle

\begin{abstract}
We study the cost of distributed MST construction in the setting where each edge has a latency and a capacity, along with the weight. Edge latencies capture the delay on the links of the communication network, while capacity captures their throughput (in this case, the rate at which messages can be sent). 
Depending on how the edge latencies relate to the edge weights, we provide several tight bounds on the time and messages required to construct an MST. 

When edge weights exactly correspond with the latencies,  we show that, perhaps interestingly, the bottleneck parameter in determining the running time of an algorithm is the total weight $W$ of the MST (rather than the total number of nodes $n$, as in the standard \textsf{CONGEST} model). That is, we show a tight bound of $\tilde{\Theta}(D + \sqrt{W/c})$ rounds, where $D$ refers to the latency diameter of the graph, $W$ refers to the total weight of the constructed MST and edges have capacity $c$. The proposed algorithm sends $\tilde{O}(m+W)$ messages, where $m$, the total number of edges in the network graph under consideration, is a known lower bound on message complexity for MST construction. We also show that $\Omega(W)$ is a lower bound for fast MST constructions.   

When the edge latencies and the corresponding edge weights are unrelated, and either can take arbitrary values, we show that (unlike the sub-linear time algorithms in the standard \textsf{CONGEST} model, on small diameter graphs), the best time complexity that can be achieved is $\tilde{\Theta}(D+n/c)$. However, if we restrict all edges to have equal latency $\ell$ and capacity $c$ while having possibly different weights (weights could deviate arbitrarily from $\ell$), we give an algorithm that constructs an MST in $\tilde{O}(D + \sqrt{n\ell/c})$ time.
In each case, we provide nearly matching upper and lower bounds.
\end{abstract}

\section{Introduction} \label{sec:intro}

\begin{table*}[htb]
\centering
\caption{MST Construction Results Summary}
\begin{tabular}{|c|c|c|}
\hline
Weights vs Latencies                     & Subcases                       & Round Complexity                     \\ \hline
Weights = latencies                      & Arbitrary weights/latencies    & $\tilde{\Theta}(D+\sqrt{W/c})$       \\ \hline
\multirow{2}{*}{Weights $\ne$ latencies} & Arbitrary weights \& latencies & $\Theta(D+n/c)$ 					 \\ \cline{2-3} 
                                         & Uniform latencies $\ell$       & $\tilde{\Theta}(D + \sqrt{n\ell/c})$   \\ \hline
\end{tabular}
\begin{tablenotes}
  \item 
   \center{$D$ is latency diameter; $n$ is number of nodes; $c$ is edge capacity and  $W$ is the total weight of the MST.}
\end{tablenotes}
\end{table*}

Construction of a minimum-weight spanning tree (MST) is one of the most fundamental problems in the area of distributed computing, and has been extensively studied (see~\cite{GHS, 4568149, Awerbuch:1987:ODA:28395.28421, Singh1995, GKP, KUTTEN199840, PR99, Elkin:2006:ULB:1139961.1146783, Elkin:2004:FDP:982792.982844,  King:2015:CIR:2767386.2767405, Khan2008, Pandurangan:2017:TMD:3055399.3055449, Elkin:2017:SDD:3087801.3087823, DBLP:conf/wdag/MashreghiK18} and references therein).

Much of this existing literature deals with the standard \textsf{CONGEST} model of communication in a synchronous setting \cite{Peleg:2000:DCL:355459} where all edges have a unit communication time (equaling round length) and in every round nodes can communicate with all their neighbors via $O(\log n)$ sized messages.\footnote{If all edges have uniform latency $\ell$, it is easy to see that these results also hold albeit with a scaling factor of $\ell$.}  
Alternatively, in asynchronous systems, edges can have an arbitrary amount of delay and there exists absolutely no synchronization among nodes. Neither scenario accurately depicts most real-world networks, where different connections have different speeds or latency (which may depend on distance, congestion, router speed, etc.) along with some level of synchrony. A lower latency implies faster packet delivery. With latencies, the communication is not totally asynchronous (as messages have fixed delays),  %
neither is it totally synchronous (as communication on different edges takes different amounts of time). To model a scenario with latencies in a totally synchronous world, one has to assume the worst case latency in the network as the round length. 
This assumption is quite wasteful, and oftentimes, the worst case latencies might be too high.

Latency, however, is not the only parameter that matters; the throughput (or the rate in which data can be pushed) of a communication link can often be the bottleneck in communication. Here, we describe the throughput as the capacity of a link, which we indicate as a fraction in $[0, 1]$.  If a link has capacity 1, then a new packet can be sent over the link in every time step (even if the earlier ones have not arrived yet).  If a link has capacity 1/10, then a new packet can only be sent once in every 10 rounds.  %

Both latency and capacity play an influential role in determining a network's performance (see \cite{Bakr:2002:ERT:571825.571864, 1607968} etc.). As an example, in \cite{Bakr:2002:ERT:571825.571864}, Bakr and Keidar show that the communication step metric (number of rounds required) fails to capture the behavior of an actual algorithm over the internet because of the existence of arbitrary latencies. Thereby, one of our goals is to correctly capture and model the performance of actual algorithms on networks.

When considering both latency and capacity, it can be tricky to determine the best way for two nodes to communicate.  If the nodes only have one packet to exchange, you want to find the path with minimum latency (though such a path may have more hops).  If they have a stream of packets to send, then you may want to find a path with high capacity.  Again, if you want to minimize the message complexity, then you might want a path that minimizes the number of hops.  Simultaneously optimizing these different parameters is a challenge!

In this paper, we focus on the problem of constructing an MST on graphs having edge latencies and capacity, giving algorithms and lower bounds for a variety of different cases. 
%

\iffalse

\para{Latency and Capacity}
The latency of an edge determines the time required to communicate via an edge.
Low latency on links imply faster message transmission whereas higher latency implies longer delays.
%
%
In fact, even if nodes are connected directly it might not be the fastest route for communication due to large latency of the link %
; often choosing a multi-hop lower latency path leads to faster algorithms. Basically, a hop optimal solution, no longer remains time optimal. Additionally, even communicating with a neighbor might not be feasible (if we care about time).
The capacity determines the rate at which messages can be sent over an edge.
We consider the capacity to be maximum, when a new message can be sent over a link in each round (non-blocking model \cite{8416344}). Similarly, capacity is minimized when there can exist only one message in the link at a particular time, i.e., a new message can be sent only when the previous message is delivered (blocking model \cite{8416344}). %

\fi
%

\para{Weighted \textsf{CONGEST} Model}
The network is modeled as a connected, undirected graph $G = (V,E)$ with $n = |V|$ nodes and $m=|E|$ edges. Each edge represents a bi-directional synchronous communication channel that has three attributes associated with it: latency, capacity, and weight. The weight provides the parameter over which we build an MST.  %

If an edge $(u,v)$ has capacity $c \in (0,1]$, it implies that $u$ can send a new message to $v$ in every $1/c$ rounds, i.e., if  $u$ had sent a message to $v$ in round $r$, then it can send the next message to $v$ only in round $r+(1/c)$.
For simplicity, we assume that the rate at which data can be sent remains constant throughout the network, i.e., all edges have the same capacity $c$ and $1/c$ is an integer. 

If an edge $(u,v)$ has latency $\ell$, it implies that it requires $\ell$ rounds for a message to be sent from $u$ to $v$ (or vice versa). 
We assume that each edge's latency and weight are integers. (If not, they can be scaled and rounded to the nearest integer.) 
Let $\ell_{min}$ be the minimum latency of any edge of the given graph $G$.  We assume that  $\ell_{min} \geq 1/c$. This ensures that, if there are no messages in transit over an edge, a node should be allowed to send a message over that edge.\footnote{It is reasonable to assume that an edge is not blocked for longer than its latency, so we require the capacity of an edge of latency $\ell$ to be at least $1/\ell$.}

Nodes know the value of $n$ and have unique ids.  Nodes can send $O(\log n)$-bit messages to all their neighbors in a particular round.
Nodes also know the latency, capacity, and weight of their adjacent edges; however, they are not aware of the ids of their neighbors (as in the KT0 model of computation).
The latency diameter $D$ of the graph $G$ refers to the graph diameter with latencies. Unless mentioned otherwise, $D$ denotes the latency diameter.

\para{Distributed MST Construction}
Given a connected, edge-weighted undirected graph $G=(V,E)$ with latencies and capacity $c$, the goal is to determine a set of edges $E$ that form a spanning tree of minimum weight. At the end of the distributed MST construction protocol, each node must know which of  its adjacent edges are in the computed MST.

\para{Results}
In this paper, we introduce the weighted \textsf{CONGEST} model with edge latencies and capacities that closely mimic real-world communication. We study the effects of latency and capacity in determining the time required for constructing an MST. Depending on how the edge latencies relate to the weights, we provide several tight bounds on both the time and messages required to construct an MST. 

We start by considering the case where edge weights also represent the latency on the edge.\footnote{For problems like MST or TSP (traveling salesman problem), the weights of the edges tend to model latency in many cases.} %
Surprisingly, for this case, the key parameter determining the delay due to congestion is the total weight $W$ of the constructed MST (rather than the total number of nodes $n$ in the graph as seen for the case of the standard \textsf{CONGEST} model on graphs with unit latencies). We propose an algorithm that constructs an MST in $\tilde{O}(D + \sqrt{W/c})$ rounds\footnote{The $\tilde{O}$, $\tilde{\Theta}$, and $\tilde{\Omega}$  notations hide polylogarithmic factors.} and with  $\tilde{O}(m + W)$ messages. %
Correspondingly, to make our bounds tight we show a lower bound of $\tilde{\Omega}(D+\sqrt{W/c})$ rounds. As part of the lower bound proof, we provide a simulation that relates the running time of an algorithm in this weighted \textsf{CONGEST} model with that in the standard \textsf{CONGEST} model. (c.f. Lemma \ref{lem:simulation}). In regard to the message complexity, we first show a lower bound of $\Omega(m)$ by leveraging on the results in \cite{Kutten:2015:CUL:2742144.2699440}. %
We also show that any algorithm that runs in a constant factor of the optimal running time, for a particular choice of constant, would require $\Omega(W)$ messages in the worst case.

Next, we consider the case where there is no correlation between latencies and weights.  In the standard \textsf{CONGEST} model, an MST can be constructed in $\tilde{O}(D' + \sqrt{n})$ time, where $D'$ refers to the graph diameter with unit latencies. However, in a network with arbitrary latencies, we show that sub-linear time MST construction is impossible.  Specifically, we give a lower bound of $\Omega(D+n/c)$ rounds for constructing an MST.  Correspondingly, we also give an algorithm that constructs an MST in $O(D+n/c)$ rounds and with $O(n^2)$ messages.  

A fundamental special case is where all edges have equal latency $\ell$.  We give a simultaneous time and message optimal algorithm (derived from~\cite{Elkin:2017:SDD:3087801.3087823}) that constructs an MST in $\tilde{O}(D + \sqrt{n\ell/c})$ time and with $\tilde{O}(m)$ messages.  This is faster than the expected $O(\ell)$ slowdown (achieved by scaling up the edge latencies from $1$ to $\ell$ in the standard \textsf{CONGEST} model), and this speed-up is achieved by exploiting the power of edge capacity through pipelining of messages. %

\para{Challenges}
There are some basic challenges that arise in designing MST algorithms for networks with latencies and capacities.  
There may be many edges that are just too expensive to use, and a node will never even know the identity or status of its neighbors on the other side of these edges.  Moreover, it may not be clear in advance which edges are too expensive to use, as that depends on various parameters, e.g., $D$ or $W$.  For example, when a node is trying to find a minimum weight outgoing edge of a component, it may never be able to find out whether a neighbor is in the same connected component.  Or as another example, our MST algorithms rely on collecting information on BFS/shortest path trees; yet in constructing the BFS tree, there are some edges that cannot be used.  How does a node know when the construction is complete?  In all our protocols, we must carefully coordinate the exploration of edges to avoid using expensive edges and to compensate for unknown information.

Most existing distributed MST algorithms that try to optimize both the time and messages complexities usually runs in two stages (see \cite{GHS,Pandurangan:2017:TMD:3055399.3055449, Elkin:2017:SDD:3087801.3087823}), and we also adopt a similar strategy). In the first stage, usually, the MST is built in a bottom-up fashion by merging MST-fragments (a connected subgraph of the MST). As and when these fragments become large, the cost of communicating on them increases, marking the beginning of the second stage, where algorithms use a BFS tree to further build the MST. This switching point is no longer as simple to determine, as it depends on various unknown parameters of the graph, e.g., $D$ and $W$.  Our algorithms have to on-the-fly determine the best point to switch between stages. 
Moreover, it is no longer the case that the same tree is good for both minimizing latency and message complexity. This makes the balancing problem even more difficult, if we want to maintain reasonable message complexity. 
A related problem shows up in the initial construction of the BFS/shortest path tree.  In a model with unit latency edges, there are a variety of strategies for electing a leader and using it to initiate a shortest-path tree (even with good message complexity~\cite{Kutten:2015:CUL:2742144.2699440}).  However, when links have arbitrary latencies, this becomes non-trivial, and we rely on a simple randomized strategy.

\para{Prior Work}  %
The problem of distributed computation of MST was first proposed in the seminal paper of Gallager, Humblet, and Spira \cite{GHS}, which presented a distributed algorithm for MST construction in $O(n\log n)$ rounds and with $O(m + n \log n)$ messages.
The time complexity was further improved in \cite{4568149} and subsequently to existentially optimal $O(n)$ by Awerbuch \cite{Awerbuch:1987:ODA:28395.28421}.
Existential optimality implies the existence of graphs for which $O(n)$ is the best possible time complexity achievable. 
These (and many subsequent works, including this) are based on a non-distributed variant of the algorithm of {B}or\r{u}vka \cite{NESETRIL20013}. 

In a pioneering work \cite{GKP} Garay, Kutten, and Peleg showed that the parameter that best describes the cost of constructing an MST is the graph diameter $D'$ (with unit latency edges), rather than the total number of nodes $n$. For graphs with sub-linear diameter, they gave the first sub-linear distributed MST construction algorithm requiring $O(D' + n^{0.614} \log^* n)$ rounds and $O(m + n^{1.614})$ messages. This was further improved to $O(D'+\sqrt{n}\log^* n)$ rounds and $O(m + n^{1.5})$ messages by  Kutten and Peleg \cite{KUTTEN199840}. Shortly thereafter, Peleg and Rubinovich \cite{PR99} showed that $\Omega(D' + \sqrt{n}/ \log n)$ time is required by any distributed MST construction algorithm, even on networks of small diameter ($D'=\Omega(\log n)$), establishing the asymptotic near-tight optimality of the algorithm of \cite{KUTTEN199840}. Consequently, the same lower bound of $\Omega(D' + \sqrt{n})$ was shown for randomized (Monte Carlo) and approximation algorithms as well \cite{DHK+12, Elkin:2006:ULB:1139961.1146783}.

The message complexity lower bound of $\Omega(m)$ was first established by Awerbuch \cite{Awerbuch:1987:ODA:28395.28421} for deterministic and comparison based randomized algorithms. In \cite{Kutten:2015:CUL:2742144.2699440}, Kutten et al. show that the lower bound holds for any algorithm, in the KT0 model of communication, where nodes do not know the ids of its neighbors. However, for general randomized algorithms, if the nodes are aware of the ids of their neighbors (KT1 model) the $\Omega(m)$ message complexity lower bound does not hold. In fact in \cite{King:2015:CIR:2767386.2767405}, King, Kutten, and Thorup give an MST construction algorithm with a message complexity of only $\tilde{O}(n)$, however this came at the expense of having time complexity of $\tilde{O}(n)$. For asynchronous networks, Mashreghi and King \cite{DBLP:conf/wdag/MashreghiK18} give an algorithm that computes an MST using only $o(m)$ messages.

More recently, for the KT0 model, Pandurangan et al. \cite{Pandurangan:2017:TMD:3055399.3055449} provide a randomized MST construction algorithm with time complexity $\Tilde{O}(D' + \sqrt{n})$ and message complexity $\Tilde{O}(m)$, which is simultaneously time an message optimal. Elkin \cite{Elkin:2017:SDD:3087801.3087823}, Haeupler et al. \cite{Haeupler:2018:RMD:3212734.3212737}, and Ghaffari and Kuhn \cite{ghaffari_et_al:LIPIcs:2018:9819} have since provided improved deterministic algorithms that achieve the same bounds (with improvements in logarithmic factors).

As discussed earlier, both latency and capacity play a significant role in determining a network's performance (see \cite{Bakr:2002:ERT:571825.571864, 1607968} etc.). In \cite{Awerbuch:1990:CAC:93385.93417}, Awerbuch et al. study the impact of transmission delays on several different distributed algorithms including MST construction (by using the methods of the pre-$\sqrt{n}$-era). %
There has also been some recent work by Sourav, Robinson, and Gilbert~\cite{8416344} on graphs with latencies. They looked at the problem of gossip, and developed a notion of weighted conductance that captured the connectivity of a graph with latencies.  They used this to analyze the cost of information dissemination in such graphs.

\section{Equal Weights and Latencies} \label{sec:correlated}

In this section, we consider the weights of the edges to be exactly equivalent to the edge latency (the results also hold if there exists a fixed relationship between the weights and the latencies). Unlike the case with unit latencies, where the running time of an algorithm depends on the total number of nodes $n$ (along with the diameter $D$), here we see that when weights do represent latencies the running time of any MST construction algorithm becomes dependent on the total weight $W$ of the MST (along with the diameter and the edge capacities). 

Specifically, we give an MST construction algorithm that runs in $\tilde{O}(D+\sqrt{W/c})$ time while sending $\tilde{O}(m+W)$ messages. %
We also provide the corresponding time complexity lower bound. 
For message complexity, we first show a lower bound of $\Omega(m)$. %
Thereafter, for fast MST algorithms, we show another lower bound of $\Omega(W)$. 

\subsection{Upper Bound} \label{subsec:ub-unequal}
In this section, we provide an algorithm for constructing an MST  when the edge latency of each edge matches with its edge weight requiring $\tilde{O}(D+\sqrt{W/c})$ time and $\tilde{O}(m+W)$ messages.

\para{Preliminaries} \label{sec:prelims}
We first introduce some notation. 
Given a graph $G$, let $T$ be the (unique) MST of $G$. A fragment $F$ of $T$ is defined as a connected subgraph of $T$, that is, $F$ is a rooted subtree of $T$. The root of the fragment is called the fragment leader. 
Each fragment is identified by the id of the fragment leader, and each node knows its fragment's id (enforced as an invariant by the algorithm). An edge is called an outgoing edge of a fragment if one of its endpoints lies in the fragment and the other does not. The minimum-weight outgoing edge (MOE) of a fragment 
$F$ is the edge with minimum weight among all outgoing edges of $F$.
\noindent\textbf{Algorithm for Equal Weights and Latencies} \label{subsubsec:ubcorrelated} \\
In order to obtain an algorithm that not only gives the optimal time complexity but also a reasonable message complexity, %
we base our idea on the Elkin's algorithm \cite{Elkin:2017:SDD:3087801.3087823} for graphs with unit latencies. 
The premise of our algorithm is simple (and is in similar flavor with many of the existing distributed MST algorithms for graphs with unit latencies~\cite{GHS, GKP, Pandurangan:2017:TMD:3055399.3055449, Elkin:2017:SDD:3087801.3087823}) where the MST is built in a bottom up fashion by merging fragments. Over the iterations, fragments merge with one another until there remains a single MST fragment which is the required MST. In most cases, e.g. \cite{Pandurangan:2017:TMD:3055399.3055449} and \cite{Elkin:2017:SDD:3087801.3087823}, where there is a trade-off between the time and messages, there exists a balance between just building the fragment bottom-up (which can lead to large sized fragments, over which it would be too costly to communicate) and communicating via an external structure (usually a BFS tree of more manageable size). The main bottleneck of using the external structure often arises from possible congestion. %
We adopt a similar strategy (with a few key differences in implementation) of first building the fragments bottom-up and thereafter switching to use an external structure and broadly divide our algorithm into two stages; the local aggregation stage and the global aggregation stage.%

In the \textit{local aggregation} stage, by communicating via the fragment edges, we first build an initial set of fragments called ``base fragments" that satisfy a certain condition. This condition (as shown later) helps in determining the number of base fragments at the end of the first stage which in turn determines the maximum possible slowdown due to congestion in the second stage i.e. the \textit{global aggregation} stage. In this stage, to account for arbitrary latencies, subsequent fragment mergings are done by aggregating information collected from base fragment leaders over a shortest path tree. Optimum time complexity is obtained by balancing the cost of local aggregation and the global aggregation. The key idea in either stage is to have sufficiently frequent fragments mergings such that the algorithm terminates in a small number of iterations.

\para{Challenges and Countermeasures} With arbitrary latencies, if we use the previous approaches (of \cite{Pandurangan:2017:TMD:3055399.3055449} or \cite{Elkin:2017:SDD:3087801.3087823}) and focus on building base fragments up to a certain latency diameter, we can no longer say anything useful regarding the fragment size\footnote{This is because of the fact that with arbitrary latencies, a fragment with fragment diameter $k$ could have only $2$ nodes if it only contains an edge with latency $k$. Alternatively, it could contain up to $k$ nodes.} and hence the number of base fragments created. In the worst case, there can be up to $n/2$ base fragments.
Since an MOE (minimum-weight outgoing edge) can have an arbitrary latency (and unlike previous cases this choice influences the cost of creating the MST), we cannot allow arbitrary mergings and need to be careful in determining the specific mergings that are allowed.
Basically, if we do not distinguish between MOE edges (of different latencies), the cost of communicating within a fragment may become too high. Additionally, if we set too strict criteria for merging, fragments may not merge regularly enough, requiring a larger number of iterations. To achieve optimal time complexity and minimal message complexity, there has to be a balance between the cost of local aggregation (communicating within a fragment) with the cost of global aggregation (determined by possible congestion caused by the number of created base fragments). This balance, unlike the unit latency case, depends on the total weight $W$ of the MST. Nodes, without knowing the value of $W$ would have to determine on the fly the exact balance so as to decide when to switch from the local aggregation stage to the global aggregation stage.

To get around these issues, firstly, instead of controlling the growth of fragment diameter directly, we limit the total weight up to which fragments can grow in a particular iteration $i$.
Additionally, here in a particular iteration $i$ of the local aggregation algorithm, we only allow edges of weight (latency) $2^i$ or less to be used for fragment mergings. %
As such, for simplicity, one can view the local aggregation algorithm in iteration $i$ to be running on the sub-graph $G(2^i)$ of the given graph $G$, that only consists of the edges of latency $\leq 2^i$. Finally, we use a guess and double technique to determine the ideal balance between the number and the diameter of base fragments. Initially, we present the algorithm assuming that the nodes know the value of $W$ and later show that even if $W$ is not known, the MST can be computed through a guess and double strategy.

\onlyLong{
Notice that with arbitrary edge latencies, a hop-optimal solution no longer implies a cost-optimal solution. For example, the diameter of a BFS tree might be greater than the diameter of the graph, making a BFS tree unsuitable for algorithms requiring optimal time complexity. Therefore, we use a shortest path tree rather than a BFS tree. However, constructing a deterministic shortest path tree with arbitrary latencies is also non-trivial, especially if we want an algorithm with low message complexity. Additionally, due to the lack of synchrony, another challenge here while upcasting is to ensure that each node has all the required information to determine the correct edge to upcast.} %

\onlyLong{
\para{Shortest Path Tree Construction and Leader Election}  \label{spt}
To determine a shortest path tree rooted at some node, we use a simple randomized flooding mechanism:
Initially, each node becomes \emph{active} with probability $4\log(n) / n$ and if it is active, it forms the root of a shortest path tree by entering the \emph{exploration phase}.
Then, each active node $u$ broadcasts a \texttt{join} message carrying its ID to its neighbors who in turn propagate this message to their neighbors and so on.  The tree construction cannot wait to terminate until every edge is explored; instead, a counting mechanism is used to determine when the tree is spanning.
Therefore each root node $u$ sends out a \texttt{count} message (carrying its ID) in round $2^i+1$, for each $i=k,k+1,k+2,\dots$ until $u$ exits the \emph{exploration phase}, where $k$ is an integer such that $2^k \ge 1/c$.
The \texttt{count} messages propagate through $u$'s (current) tree until they reach the leaf nodes, who initiate a convergecast back to the root with a count of $1$. 
When a node receives the convergecast from its children it forwards the accumulated count to its parent in the shortest path tree.  

Since multiple nodes are likely to become active and start this process, eventually a node $w$ will receive \texttt{join} messages originating from distinct root nodes. 
In that case, $w$ joins the shortest path tree rooted at the node with the maximal ID.
If it has already joined some other shortest path tree previously that is rooted at a node with a smaller ID $v'$, it simply stops participating in that tree and responds to messages from that tree by sending a \texttt{disband} reply carrying $v'$'s ID.
A disband message propagates all the way to the root $v'$ who in turn becomes \emph{inactive} and exits the exploration phase. 

If an active node is still in the exploration phase when it receives a \texttt{count} message carrying a count of $n$, it stops exploring and broadcasts a \texttt{done} message through its tree.

\begin{lemma} \label{lem:spt}
In the weighted {CONGEST} model, when edges have arbitrary latencies, there exists an algorithm to elect a leader $u$ and construct a shortest path tree rooted at $u$ in $O(D)$ time using $O(m\log n )$ messages with high probability. 
\end{lemma}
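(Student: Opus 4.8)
The plan is to analyze the randomized flooding procedure described above. Write $A$ for the (random) set of nodes that become active, and let $u^\ast$ be the active node of largest ID. Two facts follow from a Chernoff bound applied to $|A|$, whose expectation is $4\log n$: with high probability $A\neq\emptyset$, and $|A|=O(\log n)$. We condition on both. Note that $A\neq\emptyset$ already suffices for the $O(D)$ time bound; the bound $|A|=O(\log n)$ is only used for the message count.

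The structural heart of the proof is that the restriction of the flood to the \texttt{join} messages originating at $u^\ast$ is a clean distributed shortest-path (``BFS by latency'') computation. Since only active nodes ever emit \texttt{join} messages carrying their own ID, no node ever sees an ID larger than that of $u^\ast$; hence every node forwards $u^\ast$'s \texttt{join} the first time it receives it and never suppresses it. Because a message crosses an edge in exactly its latency, a standard flooding induction on latency-distance shows that $u^\ast$'s \texttt{join} reaches a node $v$ within $\mathrm{dist}(u^\ast,v)$ rounds (where $\mathrm{dist}$ is the latency distance) and that the neighbor from which it first arrives lies on a minimum-latency $u^\ast$--$v$ path; taking parent pointers accordingly, after $\mathrm{ecc}(u^\ast)\le D$ rounds every node lies in $u^\ast$'s tree, which is a shortest path tree that never changes thereafter. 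A node that had previously joined a smaller-ID tree simply switches and later sends a \texttt{disband} reply up the abandoned tree; this does not affect the tree rooted at $u^\ast$.

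For termination and the $O(D)$ bound I would use the \texttt{count}/convergecast mechanism. Since a node only ever migrates to a \emph{larger}-ID tree, the node set of $u^\ast$'s tree is nondecreasing in time, so a \texttt{count} convergecast returns a value $\le n$ that equals $n$ exactly once the tree spans $V$. The doubling schedule (waves at rounds $2^i+1$ starting from $2^k\ge 1/c$, which is $\le D$ since $D\ge\ell_{min}\ge 1/c$) guarantees a wave is launched at some round $r$ with $\mathrm{ecc}(u^\ast)\le r=O(D)$; by then the tree is final and has latency-depth $\mathrm{ecc}(u^\ast)$, so the broadcast reaches the leaves and the total count $n$ returns to $u^\ast$ within $2\,\mathrm{ecc}(u^\ast)=O(D)$ more rounds, whereupon the \texttt{done} broadcast takes another $O(D)$. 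Every losing root $v'$ is disbanded within $O(D)$ rounds too: by round $D$ every node of $v'$'s tree has received $u^\ast$'s \texttt{join} and abandoned $v'$, so $v'$'s next \texttt{count} wave triggers a \texttt{disband} that returns in $O(D)$ more rounds. Thus all activity stops by round $O(D)$ and every node knows it belongs to the tree rooted at the leader $u^\ast$.

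Finally, for messages: the \texttt{join} flood of each active root costs $O(m)$ (each node forwards a given root's \texttt{join} at most once per incident edge), as do the acknowledgements by which a node learns which neighbors adopted it as parent, so $O(m\log n)$ in total over the $O(\log n)$ roots. Each root launches only a geometric ($O(\log D)$) number of \texttt{count} waves before stopping, each wave being a broadcast-plus-convergecast over a tree of at most $n-1$ edges and hence of cost $O(n)$, and the \texttt{disband} messages ride back along those tree paths; summed over all roots this is a lower-order $\tilde{O}(n)$, dominated by the \texttt{join} cost. The step I expect to be the real work is this last accounting together with the termination argument: one must show that the ad hoc interleaving of a never-terminating flood with doubling \texttt{count} waves never lets any root (winning or losing) stay active for more than $O(D)$ rounds, which keeps both the time and the total wave traffic bounded --- the two levers being the monotone growth of $u^\ast$'s tree and the fact that $u^\ast$'s \texttt{join} reaches the whole graph within $D$ rounds.
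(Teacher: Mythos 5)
Your proof follows the same approach as the paper: concentrate the number of active roots at $\Theta(\log n)$ via a Chernoff bound, argue that the maximum-ID active node's tree grows monotonically and spans the graph by time $D$, use the doubling \texttt{count}/convergecast waves to detect spanning within another $O(D)$ rounds, and charge $O(m)$ messages to each of the $O(\log n)$ roots. The paper's own proof is a two-observation sketch that leaves the shortest-path correctness, the termination of losing roots, and the per-wave traffic accounting implicit; you fill in exactly those gaps, but without departing from the paper's structure.
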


\begin{proof} 
The time complexity depends on the time until every node has exited the exploration phase. 
Observe that the active node $u$ with the maximum ID will have integrated all nodes into its shortest path tree in $D$ time and, by the description of an algorithm, once a node joins $u$'s tree, it does not leave it.
Moreover, $u$ becomes aware that its tree has included all nodes within $O(D)$ additional rounds due to the \texttt{count} messages.

To see that the message complexity bound holds, observe that, with high probability, there are $\Theta(\log n)$ active nodes and each active node may initiate the construction of a shortest path tree.
\end{proof}
}

We call a fragment $F$ in iteration $i$ a \textit{blocked fragment} if all its adjacent MOE edges (including its own) has latency/weight greater than $2^i$, and therefore it cannot merge with any other fragment in iteration $i$ (c.f. Algorithm \ref{algo:CGHS2}). %
All the other fragments, that can still merge are called as \textit{non-blocked fragments}. %
As the growth of some fragments can now be blocked, another challenge is to regulate the number of base fragments. 
More base fragments leads to higher time complexity while accounting for congestion in communicating via the shortest path tree  
\onlyShort{(with arbitrary latencies a hop optimal solution need not necessarily imply a cost optimal solution, and therefore we use a shortest path tree rather than a BFS tree). However, constructing a deterministic shortest path tree with arbitrary latencies is also non-trivial, especially if we want an algorithm with low message complexity.} %

\para{Algorithm}
\onlyShort{The algorithm begins with the construction of a shortest path tree $\tau$, the procedure for which is described in the full version of the paper \cite{augustine2019latency}. Note that, from the shortest path tree it is easy to obtain a 2-approximation for the latency diameter $D$ of the graph, and based on $D$, we divide our algorithm in to two cases.  This is done in order to obtain a time optimal algorithm having minimal message complexity. First, we describe the algorithm for the case where $D \leq \sqrt{W/c}$ and thereafter consider the case where $D > \sqrt{W/c}$. As discussed earlier, for each case the algorithm is two-staged, consisting of a local aggregation stage and a global aggregation stage. }
\onlyLong{The algorithm begins with the construction of a shortest path tree $\tau$. Note that, from the shortest path tree $\tau$, it is easy to obtain a 2-approximation for the latency diameter $D$ of the graph, and based on $D$, we divide our algorithm in to two cases. This is done in order to obtain a time optimal algorithm having minimal message complexity. First, we describe the algorithm for the case where $D \leq \sqrt{W/c}$ and thereafter consider the case where $D > \sqrt{W/c}$. As discussed earlier, for each case the algorithm is two-staged, consisting of a local aggregation stage and a global aggregation stage. }

\enlargethispage{-\baselineskip}
\para{Local Aggregation Stage} (Creating Base Fragments)
Local aggregation begins with each node as a singleton fragment. Thereafter in every iteration, each fragment finds its MOE and some fragments are merged along their MOE in a controlled and balanced fashion, until the base fragments of the required total weight are obtained. Mergings are done by determining the MOE for each node of a fragment and convergecasting only the lightest edge seen up to the fragment leader (using only the fragment edges). The fragment leader decides the overall MOE for the fragment, and the merging (if occurs) occurs over this MOE. 
The guarantee here is twofold; first that the fragments merge sufficiently regularly (i.e the number of fragments reduces by at least half in each iteration) and secondly, the total number of blocked fragments at the end of local aggregation is not too much.

Consider $\mathcal{F}_i$ to be the set of fragments $\{F_1, F_2, \dots\}$ at the start of the $i^{th}$ iteration, e.g. $\mathcal{F}_1$ consists of $n$ singleton fragments. 
For the purpose of analysis, we define a fragment graph $\mathcal{H}_i = (\mathcal{F}_i , M_i)$ as follows. For a particular iteration $i$, its fragment graph $\mathcal{H}_i$ consists of the vertices
$\mathcal{F}_i = \{{F}_1 , \dots , {F}_k \}$, where each ${F}_j (1 \le j \leq k)$ is a fragment at the start of iteration $i$ of
the algorithm. The edge set $M_i$ of $\mathcal{H}_i$ is obtained by contracting the vertices of each
fragment ${F}_j \in \mathcal{F}_i$ to a single vertex in $\mathcal{H}_i$ and removing all resulting self-loops of
$\mathcal{H}_i$, leaving only the MOE edges in set $M_i$. Also, let $M_i'$ be the set of edges chosen by the algorithm over which fragment mergings happen in the iteration $i$.
Notice that the fragment graph $\mathcal{H}_i$ is, in fact, a tree which is not explicitly constructed in the algorithm, rather it is just a construct to explain it.
The pseudocode of the local aggregation algorithm for the case where $D \leq \sqrt{W/c}$ is shown in Algorithm \ref{algo:CGHS2}; it uses a similar techniques as the controlled-GHS algorithm in \cite{Gopalsurvey} (also see \cite{GKP}, \cite{Kutten:2015:CUL:2742144.2699440}, and MST forest construction in \cite{Elkin:2017:SDD:3087801.3087823}).

\onlyLong{
\begin{figure*}
	\centering
	\includegraphics[scale=0.7]{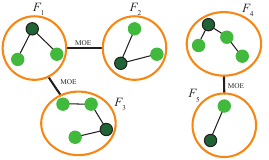}
	\caption[Parallel Fragment Merging]{An example of parallel fragment merging via MOE edges in local aggregation stage.}
	\vspace{-1em}
\label{fig:moe}
\end{figure*}
}

\begin{algorithm} 
\setlength{\columnsep}{10pt}
\begin{algorithmic}[1]
\State $\mathcal{F}_1 = (V, \phi)$ // initial set consisting of $n$ (singleton) fragments.
\For {$i = 0, \dots , \lceil \log \sqrt{W/c} \rceil$}
\State Each fragment $\mathcal{F} \in \mathcal{F}_i$ of weight at most $2^i$ determines the MOE of $\mathcal{F}$ and if the weight of the MOE edge is $\leq 2^i$, the fragment adds the edge to the candidate set $M_i$.
\State Find a maximal matching $M_i' \subseteq M_i$ in the fragment graph $\mathcal{H}_i = (\mathcal{F}_i , M_i )$.
\State If $\mathcal{F} \in \mathcal{F}_i$ of total weight at most $2^i$ has no incident edge in $M_i'$, it adds its
MOE edge into $M_i'$, iff its MOE edge has weight $\leq 2^i$.
\State  $\mathcal{F}_{i+1}$ is obtained by merging all the fragments along the edges selected in $M_i'$.
\EndFor

\end{algorithmic}
\caption{Local Aggregation: Outputs at most $2\sqrt{c{W}}$ base fragments of diameter at most $O(\sqrt{W/c})$.}\label{algo:CGHS2}
\end{algorithm}

\onlyLong{
\begin{lemma} \label{lem:frag_diam}
At the start of the $i^{th}$ iteration, each fragment has a diameter of at most $O(2^i)$. %
\end{lemma}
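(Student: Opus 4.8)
The plan is to argue by induction on the iteration index $i$, showing that at the start of iteration $i$ every fragment $F \in \mathcal{F}_i$ has (weighted) diameter $O(2^i)$. The base case is $i=0$ (or $i=1$), where each fragment is a singleton and has diameter $0$. For the inductive step, assume every fragment in $\mathcal{F}_i$ has diameter at most $\gamma \cdot 2^i$ for some constant $\gamma$ to be fixed. A fragment $F' \in \mathcal{F}_{i+1}$ is formed by merging fragments of $\mathcal{F}_i$ along edges of the matching-augmented set $M_i'$. The crucial structural fact I would invoke is that $M_i'$ is a \emph{maximal matching} in the fragment graph $\mathcal{H}_i$, augmented in line 5 only by adding, for each unmatched fragment, a single incident MOE edge. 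Consequently, in the contracted graph, the set of $\mathcal{F}_i$-fragments that get merged into one $\mathcal{F}_{i+1}$-fragment forms a connected subgraph of $\mathcal{H}_i$ that is a \emph{star-like} structure of depth at most $2$: a matched edge, with at most one extra pendant fragment hanging off each of its two endpoints. Hence at most $4$ old fragments merge into one new fragment, connected along a path of at most $3$ merge-edges.

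Next I would bound the contribution of each piece to the new fragment's diameter. Every merge in iteration $i$ happens along an edge in $M_i' \subseteq M_i$, and by the guard in lines 3 and 5 every such edge has weight (= latency) at most $2^i$. So the new fragment's diameter is at most the sum of: (a) the diameters of the at most $4$ constituent old fragments, each $\le \gamma\cdot 2^i$ by the induction hypothesis, and (b) the weights of the at most $3$ connecting merge-edges, each $\le 2^i$. This gives diameter at most $4\gamma\cdot 2^i + 3\cdot 2^i = (4\gamma+3)2^i$, which is \emph{not} of the form $\gamma\cdot 2^{i+1} = 2\gamma\cdot 2^i$ unless $\gamma$ is chosen carefully — so the naive induction does not close, and this is the main obstacle.

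The hard part, therefore, is getting the constants to close. There are two standard ways around it, and I would pursue whichever the authors' construction supports. The cleaner option: observe that the weight cap in line 3 is on the \emph{total weight} of a fragment, not just its diameter — a fragment is only allowed to participate in iteration $i$ if its weight is at most $2^i$. Since a fragment is a subtree of the MST, its (weighted) diameter is trivially at most its total weight, so any fragment active in iteration $i$ already has diameter $\le 2^i$ at the \emph{start} of iteration $i$; after merging at most $4$ such fragments along $\le 3$ edges of weight $\le 2^i$ each, the new fragment has total weight $\le 4\cdot 2^i + 3\cdot 2^i = 7\cdot 2^i = O(2^{i+1})$, hence diameter $O(2^{i+1})$ as well, and the induction closes with an explicit absolute constant. (One must be slightly careful about fragments that are \emph{blocked} and do not merge in iteration $i$: they carry over unchanged to $\mathcal{F}_{i+1}$, and their diameter was already $O(2^i) = O(2^{i+1})$.) The alternative, if one genuinely needs to track diameter rather than weight, is to strengthen the induction hypothesis to a bound like $c_1 2^i - c_2$ and verify $4(c_1 2^i - c_2) + 3\cdot 2^i \le c_1 2^{i+1} - c_2$, which holds once $c_1 \ge 7$ and $c_2$ is taken large enough relative to the base case; I would state the hypothesis in that sharpened form from the outset so the algebra goes through in one line.
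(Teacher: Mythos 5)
Your plan correctly identifies the induction and the double‑star structure of a merge cluster, but both of the concrete ways you propose to close the induction break down, and the fix you are groping for is a structural observation you have not made.

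The ``cleaner option'' rests on the assertion that ``a fragment is only allowed to participate in iteration $i$ if its weight is at most $2^i$.'' That is not what Lines~3 and~5 enforce. They restrict which fragments \emph{propose} an MOE; they do not restrict which fragments can be the \emph{target} of someone else's MOE. In a matched edge $(F_1,F_2)\in M_i'$, only the proposer is guaranteed to have weight $\le 2^i$; the other endpoint can be a blocked, heavy fragment whose weight has been accumulating for many iterations and is not $O(2^i)$ at all (many light pendants can attach to it at every iteration, so its weight grows far faster than geometrically). Consequently the ``weight $\le 4\cdot 2^i + 3\cdot 2^i$'' bound is simply false, and bounding diameter by weight does not work here. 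Your fallback of a sharpened hypothesis $\mathrm{diam}\le c_1 2^i - c_2$ also does not close: from $4(c_1 2^i-c_2)+3\cdot 2^i \le c_1 2^{i+1}-c_2$ one gets $(2c_1+3)2^i \le 3c_2$, which cannot hold for all $i$ with fixed constants. So neither route survives scrutiny.

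The missing idea is to exploit the asymmetry inside the merge cluster rather than to treat all four constituent fragments the same way. In a cluster, \emph{exactly one} of the fragments on the diameter path can carry the full inductive bound (the non‑proposing endpoint of the matched edge, which may be heavy), whereas the other three fragments the diameter path can touch --- the proposing endpoint of the matched edge and the at‑most‑two pendants the path passes through --- are all constrained by the Line~3/Line~5 guards to have weight (hence diameter, by your subtree‑of‑MST observation) at most $2^i$. With that split, the inductive bound of $6\cdot 2^i$ closes cleanly: the diameter of the new fragment is at most $6\cdot 2^i$ (one heavy fragment) $+\,3\cdot 2^i$ (three light fragments, $\le 2^i$ each) $+\,3\cdot 2^i$ (three connecting edges of weight $\le 2^i$) $= 12\cdot 2^i = 6\cdot 2^{i+1}$. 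This ``one heavy, the rest light'' decomposition is what the paper's proof uses and is precisely what makes the constants work out.
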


\begin{proof}

We show via an induction on the iteration number $i$, that, at the start of iteration $i$, the diameter of each fragment is at most $6 \cdot 2^i$.
The base case, i.e., at the start of iteration $0$, the statement is trivially true, since $6 \cdot 2^i = 6 \cdot 2^0 = 6$
which is greater than $0$, the total weight of a singleton fragment.
For the induction hypothesis, assume that the diameter of each fragment at the start of
iteration $i$ is at most $6 \cdot 2^i$. We show that when the $i^{th}$ iteration ends  (i.e., at the start
of iteration $i + 1$), the diameter of each fragment is at most $6 \cdot 2^{i+1}$.
Fragments grow by merging with other fragments over a matching MOE edge of the fragment graph. We know from the description of the algorithm (see Line $3$ of Algorithm \ref{algo:CGHS2}), 
that at least one of the fragments taking part in the merging has a diameter of at most $2^i$ (since only fragments with weight at most $2^i$ find MOE edges), however, that MOE edge might lead to a fragment with larger diameter (at most $6 \cdot 2^i$). 
Additionally, apart from the fragment joining via the matching edge, some other fragments with weight (also the diameter) at most $2^i$ can possibly join with either of these merging fragments of the matching edge, if they did not have any adjacent matching edge thereby increasing their diameter (see Line $5$ of Algorithm \ref{algo:CGHS2}). 

However, these unmatched fragments each have a weight/diameter of at most $2^i$, and therefore cannot increase the overall diameter by too much. The resulting diameter of the fragment at the end of
phase $i$ is at most $6 \cdot 2^i + 3 \cdot 2^i + 3 \cdot 2^i $,  
where at most one fragment (joined via the matched edge of $\mathcal{H}_i$) has a diameter of $6 \cdot 2^i$ and the other fragments contribute a weight/diameter of at most $2^i$ each. Finally, these are joined by $3$ MOE edges (of weight at most $2^i$, and therefore can possibly contribute at most  $3\cdot 2^i$ to the diameter of the merged fragment). Therefore, the diameter at the end of iteration $i$ is at most $6 \cdot 2^i + 3 \cdot 2^i + 3 \cdot 2^i \leq 6 \cdot 2^{i+1}$, for $i \geq 1$, completing the proof by induction.
\end{proof}

The corollary below follows as the local aggregation algorithm runs for $\lceil \log \sqrt{W/c}\rceil$ iterations.

\begin{corollary} \label{cor:frag_diam}
At the end of the local aggregation algorithm each fragment has diameter at most $O(\sqrt{W/c})$. 
\end{corollary}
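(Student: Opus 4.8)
The plan is to obtain Corollary~\ref{cor:frag_diam} directly from Lemma~\ref{lem:frag_diam} by substituting the number of iterations that the local aggregation loop actually performs. Algorithm~\ref{algo:CGHS2} runs its body for $i = 0, 1, \dots, \lceil \log \sqrt{W/c} \rceil$, so when it halts the fragments it outputs are precisely the ones that, in the bookkeeping of Lemma~\ref{lem:frag_diam}, are present ``at the start of iteration $\lceil \log \sqrt{W/c} \rceil + 1$''. Invoking Lemma~\ref{lem:frag_diam} with that index bounds the diameter of every output fragment by $6 \cdot 2^{\lceil \log \sqrt{W/c} \rceil + 1}$.

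It then only remains to simplify this expression. Since $\lceil \log \sqrt{W/c} \rceil \le \log \sqrt{W/c} + 1$, we have $2^{\lceil \log \sqrt{W/c} \rceil + 1} \le 4 \cdot 2^{\log \sqrt{W/c}} = 4\sqrt{W/c}$, hence each base fragment has diameter at most $24\sqrt{W/c} = O(\sqrt{W/c})$, which also matches the diameter bound advertised in the caption of Algorithm~\ref{algo:CGHS2}.

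I do not expect a real obstacle here; the only point needing a little care is the off-by-one bookkeeping between ``start of iteration $i$'' (the phrasing of Lemma~\ref{lem:frag_diam}) and ``after the last iteration finishes'' (the phrasing of the corollary). This is harmless: the inductive step in the proof of Lemma~\ref{lem:frag_diam} already shows that if the diameter bound $6 \cdot 2^i$ holds at the start of iteration $i$, then the bound $6 \cdot 2^{i+1}$ holds at the end of iteration $i$; so instantiating that step at the final iteration $i = \lceil \log \sqrt{W/c} \rceil$ yields the claimed bound for the fragments the algorithm outputs, without appealing to any further (nonexistent) iteration. Since $\lceil \log \sqrt{W/c} \rceil \ge 1$ whenever $W/c \ge 2$, the inductive step is applied within its valid range ($i \ge 1$), and the degenerate case of tiny $W$ is immediate because then all fragments are already $O(1)$-sized.
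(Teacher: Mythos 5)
Your proposal is correct and follows essentially the same route as the paper: the corollary is obtained by plugging the iteration count $\lceil \log\sqrt{W/c}\rceil$ into the $6\cdot 2^i$ bound of Lemma~\ref{lem:frag_diam} and simplifying. The extra care you take about the ``start of iteration $i$'' versus ``after the last iteration'' off-by-one and the small-$W$ degenerate case is harmless bookkeeping that the paper leaves implicit.
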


\begin{lemma} \label{lem:frag_weight}
At the start of the $i^{th}$ iteration, each non-blocked fragment has a total weight of at least $2^i$.
\end{lemma}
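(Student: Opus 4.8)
\para{Proof plan}
The plan is to prove this by induction on the iteration index~$i$, the induction hypothesis being exactly the statement of the lemma: at the start of iteration~$i$, every non-blocked fragment has weight at least~$2^i$. After disposing of the first iteration(s) directly, the core of the argument is an analysis of iteration~$i-1$ that mirrors the one used for Lemma~\ref{lem:frag_diam}: a fragment that is present and non-blocked at the start of iteration~$i$ was, in the main case, assembled during iteration~$i-1$ out of two fragments that were themselves non-blocked at the start of iteration~$i-1$, so that the induction hypothesis applies to each of them.

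To make this precise I would first pin down the merge structure of iteration~$i-1$. The set $M_{i-1}'$ along which fragments are merged is a maximal matching~$N$ of the fragment graph $\mathcal{H}_{i-1}$, augmented by the MOE of every still-unmatched fragment of weight at most~$2^{i-1}$. A short argument from maximality of~$N$ — were the MOE of an unmatched fragment to point to another unmatched fragment, $N$ could be extended along that $M_{i-1}$-edge — shows that every such pendant MOE points to an $N$-matched fragment; consequently each connected component of $(\mathcal{F}_{i-1},M_{i-1}')$ consists of a single matched edge $\{A,B\}$ together with a (possibly empty) set of pendant fragments hanging off $A$ or~$B$. Now let $F$ be non-blocked at the start of iteration~$i$ and suppose it was formed by such a merge. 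The matched edge $\{A,B\}\subseteq F$ belongs to $M_{i-1}$, so by Line~3 of Algorithm~\ref{algo:CGHS2} it is the MOE of $A$ or of~$B$, say of~$A$, and then $w(A)\le 2^{i-1}$ and $w(\{A,B\})\le 2^{i-1}$. Since $A$ has an incident MOE-edge of weight at most~$2^{i-1}$ it is non-blocked in iteration~$i-1$, and since the same edge is incident to~$B$ so is~$B$. The induction hypothesis then yields $w(A),w(B)\ge 2^{i-1}$, hence
\[
 w(F)\ \ge\ w(A)+w(B)+w(\{A,B\})\ \ge\ 2^{i-1}+2^{i-1}+1\ >\ 2^i,
\]
and the pendant fragments only add to this.

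The step I expect to be the real obstacle is the remaining case: a fragment $F$ that is non-blocked at the start of iteration~$i$ but took part in no merge during iteration~$i-1$ — for example one that was blocked in iteration~$i-1$ because every outgoing edge exceeded the threshold~$2^{i-1}$ yet now has an outgoing edge of weight in $(2^{i-1},2^i]$, or a heavy fragment that no neighbour happened to merge into. To control these I would strengthen the induction hypothesis so that it also tracks, for every fragment, the weight of its lightest outgoing edge, forcing any fragment that first becomes non-blocked in iteration~$i$ to have been assembled at some earlier iteration~$j\le i-1$ out of two pieces that were non-blocked then; closing this strengthened invariant — and separately handling the degenerate singleton fragments, using that only edges of weight $\le 2^j$ are contracted in iteration~$j$ and that the weight by which a fragment can grow per iteration is capped — is the crux of the proof. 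Instantiated with $i=\lceil\log\sqrt{W/c}\rceil$, the lemma states that every non-blocked base fragment has weight at least $\sqrt{W/c}$, so there are $O(\sqrt{cW})$ of them; combined with the complementary bound on blocked base fragments (their MOEs are distinct MST edges, each of weight $>\sqrt{W/c}$) this gives the $O(\sqrt{cW})$ bound on the number of base fragments asserted in the caption of Algorithm~\ref{algo:CGHS2}.
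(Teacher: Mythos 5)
You have correctly identified the critical weak point of this lemma, and you are right to flag the non-merging case as the real obstacle --- but neither your plan nor the paper's own proof actually closes it. Both arguments run the same induction, and both, in the main case, show that a fragment formed during iteration $i-1$ by a merge across a matched MOE edge $\{A,B\}$ of weight $\le 2^{i-1}$ is heavy. Your treatment of that case is cleaner: you explicitly observe that the matched edge is adjacent to both $A$ and $B$ and hence certifies both of them non-blocked at iteration $i-1$, so the induction hypothesis applies to each. The paper, by contrast, asserts that any non-blocked fragment of weight $<2^{i+1}$ ``merges with at least one more fragment'' --- which is a statement about what will happen \emph{during} iteration $i+1$, not about how the fragment came to exist at the start of iteration $i+1$ --- and then applies the induction hypothesis to ``this other fragment'' without verifying it is non-blocked. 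So the paper's proof does not close the case you flag any better than your sketch does.

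The gap is genuine, and in fact the lemma as literally stated appears to fail. Consider a node $v$ all of whose incident edges have weight in $(2^{i-1},2^i]$. At every iteration $j<i$ it is blocked: its own MOE exceeds $2^j$, no neighbouring fragment can absorb it along a pendant MOE (the connecting edge is too heavy to pass the Line~5 test), and it cannot be matched since none of its incident edges lie in $M_j$. At the start of iteration $i$ it becomes non-blocked while still a singleton of weight $O(1)\ll 2^i$. Your proposed strengthening --- that a freshly unblocked fragment was assembled at some earlier iteration from two non-blocked pieces --- cannot repair this, because $v$ was never assembled from anything. What \emph{is} true, and what the downstream use in Lemma~\ref{lem:frag_number} actually requires, is a dichotomy: a non-blocked fragment at the start of iteration $i$ either participated in a merge across a matched edge (and is then heavy by your argument), or else has every outgoing MOE edge of weight $>2^{i-1}$, in which case it can be charged to a distinct MST edge of weight $>2^{i-1}$ exactly as the paper already does for blocked fragments. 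This yields the intended $O(W/2^i)$ bound on the number of remaining fragments with only a constant-factor loss, and is the form of the invariant worth proving; the per-fragment weight lower bound of $2^i$ for \emph{every} non-blocked fragment is too strong.
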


\begin{proof}
We prove the above lemma via an induction on the iteration number $i$. For the base case, i.e. at the start of the iteration $0$, there exist only singleton fragments which are of weight at least $2^0$. For the induction hypothesis we assume that the statement is true for iteration $i$, i.e. at the start of iteration $i$, the total weight of each fragment is at least $2^{i}$ and show that the statement also holds for iteration $i+1$, i.e. at the start of iteration $i+1$, the total weight of each fragment is at least $2^{i+1}$. To show this, consider all the non-blocked fragments in iteration $i+1$, each fragment would either have weight $\geq 2^{i+1}$ or less than that. For fragments with weight $\geq 2^{i+1}$, the lemma is vacuously true. For the second case, where the fragment weight is $< 2^{i+1}$, we know from the algorithm (see line $3$ and $5$ of \ref{algo:CGHS2}), that all such fragments merge with at least one more fragment. This other fragment has a weight at least $2^i$ (from the induction hypothesis), and therefore the total weight of the resulting fragment at least doubles  i.e., becomes at least $2^{i+1}$, thus, proving the lemma.
\end{proof}

\begin{lemma} \label{lem:frag_number}
The number of fragments remaining at the start of the $i^{th}$ iteration is
at most $2(W/2^i)$. %
\end{lemma}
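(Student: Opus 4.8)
The plan is to classify the fragments present at the start of iteration $i$ as \emph{heavy} (total weight at least $2^i$) or \emph{light} (total weight strictly less than $2^i$), and to bound the two classes by charging them against two disjoint portions of the MST $T$. Since the fragments are vertex-disjoint subtrees of $T$, every edge of $T$ is either \emph{internal} to one fragment or \emph{crossing} between two fragments; let $W_{\mathrm{in}}$ and $W_{\mathrm{cr}}$ be the total weight of the internal and of the crossing edges, so that $W_{\mathrm{in}}+W_{\mathrm{cr}}=W$. I would charge heavy fragments against $W_{\mathrm{in}}$ and light fragments against $W_{\mathrm{cr}}$, in such a way that the two counts sum to at most $2W/2^i$.

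For heavy fragments this is direct: each such fragment is a connected subtree of $T$ of weight at least $2^i$ whose edges are internal, distinct heavy fragments use disjoint internal edges, and hence there are at most $W_{\mathrm{in}}/2^i$ of them. For light fragments I would use Lemma~\ref{lem:frag_weight} in contrapositive form: a fragment of weight $<2^i$ at the start of iteration $i$ must be \emph{blocked}, so by definition its own MOE has weight greater than $2^i$. By the cut property the MOE of any fragment lies in the (unique) MST $T$, and being an \emph{outgoing} edge of the fragment it is a crossing edge. Moreover, the map sending a light fragment to its MOE is at most $2$-to-$1$: if an edge $(u,v)$ is the MOE of two distinct light fragments then, the fragments being disjoint, $u$ lies in one and $v$ in the other, so $(u,v)$ is outgoing from no third fragment. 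Thus $s$ light fragments determine at least $s/2$ distinct crossing edges, each of weight more than $2^i$ and of combined weight at most $W_{\mathrm{cr}}$, giving $s \le 2W_{\mathrm{cr}}/2^i$. Summing, the total number of fragments is at most $W_{\mathrm{in}}/2^i + 2W_{\mathrm{cr}}/2^i \le 2(W_{\mathrm{in}}+W_{\mathrm{cr}})/2^i = 2W/2^i$. (If a single fragment remains it has no MOE, but then the bound $1\le 2W/2^i$ holds trivially over the range of $i$ the algorithm uses.)

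I expect the main obstacle to be the light-fragment case, and within it the two small points that need care: first, concluding that a light fragment is genuinely blocked — which is exactly the content of Lemma~\ref{lem:frag_weight}, and the reason that lemma is proved first — and second, justifying the $2$-to-$1$ charging, i.e. that two distinct fragments can share an MOE only by occupying its two endpoints. Both are one-line arguments, so once the heavy/light split and the internal/crossing partition of $T$ are in place, the lemma follows with essentially no computation.
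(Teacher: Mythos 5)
Your proof is correct, and it takes a genuinely different (and somewhat cleaner) route from the paper's. The paper classifies fragments as \emph{blocked} vs.\ \emph{non-blocked}: it uses Lemma~\ref{lem:frag_weight} plus disjointness to bound the non-blocked fragments by $W/2^i$, and separately bounds the blocked fragments by counting MOE edges of weight $\ge 2^i$ in the MST, again by $W/2^i$, summing to $2W/2^i$; note that the entire weight $W$ is charged twice, once per class. You instead split fragments into heavy and light by weight, and --- crucially --- split the MST weight $W$ into disjoint budgets $W_{\mathrm{in}}$ (internal edges) and $W_{\mathrm{cr}}$ (crossing edges), charging heavy fragments against $W_{\mathrm{in}}$ and light fragments (which by the contrapositive of Lemma~\ref{lem:frag_weight} are exactly the blocked ones of small weight) against $W_{\mathrm{cr}}$. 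Your explicit $2$-to-$1$ observation --- a crossing edge can be the MOE of at most its two endpoint fragments --- is also more careful than the paper's phrasing: the paper asserts that the number of blocked fragments is bounded by the number of heavy MOE edges without addressing that two fragments can share an MOE, whereas your charging makes that factor of $2$ explicit, landing at $W_{\mathrm{in}}/2^i + 2W_{\mathrm{cr}}/2^i \le 2W/2^i$. Both arguments reach the same stated bound, but your disjoint-budget version is tighter in structure and does not rely on the implicit assumption that distinct blocked fragments have distinct MOEs.
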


\begin{proof}
Each remaining fragment at the beginning of iteration $i$ is either a non-blocked fragment, or a blocked fragment.
We know from Lemma \ref{lem:frag_weight}, at the start of iteration $i$, each non-blocked fragment has a total weight at least $2^i$. Since fragments are disjoint and the total weight of all the fragments is $W$ (weight of the MST), this implies that the number of non-blocked fragments at the start of iteration $i$, is at most $W/2^i$.
Additionally, each blocked fragment would have all its adjacent MOE edges of weight $\geq 2^i$ (otherwise, it would not be blocked).  The maximum possible number of MOE edges of weight $\geq 2^i$ that can exist, is at most $W/2^i$, since each of the MOE edges would be a part of the MST and the total weight of the MST is $W$. This implies that the number of blocked fragments at the start of iteration $i$, is also at most $W/2^i$.
Therefore, the total number of fragments remaining at the start of iteration $i$, is the sum of the non-blocked and the blocked fragments, which is $2(W/2^i)$.
\end{proof}

\begin{corollary} \label{cor:frag_number}
At the end of the local aggregation algorithm, the number of fragments remaining is at most $2\sqrt{c{W}}$, where $c$ refers to the edge capacity and $W$ refers to the total weight of the MST.
\end{corollary}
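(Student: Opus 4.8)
The plan is to obtain Corollary~\ref{cor:frag_number} as an immediate consequence of Lemma~\ref{lem:frag_number}, combined with the single structural fact that the loop in Algorithm~\ref{algo:CGHS2} runs only up to iteration $i^\ast = \lceil \log\sqrt{W/c}\,\rceil$. Concretely, I would apply Lemma~\ref{lem:frag_number} with $i = i^\ast$: at the start of this last iteration the number of remaining fragments is at most $2\,(W/2^{i^\ast})$. Since $2^{i^\ast} = 2^{\lceil \log \sqrt{W/c}\rceil} \ge \sqrt{W/c}$, this quantity is at most $2W/\sqrt{W/c} = 2\sqrt{cW}$, which is exactly the claimed bound.

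It then only remains to check that iteration $i^\ast$ itself — the one step executed after that point — cannot increase the fragment count. This is clear from Line~6 of Algorithm~\ref{algo:CGHS2}: $\mathcal{F}_{i^\ast+1}$ is obtained by merging fragments along the edges of $M_{i^\ast}'$, and a merging operation is (weakly) count-decreasing, so $|\mathcal{F}_{i^\ast+1}| \le |\mathcal{F}_{i^\ast}| \le 2\sqrt{cW}$. Since $\mathcal{F}_{i^\ast+1}$ is precisely the set of base fragments output at the end of local aggregation, the corollary follows. Together with Corollary~\ref{cor:frag_diam}, this justifies the caption of Algorithm~\ref{algo:CGHS2}.

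I do not anticipate any genuine obstacle here; the statement is a routine substitution. The only two things to be careful about are (i) the ceiling in the exponent, which only helps since $\lceil x\rceil \ge x$, and (ii) the bookkeeping convention identifying ``the number of fragments at the end of iteration $i$'' with ``the number at the start of iteration $i+1$'' — under that convention one could instead invoke Lemma~\ref{lem:frag_number} at index $i^\ast+1$ and get the slightly stronger bound $\sqrt{cW}$; stating the corollary as $2\sqrt{cW}$ is a safe, mildly loose choice that is all the global aggregation stage needs.
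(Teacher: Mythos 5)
Your proposal is correct and matches the paper's (implicit) argument: the corollary is obtained simply by instantiating Lemma~\ref{lem:frag_number} at the final iteration index $i^\ast = \lceil \log\sqrt{W/c}\rceil$ of Algorithm~\ref{algo:CGHS2} and simplifying $2W/2^{i^\ast} \le 2\sqrt{cW}$. Your extra care about whether ``end of the algorithm'' means the state at the start of iteration $i^\ast$ or after it is sound but not strictly necessary, since, as you note, the merge step is weakly count-decreasing and the lemma's induction in fact remains valid at index $i^\ast+1$.
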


}

We give the following lemma that determines the number of base fragments and the time required for creating them. \onlyShort{Due to the lack of space, the proof of the following lemma as well as some of the other missing proofs have been deferred to the full version of the paper \cite{augustine2019latency}.} %

\begin{lemma} \label{lem:CGHS_result}
Local Aggregation algorithm outputs at most $2\sqrt{c{W}}$ MST fragments each of diameter at most $O(\sqrt{W/c})$ in $O(\sqrt{{W/c}} \log^* n)$ rounds and requiring $O(m+ n \log^* n \log (W/c))$ messages.
\end{lemma}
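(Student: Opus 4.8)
The plan is to obtain Lemma~\ref{lem:CGHS_result} by combining the structural facts already established with a per-iteration accounting of the cost of Algorithm~\ref{algo:CGHS2}. The two ``output'' guarantees are immediate: the bound of $2\sqrt{cW}$ on the number of base fragments is exactly Corollary~\ref{cor:frag_number}, and the $O(\sqrt{W/c})$ bound on fragment diameter is Corollary~\ref{cor:frag_diam} (both resting on Lemmas~\ref{lem:frag_diam},~\ref{lem:frag_weight}, and~\ref{lem:frag_number}; the point being that after $\lceil\log\sqrt{W/c}\rceil$ iterations a non-blocked fragment has weight at least $\sqrt{W/c}$, while the number of blocked fragments is at most the number of MST edges of weight at least $\sqrt{W/c}$, each being $\le\sqrt{cW}$). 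It therefore remains to bound the number of rounds and the number of messages.

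For the round complexity I would show that iteration $i$ of Algorithm~\ref{algo:CGHS2} runs in $O(2^i\log^* n)$ rounds, and then sum a geometric series. Iteration $i$ has four sub-steps. (i) Each fragment of weight at most $2^i$ finds its MOE among incident edges of weight at most $2^i$; every such edge has latency at most $2^i$ and, by Lemma~\ref{lem:frag_diam}, the fragment has latency-diameter $O(2^i)$, so a convergecast of the lightest candidate up the fragment tree and a broadcast of the decision back down cost $O(2^i)$ rounds, and the edge tests used to learn whether an endpoint lies in the same fragment traverse only edges of latency $\le 2^i$, hence also cost $O(2^i)$. (ii) Computing a maximal matching $M_i'$ in the fragment graph $\mathcal{H}_i$ (a subforest of $T$) is done with the standard controlled-GHS symmetry-breaking procedure of~\cite{Gopalsurvey,GKP,Kutten:2015:CUL:2742144.2699440} in $O(\log^* n)$ rounds of communication on $\mathcal{H}_i$, and each such round is simulated in $O(2^i)$ real rounds since it only requires each fragment to communicate within itself (diameter $O(2^i)$) and once across its MOE edge (latency $\le 2^i$). (iii) Unmatched fragments adding their MOE: another $O(2^i)$ rounds. (iv) Merging and re-labelling the new fragment, which by Lemma~\ref{lem:frag_diam} still has diameter $O(2^{i+1})$: $O(2^i)$ rounds. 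Capacity contributes no extra factor: on any edge only $O(\log^* n)$ messages are sent during iteration $i$, and consecutive messages on an edge are separated by $\Omega(2^i)\ge\Omega(1/c)$ rounds, so the constraint $1/c\le\ell_{min}$ is never binding. Summing $O(2^i\log^* n)$ over $i=0,\dots,\lceil\log\sqrt{W/c}\rceil$ gives $O(\sqrt{W/c}\,\log^* n)$ rounds.

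For the message complexity I would split the messages into three groups. (a) Edge tests: have each node test its incident edges in order of increasing weight and permanently discard any edge found to be internal; then the number of ``first-time'' tests is $O(\deg(v))$ per node, i.e.\ $O(m)$ in total, plus each fragment may re-test its current lightest outgoing edge once per iteration (in case the far endpoint was absorbed), contributing $O(n)$ per iteration and $O(n\log(W/c))$ overall. (b) Fragment-internal convergecasts and broadcasts, for MOE selection, for unmatched-fragment additions, and for re-labelling after a merge: $O(n)$ messages per iteration, so $O(n\log(W/c))$ in total. (c) The maximal-matching computation: $O(\log^* n)$ communication rounds per iteration, each costing $O(n)$ messages (one broadcast/convergecast per fragment plus one message per MOE edge), i.e.\ $O(n\log^* n)$ per iteration and $O(n\log^* n\log(W/c))$ over all $\lceil\log\sqrt{W/c}\rceil$ iterations. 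Adding these and using $n\log(W/c)\le n\log^* n\log(W/c)$ yields $O(m + n\log^* n\log(W/c))$ messages.

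The delicate part is the message count, specifically getting edge testing down to $O(m)$ rather than $O(m\log(W/c))$: this needs the sorted-testing-with-permanent-discard argument together with a careful treatment of re-tests of the current candidate edge and of edges that become internal between iterations, exactly as in the controlled-GHS analyses of~\cite{GKP,Kutten:2015:CUL:2742144.2699440,Elkin:2017:SDD:3087801.3087823}; one must also verify that simulating one communication round of $\mathcal{H}_i$ costs only $O(2^i)$ real rounds, which crucially uses that an MOE edge selected in iteration $i$ has latency at most $2^i$.
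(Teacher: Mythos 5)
Your proposal takes essentially the same approach as the paper's own proof: the output bounds are drawn from Corollaries~\ref{cor:frag_diam} and \ref{cor:frag_number}, the per-iteration round cost is bounded by $O(2^i\log^* n)$ (convergecast plus simulated symmetry-breaking on the fragment graph) and summed geometrically, and the message count is split into the sorted edge-scan for MOE discovery ($O(m+n\log(W/c))$) plus the fragment-tree convergecasts and matching simulation ($O(n\log^* n\log(W/c))$). Your added remarks on capacity never being the bottleneck and on amortizing edge tests via permanent discards are correct elaborations of points the paper leaves implicit, but the decomposition and key estimates are the same.
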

\onlyLong{
\begin{proof}
Each iteration of the local aggregation algorithm performs three major functions, namely finding the MOE, convergecast within the fragment and merging with adjacent fragment over the matched MOE edge. 
For finding the MOE, in each iteration, every node checks each of its neighbor (maximum in $O(D)$ time) in non-decreasing order of weight of the connecting edge starting from the last checked edge (from previous iteration). Thus, each node contacts each of its neighbors at most once, except for the last checked node (which takes one message per iteration). Hence total message complexity (over $\log \sqrt{W/c}$ iterations) is 
\[ \sum_{v\in V}2d(v) + \sum_{i=1}^{\log \sqrt{W/c}} \sum_{v\in V}1 = m+n\left(\frac{1}{2} \log (W/c)\right)\] \[= O(m+ n \log (W/c)), \] where $d(v)$ refers to the degree of a node. 
The fragment leader determines the MOE for the $i^{th}$ iteration, by convergecasting over the fragment, which requires at most $O(2^i)$ rounds since the diameter of any fragment is bounded by $O(2^i)$ (by Lemma \ref{lem:frag_diam}). 
The fragment graph, being a rooted tree, uses a $O(\log^* n)$ round deterministic symmetry-breaking algorithm \cite{matching, Pandurangan:2017:TMD:3055399.3055449} to obtain the required matching edges in the case without latencies.
Taking into account the required scale-up in case of the presence of latencies, the symmetry breaking algorithm
is simulated by the leaders of neighboring fragments by communicating with
each other; since the diameter of each fragment is bounded by $O(2^i)$ and the maximum weight of the MOE edges is also $2^i$, the time
needed to simulate one round of the symmetry breaking algorithm in iteration $i$
is $O(2^i)$ rounds. Also, as only the MST edges (MOE edges) are used in communication, the total
number of messages needed is $O(n)$ per round of simulation. Since there are $\log \sqrt{W/c} =
O(\log (W/c))$ iterations, the total time and message complexity for building the
maximal matching is $O(\sum^{\lceil \log \sqrt{W/c}\rceil}_{i=0} 2^i \log^* n) = O(\sqrt{W/c} \log^* n)$ and $O(n \log^* n)$
respectively. Afterwards, adding selected edges into $M_i'$ (Line $5$ of the local aggregation algorithm) can be done with additional $O(n)$ message complexity and $O(2^i)$ time complexity in iteration $i$. 
Thus, the overall message complexity of the algorithm is
$O(m+ n \log^* n \log (W/c))$ and the overall time complexity is $O(\sqrt{{W/c}} \log^* n)$.
\end{proof}
}

Since there are at most $2\sqrt{{cW}}$ base fragments remaining (from Lemma \ref{lem:CGHS_result}), at most $2\sqrt{c{W}}-1$ MST edges need to be discovered. However, it is at this point where the fashion in which the fragment mergings occur changes. %
From this point on, the base fragments are progressively merged using the shortest path tree $\tau$ \onlyLong{(that was created at the beginning of the algorithm)} in $\log \sqrt{cW} \leq \log n$ iterations.\footnote{If $\sqrt{cW}\geq n$, it implies that only $n$ singleton fragments remain. The given algorithm boils down to the algorithm in Section \ref{subsubsec:ub-unequal}.} 

For easier explanation, we refer to the fragments created in the global aggregation stage as \textit{components}. Initially, a component just constitutes of a single base fragment.

\para{Global Aggregation Stage} (Merging Components using a Shortest Path Tree) %
Each node determines its MOE (w.r.t. its current component), which can be done in parallel, requiring $O(D)$ time and $O(m)$ messages.\footnote{Note that the value of $D$ (in fact a 2-approximation value of $D$) is known through the shortest path tree construction and a node need not wait for more than $2D$ time to determine its MOE as no edge with weight (latency) $>D$ would be present in the MST.} This MOE information is upcast along the fragment edges to the base fragment leader (rather than the component leader), while filtering all but the lightest edge, requiring $O(\sqrt{W/c})$ rounds (base fragment diameter) and $O(n)$ messages (as only lightest MOE is upcast). \onlyLong{(See Figure \ref{fig:Global}).}
Each base fragment leader upcasts the lightest known outgoing edge of the component that it belongs to up  the shortest path tree where intermediate nodes  wait until they receive at least one message from each of its children and then upcast the lightest edge of each component that they have received or belong to (starting from the component with the lowest id) to its parent. After receiving the first message from each child node, subsequent messages arrive in a pipelined order with intervals of $1/c$ (as edges have capacity $c$). As a total of at most $O(\sqrt{cW})$ messages (i.e., one message from each base fragment) is upcast on $\tau$, the maximum possible time required is  $O(D+\frac{1}{c} \sqrt{cW})$. Correspondingly, the number of messages required is $O(D_{SPT}\cdot \sqrt{cW})$, where $D_{SPT}$ is the hop diameter of the shortest path tree $\tau$. As in this case, we assume the latency diameter $D$ to be $\leq \sqrt{W/c}$. The message complexity becomes at most $O(W)$. The root of the shortest path tree locally computes the component mergings (by locally simulating the local aggregation algorithm) and thereafter informs all the fragment leaders of their updated component ids, which further downcasts this to all the nodes; completing an iteration. The guarantee, like earlier, is that the number of components halves in every iteration, requiring a total of at most $\log \sqrt{cW} \leq \log n$ iterations.

\begin{lemma} \label{lem:GA}
Given that there are at most $2\sqrt{c{W}}$ MST fragments after the local aggregation, the global aggregation algorithm outputs the MST requiring $O(D+\sqrt{W/c})$ time and $O(W)$ messages.
\end{lemma}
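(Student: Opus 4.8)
The plan is to derive correctness from the standard Borůvka invariant and then account for the cost of one iteration of the global-aggregation loop, summing over the $\lceil\log(2\sqrt{cW})\rceil\le\log n$ iterations while exploiting that the number of surviving components halves each time.

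For correctness I would maintain the invariant that at the start of each iteration every component is an MST fragment and every node knows its component id. Within an iteration each node first learns the current component ids of the endpoints of its incident edges of latency at most $D$ — no heavier edge can lie in $T$, so these are the only relevant ones — which costs $O(D)$ time and $O(m)$ messages and lets the node pick its lightest edge leaving its component; the minimum of these over the nodes of a component is exactly that component's minimum-weight outgoing edge. This minimum is assembled by a single min-convergecast inside each base fragment (to its leader), followed by a pipelined, sorted-by-component-id min-convergecast of the at most $k_j$ per-component values up $\tau$ to the root, where $k_j$ is the number of components in iteration $j$. The root then runs the controlled-GHS merge step locally — a maximal matching in the component graph plus attaching every unmatched component along its MOE — and broadcasts down $\tau$, and then inside the base fragments, each base fragment's new component id together with the selected MST edges. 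Standard controlled-GHS/Borůvka arguments then give that every selected edge lies in $T$, that components stay connected, and that the component count at least halves, so after at most $\log n$ iterations a single component equal to $T$ remains and each node knows its incident MST edges.

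For the bounds I would fix iteration $j$ with $k_j\le 2\sqrt{cW}/2^j$ and bound each substep: MOE-finding costs $O(D)$ time and $O(m)$ messages; each of the two in-fragment convergecast/broadcast passes costs $O(\sqrt{W/c})$ time (the base-fragment diameter, Lemma~\ref{lem:CGHS_result}) and $O(n)$ messages; and each pass along $\tau$ costs $O(D+k_j/c)$ time and $O(D_{SPT}\cdot\sqrt{cW})$ messages — at most $2\sqrt{cW}$ values enter $\tau$, every $\tau$-edge carries at most $k_j$ of them since each node filters to one value per component, and in a pipelined sorted convergecast the first value reaches the root within latency $O(D)$ and the rest follow as a wave spaced by $1/c$ rounds; using that each edge has latency at least one (so $D_{SPT}\le D\le\sqrt{W/c}$) the message term is $O(D_{SPT}\sqrt{cW})=O(W)$. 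Summing over $j$, the $k_j/c$ terms telescope to $O(\sqrt{cW}/c)=O(\sqrt{W/c})$, the remaining $O(D+\sqrt{W/c})$ per-iteration cost together with $D\le\sqrt{W/c}$ yields $O(\sqrt{W/c})$ up to the $O(\log n)$ iteration count that the overall $\tilde{O}(D+\sqrt{W/c})$ bound absorbs, the per-iteration $O(W)$ of $\tau$-communication stays at $O(W)$ in total via a hop-level accounting that uses the decay of $k_j$, and the $O(m)$-per-iteration MOE searches fold into the overall $\tilde{O}(m+W)$ budget.

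I expect the crux to be the pipelined-convergecast analysis on $\tau$: one must show that, even though $\tau$'s edges carry arbitrary (but $\ge 1$) latencies and the model is only weakly synchronous, a node emitting per-component minima in increasing-id order never stalls — which is why it first waits for one message from each child, so as to learn who will contribute — and that the $k_j$ values reach the root in $O(D+k_j/c)$ rounds rather than incurring per-hop delays that would compound to $O(D_{SPT}k_j/c)$; the injection-rate bound $1/c$ together with the per-node filtering to one value per component is exactly what makes the wave drain in $O(D+k_j/c)$. A minor supporting point is that the $O(D)$-round step in which each node learns its neighbors' current component ids is enough, which rests on no MST edge having latency exceeding $D$.
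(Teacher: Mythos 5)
Your proposal matches the paper's (mostly implicit) argument step for step: per iteration, each node learns neighbors' component ids and picks its candidate outgoing edge ($O(D)$ rounds, $O(m)$ messages), a min-convergecast inside each base fragment filters to one value per base fragment ($O(\sqrt{W/c})$ rounds, $O(n)$ messages), and then the base-fragment leaders' per-component minima are pipelined up $\tau$ in component-id order — a node waits for one message from each child and then emits at rate $c$, so the wave drains in $O(D+k_j/c)$ rounds and $O(D_{SPT}\cdot\sqrt{cW})=O(W)$ messages since $D_{SPT}\le D\le\sqrt{W/c}$ in this regime — after which the root simulates the controlled-GHS merge and broadcasts down, components halve, and the whole loop runs for $\log\sqrt{cW}\le\log n$ iterations. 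This is exactly what the paper describes, and you are in fact more explicit than the paper in one respect: you flag that the paper's per-iteration costs accumulate an $O(\log n)$ factor that only the final theorem's $\tilde{O}$ notation suppresses.

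The one place I would push back is your claim that a hop-level accounting using the decay of $k_j$ collapses the total $\tau$-traffic to $O(W)$ rather than $O(W\log n)$. Since \emph{every} base-fragment leader (not one per component) injects a value and merging happens only inside $\tau$, iteration $j$ can cost up to $\sum_{L} d_L = O(\sqrt{cW}\,D_{SPT})$ messages on $\tau$ independently of $k_j$ — the per-edge cap of $k_j$ does not reduce the sum $\sum_e \min(k_j, L_e)$ below $\sum_e L_e$ when most traversed edges have only $O(1)$ leaders below them — so summing over iterations honestly gives $O(W\log n)$; likewise the $O(m)$-per-iteration MOE probing gives $O(m\log n)$. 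The paper's Lemma~\ref{lem:GA} as literally stated suppresses exactly the same $\log n$ factor, so this is a shared looseness rather than a flaw specific to your write-up, and both are absorbed at the theorem level.
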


\onlyLong{
\begin{figure*}
	\centering
	\includegraphics[scale=0.7]{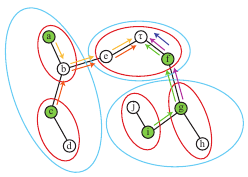}
	\caption[Merging Fragments Using Global Aggregation]{Merging Fragments using Global Aggregation. The figure shows a shortest path tree rooted at $\tau$. The red boundary denotes the nodes within a base fragment and the blue boundary denotes the MST-components that the base fragment belong to. The green nodes are the base fragment leaders that handle all communication with the root (denoted by the colored arrows).}
	\label{fig:Global}
\end{figure*}

}

Next, we consider the case where the latency diameter $D>\sqrt{W/c}$. Here also, we follow a similar procedure of creating a shortest path tree $\tau$ first and thereafter moving on to the local and the global aggregation stages. However in this case we want larger base fragments with a total weight of $O(D)$ (unlike $O(\sqrt{W/c})$ for the previous case) and therefore we run the local aggregation algorithm for $\lceil \log D \rceil$ iterations (instead of $\lceil \log \sqrt{W/c} \rceil$ like the previous case).

\onlyLong{

\begin{algorithm} 
\setlength{\columnsep}{10pt}
\begin{algorithmic}[1]
\State $\mathcal{F}_1 = (V, \phi)$ // initial set consisting of $n$ (singleton) fragments.
\For {$i = 0, \dots , \lceil \log D \rceil$}
\State Each fragment $\mathcal{F} \in \mathcal{F}_i$ of weight at most $2^i$ determines the MOE of $\mathcal{F}$ and if the weight of the MOE edge is $\leq 2^i$, the fragment adds the edge to the candidate set $M_i$.
\State Find a maximal matching $M_i' \subseteq M_i$ in the fragment graph $\mathcal{H}_i = (\mathcal{F}_i , M_i )$.
\State If $\mathcal{F} \in \mathcal{F}_i$ of total weight at most $2^i$ has no incident edge in $M_i'$, it adds its
MOE edge into $M_i'$, iff its MOE edge has weight $\leq 2^i$.
\State  $\mathcal{F}_{i+1}$ is obtained by merging all the fragments along the edges selected in $M_i'$.
\EndFor

\end{algorithmic}
\caption{Local Aggregation (when $D>\sqrt{W/c}$): Outputs at most $2(W/D)$ base fragments of diameter at most $O(D)$.}\label{algo:CGHS2.1}
\end{algorithm}

We note that Lemmas \ref{lem:frag_diam}, \ref{lem:frag_weight}, \ref{lem:frag_number} still hold for this case, and as such the following corollaries follow directly.

\begin{corollary} \label{corr:frag_diam}
At the end of the local aggregation algorithm, each fragment has diameter at most $O(D)$. 
\end{corollary}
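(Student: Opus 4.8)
The plan is to derive this corollary directly from Lemma~\ref{lem:frag_diam}, whose statement and proof carry over verbatim to Algorithm~\ref{algo:CGHS2.1}. First I would observe that Algorithm~\ref{algo:CGHS2.1} differs from Algorithm~\ref{algo:CGHS2} only in the range of the outer loop: the per-iteration rule --- each fragment of weight at most $2^i$ proposes its MOE to $M_i$ provided that MOE has weight at most $2^i$; a maximal matching $M_i'\subseteq M_i$ is selected in the fragment graph $\mathcal{H}_i$; every unmatched fragment of weight at most $2^i$ whose MOE has weight at most $2^i$ adds its MOE to $M_i'$; fragments merge along $M_i'$ --- is identical in both algorithms. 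Since the inductive step in the proof of Lemma~\ref{lem:frag_diam} bounds the diameter of a fragment at the start of iteration $i+1$ purely in terms of the iteration index $i$ and the diameter bounds at the start of iteration $i$ (the accounting $6\cdot 2^i + 3\cdot 2^i + 3\cdot 2^i \le 6\cdot 2^{i+1}$, valid for $i\ge 1$, together with the base case of singleton fragments of diameter $0\le 6$), and never references how many iterations the loop performs in total, the same induction is valid for Algorithm~\ref{algo:CGHS2.1}: at the start of iteration $i$ every fragment has diameter at most $6\cdot 2^i$.

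Next I would instantiate this bound at the end of the loop. Algorithm~\ref{algo:CGHS2.1} runs iterations $i=0,\dots,\lceil\log D\rceil$, so the fragments produced by local aggregation are exactly those present at the start of iteration $\lceil\log D\rceil+1$. Applying the carried-over Lemma~\ref{lem:frag_diam} at that index gives a diameter bound of $6\cdot 2^{\lceil\log D\rceil+1} \le 6\cdot 2^{\log D + 2} = 24\,D = O(D)$, which is exactly the claimed bound. (Equivalently, one may argue as in the derivation of Corollary~\ref{cor:frag_diam}: the algorithm runs for $\lceil\log D\rceil$ iterations, hence the final fragments have diameter $O(2^{\lceil\log D\rceil}) = O(D)$.)

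The only point requiring care is the first claim --- that the proof of Lemma~\ref{lem:frag_diam} is genuinely \emph{index-local} and does not silently use $\lceil\log\sqrt{W/c}\rceil$ anywhere. I would devote one sentence to confirming that both the base case and the inductive step depend solely on the local merging rule (which edges may be matched, and that at least one participant of any merge has weight, hence diameter, at most $2^i$) and on the weight cap $2^i$ in iteration $i$, all of which are unchanged in Algorithm~\ref{algo:CGHS2.1}. Given that, the corollary follows with no further work; I do not anticipate a real obstacle here, since the statement is obtained from the same argument that yields Corollary~\ref{cor:frag_diam}, with $\lceil\log D\rceil$ simply substituted for $\lceil\log\sqrt{W/c}\rceil$ as the number of iterations.
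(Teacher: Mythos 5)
Your proposal matches the paper's argument: the paper notes that Lemma~\ref{lem:frag_diam} (and its proof) carries over unchanged to Algorithm~\ref{algo:CGHS2.1}, and the corollary then follows by instantiating the $O(2^i)$ diameter bound at the final iteration $\lceil\log D\rceil$, giving $O(D)$. Your check that the induction is index-local and never uses the total number of iterations is exactly the (implicit) justification the paper relies on, so this is correct and essentially the same proof.
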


\begin{corollary} \label{corr:frag_number}
At the end of the local aggregation algorithm, the number of fragments remaining is at most $2(W/D)$, where $W$ refers to the total weight of the MST and $D$ is the latency diameter.
\end{corollary}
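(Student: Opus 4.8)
\textbf{Proof proposal for Corollary~\ref{corr:frag_number}.}
The plan is to obtain this as an immediate consequence of Lemma~\ref{lem:frag_number}, exactly as the analogous Corollary~\ref{cor:frag_number} was obtained in the $D \le \sqrt{W/c}$ case. The first step is to argue that Lemma~\ref{lem:frag_number} applies unchanged to Algorithm~\ref{algo:CGHS2.1}: its proof goes through Lemma~\ref{lem:frag_weight} and the counting of MOE edges of weight $\ge 2^i$ in the MST, and neither argument refers to how many times the outer loop runs --- both depend only on Lines~3 and~5 of the local aggregation procedure, which are identical in Algorithm~\ref{algo:CGHS2} and Algorithm~\ref{algo:CGHS2.1} (this is the point already flagged by the remark preceding the corollary). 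Hence ``at the start of iteration $i$ there are at most $2(W/2^i)$ surviving fragments'' holds verbatim here.

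The second step uses monotonicity of the fragment count: every iteration only merges fragments along the chosen edge set $M_i'$ and never splits one, so the number of surviving fragments is non-increasing over iterations. Therefore the state at the \emph{end} of local aggregation --- i.e.\ right after the final loop iteration $i = \lceil \log D\rceil$ --- has no more fragments than the \emph{start} of that iteration, which by the (now-justified) Lemma~\ref{lem:frag_number} is at most $2\bigl(W/2^{\lceil \log D\rceil}\bigr) \le 2(W/D)$, using $2^{\lceil \log D\rceil} \ge D$. This is the claimed bound, and Corollary~\ref{corr:frag_diam} follows the same way from Lemma~\ref{lem:frag_diam}, whose $6 \cdot 2^i$ bound at the start of iteration $i$ becomes $O(D)$ after $\lceil \log D\rceil$ iterations.

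There is no real obstacle here; the only points requiring a little care are bookkeeping ones. First, the exact meaning of ``start of iteration $i$'' versus ``end of local aggregation'' only affects constants --- reading the count off ``at the start of iteration $\lceil\log D\rceil+1$'' would give $W/D$, and we state $2(W/D)$ to absorb the ceiling in $2^{\lceil\log D\rceil}$. Second, one should note the bound is non-vacuous: since latencies equal weights and are positive, the MST, being a spanning tree of total weight $W$, has (tree-)diameter at most $W$, and the latency diameter of $G$ is at most that, so $D \le W$ and $2(W/D) \ge 2$. I would keep the write-up to one or two sentences and simply cite Lemma~\ref{lem:frag_number} and the remark on its portability.
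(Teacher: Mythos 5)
Your proof is correct and follows the same route the paper intends: Lemma~\ref{lem:frag_number} holds verbatim for Algorithm~\ref{algo:CGHS2.1} because its proof (via Lemma~\ref{lem:frag_weight} and the MOE-weight count) depends only on Lines~3 and~5, not on the loop bound, and evaluating the bound at the final iteration index $\lceil\log D\rceil$ gives $2W/2^{\lceil\log D\rceil}\le 2(W/D)$. The monotonicity remark and the $D\le W$ sanity check are extra but harmless bookkeeping; the paper simply states that the corollary ``follows directly.''
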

}

\begin{lemma} \label{lem:LA1}
For the case where $D>\sqrt{W/c}$, the local aggregation algorithm outputs at most $2(W/D)$ MST fragments each of diameter at most $O(D)$ in $O(D \log^* n)$ rounds and requiring $O(m+ n\log D \log^* n)$ messages.
\end{lemma}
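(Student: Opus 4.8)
The plan is to reuse the analysis of Lemma~\ref{lem:CGHS_result} essentially verbatim, exploiting the fact that Algorithm~\ref{algo:CGHS2.1} is identical to Algorithm~\ref{algo:CGHS2} except that its outer loop now terminates after $\lceil\log D\rceil$ iterations instead of $\lceil\log\sqrt{W/c}\rceil$. First I would note that Lemmas~\ref{lem:frag_diam}, \ref{lem:frag_weight}, and~\ref{lem:frag_number} are each statements about the fragment set at the \emph{start} of an arbitrary iteration $i$ and therefore do not depend on the loop bound; hence they apply unchanged, and the paper already records the relevant consequences in Corollaries~\ref{corr:frag_diam} and~\ref{corr:frag_number}. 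Evaluating Lemma~\ref{lem:frag_diam} at $i=\lceil\log D\rceil$ gives an output fragment diameter of $O(2^{\lceil\log D\rceil})=O(D)$, and Lemma~\ref{lem:frag_number} gives at most $2(W/2^{\lceil\log D\rceil})\le 2(W/D)$ surviving fragments, which are exactly the two structural claims.

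Next I would bound the running time by summing the per-iteration cost. In iteration $i$ the three subtasks are: finding the MOE (at most $O(D)$ rounds, since a node only needs to probe edges up to latency $\min(2^i, D)$, as no heavier edge can lie on the MST), convergecasting the lightest candidate to the fragment leader ($O(2^i)$ rounds by the diameter bound of Lemma~\ref{lem:frag_diam}), and running the $O(\log^* n)$-round maximal-matching/symmetry-breaking routine on the fragment graph with each simulated round scaled up by the $O(2^i)$ latency of the edges it uses. The dominant term is $O(2^i\log^* n)$, and $\sum_{i=0}^{\lceil\log D\rceil} 2^i = O(D)$, so the total is $O(D\log^* n)$ rounds.

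For the message complexity the accounting is identical to Lemma~\ref{lem:CGHS_result} with $\log D$ in place of $\log(W/c)$: the MOE searches over all iterations cost $\sum_{v}2d(v)+\sum_{i=1}^{\lceil\log D\rceil}\sum_{v}1 = O(m+n\log D)$ (each node probes each incident edge at most once, plus one re-probe of the last-checked edge per iteration); the matching routine uses $O(n)$ messages per simulated round since only MST edges carry traffic, hence $O(n\log^* n)$ per iteration and $O(n\log^* n\log D)$ overall; and the Line~5 augmentation adds $O(n)$ messages per iteration. Summing gives $O(m+n\log D\log^* n)$.

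I do not expect a genuine obstacle, since this lemma is a straightforward re-parametrization of an already-proved result. The one point that deserves an explicit sentence is the $O(D)$ cap on the per-iteration round cost in the final iteration, where $2^i$ may exceed $D$: one argues that no edge of latency greater than the latency diameter $D$ can be on the MST (by the cycle property, since its two endpoints are joined by a path of total latency at most $D$, so the edge is the heaviest on that cycle), so a node never needs to probe or communicate over such an edge, making the effective per-iteration cost $O(\min(2^i,D))=O(D)$ and leaving the geometric sum unchanged.
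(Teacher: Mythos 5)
Your proposal is correct and follows essentially the same approach as the paper's proof: both treat Lemma~\ref{lem:LA1} as a re-parametrization of Lemma~\ref{lem:CGHS_result} with $\log D$ replacing $\log\sqrt{W/c}$, invoke the iteration-indexed Lemmas~\ref{lem:frag_diam}, \ref{lem:frag_weight}, \ref{lem:frag_number} unchanged, and bound the time by the geometric sum $\sum_{i}2^i\log^*n=O(D\log^*n)$ and the messages by the same three-part accounting. The closing cycle-property remark about edges of latency exceeding $D$ is a reasonable safeguard but is not needed, since $2^{\lceil\log D\rceil}\le 2D=O(D)$ already controls the final iteration.
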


\onlyLong{
\begin{proof}
As before, each iteration of the local aggregation algorithm performs three major functions, namely finding the MOE, convergecast within the fragment and merging with adjacent fragment over the matched MOE edge. 
For finding the MOE, in each iteration, every node checks each of its neighbor (maximum in $O(D)$ time) in non-decreasing order of weight of the connecting edge starting from the last checked edge (from previous iteration). Thus, each node contacts each of its neighbors at most once, except for the last checked node (which takes one message per iteration). Hence total message complexity (over $\log D$ iterations) is 
\[ \sum_{v\in V}2d(v) + \sum_{i=1}^{\log D} \sum_{v\in V}1 = O(m+ n \log D), \] where $d(v)$ refers to the degree of a node. 
The fragment leader determines the MOE for the $i^{th}$ iteration, by convergecasting over the fragment, which requires at most $O(2^i)$ rounds since the diameter of any fragment is bounded by $O(2^i)$ (by Lemma \ref{lem:frag_diam}). 
The fragment graph, being a rooted tree, uses a $O(\log^* n)$ round deterministic symmetry-breaking algorithm \cite{matching, Pandurangan:2017:TMD:3055399.3055449} to obtain the required matching edges in the case without latencies.
Taking into account the required scale-up in case of the presence of latencies, the symmetry breaking algorithm
is simulated by the leaders of neighboring fragments by communicating with
each other; since the diameter of each fragment is bounded by $O(2^i)$ and the maximum weight of the MOE edges is also $2^i$, the time
needed to simulate one round of the symmetry breaking algorithm in iteration $i$
is $O(2^i)$ rounds. Also, as only the MST edges (MOE edges) are used in communication, the total
number of messages needed is $O(n)$ per round of simulation. Since there are $
O(\log D)$ iterations, the total time and message complexity for building the
maximal matching is $O(\sum^{ \log D}_{i=0} 2^i \log^* n) = O(D \log^* n)$ and $O(n \log^* n)$
respectively. Afterwards, adding selected edges into $M_i'$ (Line $5$ of the local aggregation algorithm) can be done with additional $O(n)$ message complexity and $O(2^i)$ time complexity in iteration $i$. 
Thus, the overall message complexity of the algorithm is
$O(m+ n\log D \log^* n)$ and the overall time complexity is $O(D \log^* n)$.
\end{proof}

}

The global aggregation algorithm \onlyLong{(for the case where $D>\sqrt{W/c}$)} is run as such without any change over these larger base fragments. However, as the base fragments here have a total weight of at most $D$, the number of base fragments for this case is $O(W/D)$. Each of $O(W/D)$ base fragment upcasts the MOE (w.r.t. its current component) to the root of the shortest path tree $\tau$ requiring at most $O(D+ \frac{1}{c}W/D) = O(D+\sqrt{W/c})= O(D)$ rounds (as for this case $D>\sqrt{W/c}$). Correspondingly, the number of messages required is $O(D_{SPT}\cdot W/D)$, where $D_{SPT}$ is the hop diameter of the shortest path tree $\tau$. As $D_{SPT} \leq D$, the message complexity becomes at most $O(W)$.

\begin{lemma} \label{lem:GA1}
For the case where $D>\sqrt{W/c}$, given that the local aggregation algorithm outputs at most $2(W/D)$ MST fragments, the global aggregation algorithm outputs the MST requiring $O(D)$ time and $O(W)$ messages.
\end{lemma}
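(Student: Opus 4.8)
The plan is to reuse, almost verbatim, the per-iteration accounting developed for Lemma~\ref{lem:GA}, changing only the two parameters that differ: base fragments now have diameter $O(D)$ and there are only $O(W/D)$ of them (Lemma~\ref{lem:LA1} / Corollary~\ref{corr:frag_number}), rather than diameter $O(\sqrt{W/c})$ and count $O(\sqrt{cW})$. First I would fix the global structures: the shortest path tree $\tau$ of Lemma~\ref{lem:spt}, which has latency depth $\le 2D$ and hop depth $D_{SPT}\le D$; and the $O(W/D)$ base fragments, each with a leader that knows its component id. Correctness I would argue exactly as in Bor\r{u}vka/GHS: in every iteration each current component selects its minimum-weight outgoing edge, the selected edges span a forest on the component graph, so contracting along them merges every component with at least one neighbour and the component count at least halves; hence starting from $2W/D$ components the algorithm terminates after $O(\log(W/D)) = O(\log n)$ iterations with a single component, and since every edge ever added is an MOE of a current component it lies in $T$, so the final component is the MST.

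Next I would analyse one iteration, split into four phases. (i) Every node computes its MOE with respect to its current component; scanning its adjacent edges in nondecreasing weight order and never going past weight $D$ (no heavier edge is in the MST) costs $O(D)$ time, and—amortised GHS-style so that each edge is examined a constant number of times over the whole run—$O(m)$ messages in total across all iterations. (ii) The lightest MOE is upcast along fragment edges to the base-fragment leader, keeping only the minimum at each hop: $O(D)$ time (base-fragment diameter, Corollary~\ref{corr:frag_diam}) and $O(n)$ messages. (iii) Each base-fragment leader feeds the lightest known outgoing edge of its component into $\tau$; an internal node of $\tau$ waits until it has heard from every child and then forwards, per component id, the lightest edge seen, so the $O(W/D)$ messages stream up in a pipeline with spacing $1/c$, finishing in $O\!\left(D + \tfrac1c\cdot\tfrac{W}{D}\right)$ rounds and using $O(D_{SPT}\cdot W/D)$ messages. (iv) The root locally replays the controlled-GHS merging rule on the component graph and downcasts the new component ids through $\tau$ to the base-fragment leaders, which propagate them inside their fragments: another $O(D)$ rounds and $O(D_{SPT}\cdot W/D)$ messages.

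The crux is collapsing phase~(iii)'s bound using the case hypothesis: since $D > \sqrt{W/c}$ we have $D^2 > W/c$, hence $\tfrac1c\cdot\tfrac{W}{D} = \tfrac{W}{cD} < D$, so each iteration costs $O(D)$ time; over $O(\log n)$ iterations this totals $O(D\log n)=\tilde O(D)$, matching the stated $O(D)$ up to the polylog absorbed by $\tilde\Theta$. For messages, the component count—hence the message volume in phases~(iii) and~(iv)—halves each iteration, so their cost is a geometric series dominated by the first term $O(D_{SPT}\cdot W/D) \le O(D\cdot W/D) = O(W)$ (using $D_{SPT}\le D$); the $O(n)$ per-iteration cost of phase~(ii) sums to $O(n\log n)$ and the one-time $O(m)$ cost of phase~(i) is folded into the overall $\tilde O(m+W)$ budget.

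I expect the obstacles to lie entirely in bookkeeping rather than in any new idea: (a) justifying that the MOE rescans can be organised so they cost $O(m)$ in aggregate rather than $O(m\log n)$ (each node advancing through its neighbours monotonically over iterations, as in the local aggregation analysis), so that the message claim comes out as $O(W)$ plus the globally-counted $O(m)$; and (b) handling the lack of synchrony in phases~(iii)/(iv)—guaranteeing that each node of $\tau$, and in particular the root, has received information about every component before it forwards or simulates the merge, which is exactly what the ``wait for one message from each child'' rule provides, while the $O(D)$ and pipelining bounds remain valid because no edge of latency exceeding $2D$ is ever used.
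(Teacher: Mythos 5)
Your proposal matches the paper's own argument essentially line for line: the pipelined upcast of one MOE per base fragment on $\tau$ costing $O\bigl(D+\tfrac{1}{c}\cdot\tfrac{W}{D}\bigr)$ per iteration, collapsed to $O(D)$ via the case hypothesis $D>\sqrt{W/c}$, with messages bounded by $O(D_{SPT}\cdot W/D)\le O(W)$ using $D_{SPT}\le D$. You are in fact slightly more careful than the paper's Lemma~\ref{lem:GA1} itself, which suppresses the $O(\log n)$ iteration count on the time bound and does not spell out the geometric-series/amortised-scan bookkeeping for messages; your note that the extra $\log$ is absorbed into the paper's final $\tilde{\Theta}$ bound is exactly the right resolution.
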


\onlyLong{
The overall time and message complexity is determined by the cost of local aggregation algorithm along with the cost of pipelining over the shortest path tree. 
Combining the above results, we obtain a time complexity of  $\Tilde{O}(D+\sqrt{W/c})$ and the message complexity is $\Tilde{O}(m+ W)$.
}

\para{Guessing and Doubling}
Without the knowledge of the total weight of the MST, we can still perform the above algorithm, by guessing a value for $W$ in each iteration. Starting with an initial guess of $1$, if the algorithm succeeds for the guessed value \onlyLong{ of $W$}, it terminates; otherwise it doubles the guessed value and continues. 

First, a shortest path tree $\tau$ is built. Thereafter, local aggregation is performed with the guessed value of $W$. To check for success, each base fragment leader sends a single bit to the root of the shortest path tree such that the root can determine the count of the total number of base fragments present. Depending on the relationship between the latency diameter $D$ and the current estimate of $W$, for the case where $D \leq \sqrt{W/c}$ if the number of base fragments is $\leq 2\sqrt{c{W}}$ (c.f. Lemma \ref{lem:CGHS_result}) or for the case where $D > \sqrt{W/c}$  if the number of base fragments is $\leq 2(W/D)$ (c.f. Lemma \ref{lem:LA1}), it would imply that the algorithm was successful at guessing the value of $W$. %
Once the root determines the appropriate value of $W$, it intimates all the other nodes to run the actual algorithm. %
Note that, this can add up to a factor of $\log W$ to obtain the correct guess.

\begin{theorem}
In the weighted CONGEST model, when edge latencies equal edge weights there exists an algorithm that computes the MST in $O((D+\sqrt{W/c})\log^* (n) \cdot \log W)$ rounds and using $O((m+W+ n \log^* n (\log D + \log (W/c)))\log W )$ messages w.h.p.
\end{theorem}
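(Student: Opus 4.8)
The plan is to assemble the two-stage protocol of the previous subsection into one algorithm that does not need $W$ in advance, and then charge the overhead of a guess-and-double loop. First I would have all nodes jointly run the randomized leader-election and shortest-path-tree construction of Lemma~\ref{lem:spt}; this costs $O(D)$ rounds and $O(m\log n)$ messages w.h.p., and it simultaneously gives every node a common root $\tau$ and a $2$-approximation $\hat D$ of the latency diameter $D$ (namely the latency depth of $\tau$, which equals the eccentricity of the elected leader and hence lies in $[D/2, D]$). All randomness of the final algorithm is confined to this step, so the ``w.h.p.'' qualifier of the statement follows directly from Lemma~\ref{lem:spt}.

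The core loop runs over guesses $w = 1, 2, 4, \dots$ In the iteration with $w = 2^j$, every node resets to a singleton fragment and the nodes run the appropriate local-aggregation routine: Algorithm~\ref{algo:CGHS2} for $\lceil\log\sqrt{w/c}\rceil$ iterations when $\hat D \le \sqrt{w/c}$, and Algorithm~\ref{algo:CGHS2.1} for $\lceil\log\hat D\rceil$ iterations otherwise. By Lemmas~\ref{lem:CGHS_result} and~\ref{lem:LA1} this attempt terminates within $O((\hat D + \sqrt{w/c})\log^* n)$ rounds and $O(m + n\log^* n(\log\hat D + \log(w/c)))$ messages \emph{regardless} of whether $w$ is the right guess, because the structural facts (Lemmas~\ref{lem:frag_diam}, \ref{lem:frag_weight}, \ref{lem:frag_number}) are stated in terms of the true MST weight $W$ and bound the algorithm's behavior no matter which parameter it is handed. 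The base-fragment leaders then convergecast a single $O(\log n)$-bit count up $\tau$ (one message per tree edge, so $O(\hat D)$ rounds and $O(n)$ messages), and the root declares \emph{success} iff this count is at most $2\sqrt{cw}$ in the first case, or at most $2w/\hat D$ in the second. On success the root broadcasts a ``go'' bit and the nodes run the global-aggregation stage verbatim as in Lemmas~\ref{lem:GA} and~\ref{lem:GA1}; on failure the root broadcasts ``double'' and the loop repeats with $w \leftarrow 2w$.

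The point needing care is soundness and termination of this loop. I would show that the check necessarily passes once $w \ge W$: applying Lemma~\ref{lem:frag_number} at the last iteration of local aggregation, the number of base fragments is at most $2W/2^{\lceil\log\sqrt{w/c}\rceil} \le 2W\sqrt{c/w} = 2\sqrt{cw}\cdot(W/w) \le 2\sqrt{cw}$ in the first case, and at most $2W/2^{\lceil\log\hat D\rceil} \le 2W/\hat D \le 2w/\hat D$ in the second. Since $\hat D \le \sqrt{w/c}$ holds for all $w$ large enough, the first power of two with $w \ge W$ already triggers success, so the loop halts with a guess $w^* \le 2W$. A ``premature'' success with $w^* < W$ is harmless: whatever bound on the base-fragment count the test certifies, global aggregation over that many components is correct by the usual Bor\r{u}vka-style argument and runs within the time and message budget attached to that bound, and the tree it outputs is the MST independently of the parameter fed to the algorithm.

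Finally I would sum the costs. Across $j = 0,\dots,\log w^*$ the $\sqrt{2^j/c}$ contributions form a geometric series dominated by its last term $O(\sqrt{w^*/c}\log^* n) = O(\sqrt{W/c}\log^* n)$, while the $\hat D$ (hence $D$) contributions of each local-aggregation attempt, each convergecast check, and the single successful global-aggregation cost $O(D + \sqrt{W/c})$ each appear $O(\log W)$ times, yielding $O((D + \sqrt{W/c})\log^* n \cdot \log W)$ rounds. The analogous message accounting -- $O(m\log W)$ for the repeated MOE scans, $O(n\log^* n(\log D + \log(W/c))\log W)$ for the matchings, $O(n\log W)$ for the checks, $O(W)$ for the one successful global aggregation, and $O(m\log n)$ for the tree construction -- gives $O((m + W + n\log^* n(\log D + \log(W/c)))\log W)$ messages, as claimed. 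The main obstacle is exactly the monotonicity/soundness argument for the stopping test together with the a priori per-attempt bound that makes the geometric sum go through; the rest is bookkeeping over the lemmas already established.
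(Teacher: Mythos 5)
Your proposal is correct and takes essentially the same approach as the paper: build the shortest-path tree, run the two-case local-aggregation then global-aggregation algorithm, and wrap it in a guess-and-double loop on $W$ with a base-fragment-count check at the root, incurring a $\log W$ overhead. You go slightly further than the paper's terse proof by explicitly arguing that per-attempt costs are bounded even under a wrong guess (via Lemmas~\ref{lem:frag_diam}--\ref{lem:frag_number}) and that a premature success is harmless, points the paper leaves implicit.
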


\onlyLong{
\begin{proof}
The correctness of the algorithm immediately follows from the fact that in each iteration MST fragments merge with one another, and the total number of fragments reduce by at least half. This property (or invariant) ensures that at the end there is only one fragment remaining, which is in fact the MST.	

Creating the shortest path tree takes $O(D)$ time (c.f. Lemma \ref{lem:spt}). Thereafter, the overall time and message complexity is determined by the cost of local aggregation algorithm,  the cost of merging the components over the shortest path tree (i.e. the global aggregation algorithm) for each case based on the latency diameter $D$, along with the cost for guessing the correct $W$.\\
\textbf{Case 1: When} $D\leq \sqrt{W/c}$\\ 
Combining the time complexity for both local and global aggregation (given by Lemmas \ref{lem:CGHS_result} and \ref{lem:GA}), we obtain an overall time complexity of $O(\sqrt{{W/c}} \log^* n) + O(D+\sqrt{W/c}) = O(\sqrt{{W/c}} \log^* n)$ rounds (as $D\leq \sqrt{W/c}$). Similarly, the overall message complexity is $O(m+W+ n \log^* n \log (W/c))$.\\
\textbf{Case 2: When} $D> \sqrt{W/c}$\\ 
Combining the time complexity for both local and global aggregation (given by Lemmas \ref{lem:LA1} and \ref{lem:GA1}), we obtain an overall time complexity of $O(D\log^* n) + O(D) = O(D \log^* n)$ rounds. Similarly, the overall message complexity is $O(m+W+ n \log^* n \log D)$.

Combining both cases, the time complexity of the algorithm (given the value of $W$ is known) is given by  $O((D+\sqrt{W/c})\log^* n)$ rounds  and the message complexity is given by  $O(m+W+ n \log^* n (\log D + \log (W/c)) )$. 

Using the guess and double technique requires $\log W$ iterations which increases the overall time and message complexities by a factor of $\log W$. Therefore, the overall time and message complexities are given by $O((D+\sqrt{W/c})\log^* n \log W)$ rounds and $O((m+W+ n \log^* n (\log D + \log (W/c)))\log W )$ respectively.
\end{proof}
}

\enlargethispage{\baselineskip}
\subsection{Lower Bounds} \label{subsec:lbcorrelated}
In this section, we provide a time complexity lower bound of $\tilde{\Omega}(D + {\sqrt{W/c}})$  
for the case where edge weights also represent edge latencies, through the following theorem. 
For message complexity, based directly from the results of \cite{Kutten:2015:CUL:2742144.2699440}, we show a lower bound of $\Omega(m)$. We further show that any algorithm that runs in a constant factor of the optimal running time, for a particular choice of constant, requires $\Omega(W)$ messages in the worst case.

\begin{theorem} \label{thm:lb-l=w}
Any algorithm to compute the MST of a network graph in which the weights correspond to latencies must, in the worst case, take $\Omega(D + ({\sqrt{W/c}})/{\log n})$ time, where $W$ is the total weight of the MST. 
\end{theorem}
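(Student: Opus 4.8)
The plan is to reduce from the standard \textsf{CONGEST} lower bound of Peleg and Rubinovich~\cite{PR99}, which says that any MST algorithm on some graph family requires $\Omega(D' + \sqrt{n}/\log n)$ rounds, where $D'$ is the hop diameter. The $\Omega(D)$ term is immediate: there exist graphs (e.g. a path with suitable latencies) where the two endpoints must communicate to decide the MST, and any such communication takes $\Omega(D)$ time by definition of the latency diameter; this needs only a standard information-theoretic/indistinguishability argument, so I would dispatch it quickly. The substantive part is the $\Omega(\sqrt{W/c}/\log n)$ bound.

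For the congestion term, I would take the hard instance $\mathcal{G}$ from~\cite{PR99} on $N$ nodes with hop diameter $\Theta(\log N)$ for which MST needs $\Omega(\sqrt{N}/\log N)$ rounds in \textsf{CONGEST}, and transform it into a weighted-\textsf{CONGEST} instance. The idea is to scale every edge so that it has latency (hence weight) $\ell := 1/c$ --- the minimum allowed, consistent with the model's requirement $\ell_{\min} \ge 1/c$. Then the total MST weight is $W = \Theta(N/c)$, so $N = \Theta(Wc)$ and $\sqrt{N} = \Theta(\sqrt{Wc}) = \Theta(c\sqrt{W/c})$. Crucially I expect the excerpt's Lemma~\ref{lem:simulation} (the promised simulation relating the weighted-\textsf{CONGEST} model to the standard \textsf{CONGEST} model) to give the following: any $T$-round algorithm in this uniform-latency-$\ell$, capacity-$c$ weighted model can be simulated in $O(Tc/\ell) = O(Tc^2)$ rounds of standard \textsf{CONGEST} --- because over $\ell$ real rounds a link delivers only the message(s) sent every $1/c$ steps, i.e.\ $\ell c = 1$ message per latency-window, matching the one-message-per-round budget of \textsf{CONGEST} up to constant factors. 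Feeding the \textsf{CONGEST} lower bound $\Omega(\sqrt{N}/\log N)$ through this simulation, a weighted-\textsf{CONGEST} algorithm running in $T$ rounds yields a \textsf{CONGEST} algorithm running in $O(Tc^2/\ell \cdot \ell) = O(Tc)$ rounds or similar, hence $Tc = \Omega(\sqrt{N}/\log N) = \Omega(c\sqrt{W/c}/\log N)$, giving $T = \Omega(\sqrt{W/c}/\log N)$. (I will need to be careful with exactly which direction of the simulation bound I invoke and with the capacity factors; the cleanest route is: one round of weighted-\textsf{CONGEST} with uniform latency $\ell=1/c$ can be simulated by $O(1)$ rounds of \textsf{CONGEST} after time-scaling, so a $T$-round weighted algorithm gives an $O(T\ell) = O(T/c)$-round \textsf{CONGEST} algorithm on $N$ nodes, forcing $T/c \ge \Omega(\sqrt{N}/\log N)$, i.e.\ $T = \Omega(c\sqrt{N}/\log N) = \Omega(\sqrt{cW}/\log N) = \Omega(\sqrt{W/c}\cdot c/\log N)$ --- and since in this construction $c$ is a constant we get the claim; to get it for all $c$ one instead pads the gadget so the ratio works out, which is the fiddly bookkeeping I would defer.)

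Combining the two bounds by the usual "an adversary picks whichever graph forces the larger cost, or glues the two gadgets together" argument gives $\Omega(D + (\sqrt{W/c})/\log n)$, since $\max(a,b) \ge (a+b)/2$.

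The main obstacle I anticipate is making the reduction genuinely give $\sqrt{W/c}$ and not merely $\sqrt{W}$ or $\sqrt{N}$ --- i.e.\ correctly exposing the dependence on the capacity $c$. This requires the simulation lemma to be quantitatively tight in $c$ (delivering the right $\Theta(\ell c) = \Theta(1)$ messages-per-window accounting) and requires choosing the edge latencies/weights in the transformed instance so that $W$, $c$, and the original $N$ line up as $N = \Theta(cW)$; getting the padding and the diameter blow-up ($O(\log n)$ factors) to coexist is where the care lies. A secondary point is verifying that the transformed graph still has latency diameter $O(\log n \cdot \ell)$, i.e.\ polylogarithmic after accounting for latencies, so that the $\Omega(\sqrt{W/c}/\log n)$ term is not swallowed by the $\Omega(D)$ term — this follows because all edges have the same latency $\ell$ and the original hop diameter is $\Theta(\log N)$, but it should be stated explicitly.
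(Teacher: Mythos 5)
Your high-level plan matches the paper's: both prove the $\Omega(D)$ term by an indistinguishability argument on a simple gadget, and both prove the congestion term by reducing to a classical \textsf{CONGEST} lower bound through a simulation that converts a $T$-round weighted-\textsf{CONGEST} algorithm into a \textsf{CONGEST} algorithm with suitably enlarged bandwidth. However, there are two substantive issues.

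First, your ``cleanest route'' has the scaling inverted. The paper's Lemma~\ref{lem:simulation} states that a $T$-round weighted-\textsf{CONGEST} algorithm can be run in $O(T/\ell_{\min} + 1)$ rounds of \textsf{CONGEST} with messages of $O(c\,\ell_{\min}\log n)$ bits; with $\ell = 1/c$ that is $O(Tc)$ rounds with $O(\log n)$-bit messages. You wrote ``a $T$-round weighted algorithm gives an $O(T\ell) = O(T/c)$-round \textsf{CONGEST} algorithm,'' which divides by $c$ instead of multiplying, and this propagates into $T = \Omega(c\sqrt N/\log N)$ and the incorrect identity $c\sqrt N = \sqrt{cW}$ (with $N = cW$ one has $c\sqrt N = c^{3/2}\sqrt W$, not $\sqrt{cW}$). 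With the correct direction, $Tc \ge \Omega(\sqrt{N}/\log N)$ and $N = cW$ give $T \ge \Omega\bigl(\sqrt{N}/(c\log N)\bigr) = \Omega\bigl(\sqrt{W/c}/\log N\bigr)$ for \emph{every} admissible $c$; the worry that ``$c$ must be a constant or one must pad the gadget'' disappears, so the ``fiddly bookkeeping'' you deferred is not actually needed. (Your first-pass computation, which did land on $O(Tc)$ and $T = \Omega(\sqrt{W/c}/\log N)$, was right --- trust that one.)

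Second, you reduce from Peleg--Rubinovich, which is a lower bound for deterministic algorithms only. The theorem asserts a bound for ``any algorithm,'' and the paper accordingly reduces from the lower bound of Das Sarma et al.~\cite{DHK+12}, which holds for randomized algorithms and, crucially, is stated parametrically in the bandwidth $B$: $\Omega(\sqrt{n/(B\log n)})$ rounds. That bandwidth-parametric form is exactly what combines cleanly with the simulation lemma (which increases the \textsf{CONGEST} bandwidth to $O(c\,\ell_{\min}\log n)$ bits). Using~\cite{DHK+12} also sidesteps a concrete technicality with PR's gadget: its weight-$\infty$ edges do not convert gracefully into bounded latencies satisfying $\ell \ge 1/c$ without distorting $W$ or $D$, whereas the 0/1 weights of~\cite{DHK+12} (shifted to, e.g., $\ell_{\min}$ and $2\ell_{\min}$) give $W = \Theta(n\,\ell_{\min})$ and latency diameter $O(\ell_{\min}\log n)$, which is what the final accounting needs. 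Your closing observation that one must check the latency diameter stays $O(\ell\log n)$ so the congestion term is not swallowed is correct and worth keeping.
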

\begin{proof}

First, we show a lower bound of $\Omega(D)$. Consider the network that is a ring with $n-2$  edges of latency (and therefore weight) $\ell$  and the remaining two edges $e_1$ and $e_2$ are  positioned diametrically opposed to each other and are assigned latencies $\ell+r_1$ and $\ell+r_2$ resp., where $r_1$ and $r_2$ are random integers from, say, $[1, \ell]$. Clearly, any MST algorithm must take $\Omega(n\ell) = \Omega(D)$ time to determine whether $e_1$ or $e_2$ must be in the MST. The bound holds for all values of $D$ as long as $n \ge 4$ and $\ell$ can be suitably adjusted to be $\Omega(D/n)$.

Next, we show a lower bound of $\Omega(({\sqrt{W/c}})/{\log n})$ in two steps. We first relate algorithms in our model to algorithms in the classical CONGEST model (cf. Lemma~\ref{lem:simulation}). Then, we complete the lower bounding argument by applying the lower bound from Das Sarma et al.~\cite{DHK+12}.

\begin{lemma} \label{lem:simulation}
  Assume that we are given an algorithm $\mathcal{A}$ in the weighted
  CONGEST model and that $\mathcal{A}$ runs in $T$ rounds on a given
  graph $G$ with minimum edge latency $\ell_{\min}$, maximum edge
  latency $\ell_{\max}$ and capacity $c$. Then, $\mathcal{A}$
  can be run in $O(T/\ell_{\min} + 1)$
  rounds in the standard CONGEST model with messages of size
  $O(c \cdot \ell_{\min}\cdot \log n)$ bits. 
\end{lemma}
\begin{proof}
  We first convert algorithm $\mathcal{A}$ into version $\mathcal{A}'$
  that is slowed down by an integer factor
  $\alpha := 2\cdot \lceil\ell_{\max}/\ell_{\min}\rceil$. In the
  converted algorithm,
  messages are only sent at times that are integer multiples of
  $\alpha$. If a message is supposed to be sent at time $t$ in
  algorithm $\mathcal{A}$, we send the message at time $\alpha\cdot t$
  in $\mathcal{A}'$. Note that if the running time of
  $\mathcal{A}'$ is $T$, the running time of $\mathcal{A}'$ is
  $O(\alpha T)$.

  We first show that by doing this scaling of the algorithm, each node
  already knows what messages it sends at time $t$ in algorithm
  $\mathcal{A}'$ quite a bit before time $t$. Consider a node $v$ and
  some message $M$ that is sent by node $v$ at time $t$ in algorithm
  $\mathcal{A}$ and thus at time
  $\alpha\cdot t$ in algorithm $\mathcal{A}'$. In $\mathcal{A}$,
  $v$ knows the messages it sends at time $t$ after receiving all the
  messages that are received by $v$ by time $t$. Consider a message $M'$ that is
  received by $v$ from some neighbor $u$ at time $t_v\leq t$ in
  $\mathcal{A}$.  Let $\ell\in [\ell_{\min},\ell_{\max}]$ be the
  latency of the edge $\{u,v\}$. In $\mathcal{A}$, $u$ sends the
  message at a time $t_u\leq t_v-\ell$. In the slowed down algorithm
  $\mathcal{A}'$, $u$ therefore sends the message at the latest at
  time $\alpha(t_v-\ell) \leq \alpha t - \alpha\ell$. Because the
  latency of the edge is $\ell$, the message is thus received by $v$
  at the latest at time $t-(\alpha-1)\ell$. Node $v$ therefore knows
  all the information required for the messages it sends at  time
  $\alpha t$ in $\mathcal{A}'$ already at least $(\alpha-1)\ell \geq
  (\alpha-1)\ell_{\min}$ time units prior to sending the message.
  Because $\alpha\geq 2\ell_{\max}/\ell_{\min}$, in $\mathcal{A}'$,
  all nodes thus know which messages to send at a given time $t$ at
  least $(\alpha-1)\ell_{\min}\geq \ell_{\max}$ rounds prior to time
  $t$.

  As a next step, we convert $\mathcal{A}'$ to an algorithm
  $\mathcal{A}''$ where nodes can send larger messages, but where all
  messages are only sent at times that are integer multiples of
  $\ell_{\max}$. Because messages in $\mathcal{A}'$ are known at least
  $\ell_{\max}$ time units prior to being sent, a message that is sent by
  $\mathcal{A}'$ at time $t$ can be sent by $\mathcal{A}''$ at the latest time $t'\leq
  t$ such that $t'$ is an integer multiple of $\ell_{\max}$. Note that
  because all messages in $\mathcal{A}''$ are sent at the latest at
  the time when they are sent by $\mathcal{A}'$, all messages are
  available in $\mathcal{A}''$ when they need to be sent and the time
  complexity of $\mathcal{A}''$ is at most the time complexity of
  $\mathcal{A}'$.  The number of messages of $\mathcal{A}'$ that have to be combined
  into a single message of $\mathcal{A}''$ is at most the number of
  messages that are sent over an edge in an interval of length
  $\ell_{\max}$ by $\mathcal{A}'$ and thus in an interval of length
  $\lceil\ell_{\max}/\alpha\rceil$ by the original algorithm
  $\mathcal{A}$. Because the capacity of each edge is most $c$, the
  number of messages that have to be combined into a single message of
  $\mathcal{A}''$ is thus at most
  $O(c \ell_{\max}/\alpha) = O(c \ell_{\min})$.

  Because $\mathcal{A}''$ only sends messages at times that are
  integer multiples of $\ell_{\max}$, the algorithm also works if we
  increase the latency of each edge to be $\ell_{\max}$. This might
  increase the total time complexity by one additive $\ell_{\max}$
  because at the end of the algorithm, the nodes might have to receive
  the last message before computing their outputs. If $T$ is the time
  complexity of $\mathcal{A}$, the time complexity of this modified
  $\mathcal{A}''$ is therefore at most $O(\alpha\cdot T +
  \ell_{\max})$. If all the edge
  latencies are integer multiples of $\ell_{\max}$, the model is
  exactly equivalent to the original CONGEST model with messages of
  size $O(c\ell_{\min}\log n)$ bits, where time is scaled by a factor
  $\ell_{\max}$ and the claim of the lemma thus follows. %
\end{proof}
From Das Sarma et al.~\cite{DHK+12}, we know that computing the MST in the $\mathsf{CONGEST}$ model with bandwidth $O(c\ell_{\min}\log n)$ bits requires $\Omega(\sqrt{{n}/{B \log n}})$ rounds; here $B$ is the bandwidth term referring to the number of bits that can be sent over an edge per round. Note that their construction uses edge weights 0 and 1, but since the MST will be the same when edge weights are offset by 1, their lower bound holds for the case where their edge weights 0 and 1 are changed to 1 and 2, respectively.  This, in turn, translates to a lower bound of $\Omega(\sqrt{{n \ell_{\min}}/{c \log^2 n}}) = \Omega(({\sqrt{(W/c)}})/{\log n})$.
\end{proof}

In regard to the message complexity, we show a lower bound of $\Omega(m)$ through a `proof by contradiction' that follows directly from the results in \cite{Kutten:2015:CUL:2742144.2699440} by considering edges to have unit weights and latencies. If it was possible to construct an MST with $o(m)$ messages, then we could solve single-source broadcast by using only the MST edges, contradicting the lower bound in \cite{Kutten:2015:CUL:2742144.2699440} (Corollary~3.12).

\begin{figure}[htb]
	\begin{center}
		\includegraphics[page=11,scale=0.5,clip,trim=3.3in 0.1in 3.6in 3.2in]{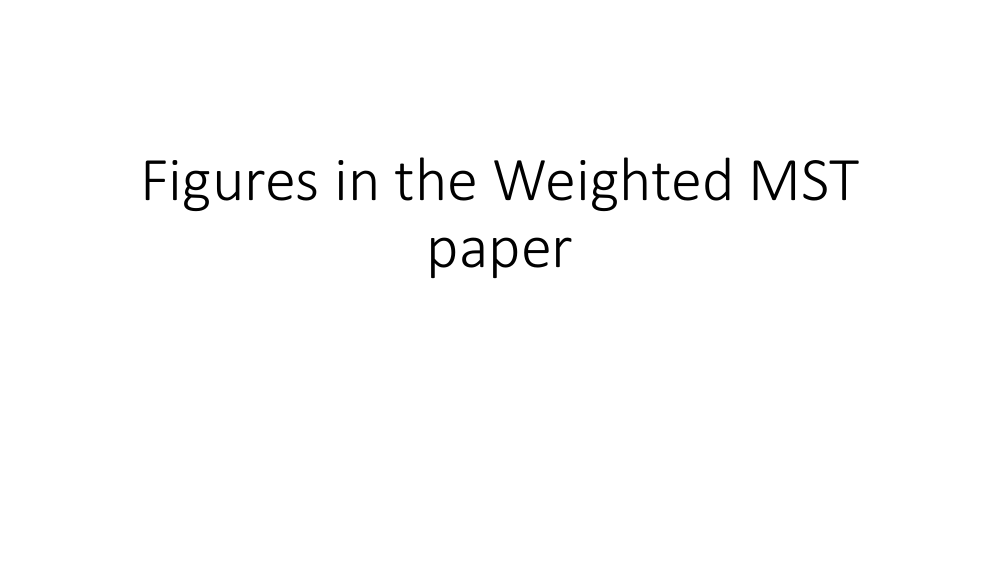}
	\end{center}
	\caption{Construction for showing message complexity lower bound when edge weights equal edge latencies.}
	\label{fig:lbz}
\end{figure}

Next, we present a construction (cf. Figure~\ref{fig:lbz}) that shows a $\Omega(W)$ lower bound on the message complexity for any algorithm computing MST in time  $\leq ({5}/{27})(D+\sqrt{{W}/{c}})$ %
rounds (for the case when edge weights equals edge latencies). The construction comprises two nodes that we designate $\ell$ and $r$ connected by a long path of $({2}/{9})\sqrt{{W}/{c}}$ %
edges, each of weight $1$. All edges uniformly have a capacity $c$. We then add $k = \sqrt{cW}/3$ parallel paths between $\ell$ and $r$, with each $i$th path comprising a left edge $(\ell, x_i)$, a middle edge $(x_i, y_i)$, and a right edge $(y_i, r)$. Each of the left edges has weight either $\sqrt{W/c}$ or $\sqrt{{W}/{c}}-({4}/{9c})$ %
chosen uniformly and independently at random (u.i.r.), while all the middle and the right edges have weights of $\sqrt{{W}/{c}}-({4}/{9c})$ and $\sqrt{{W}/{c}}-({2}/{9c})$, respectively. %
Note that the sum of all the edge weights (in expectation), equals W. Furthermore, the latency diameter $D= \sqrt{W/c} + (2/9)\sqrt{W/c} + \sqrt{{W}/{c}}-({2}/{9c}) = $ $(20/9)\sqrt{{W}/{c}}-({2}/{9c})$, is determined by the distance between some $x_i$ (that is connected to $\ell$ with an edge of latency $\sqrt{W/c}$) and $y_j$ such that $i \neq j$. 

It is clear that for each path, the middle edge will always be included in the MST, but we must include (exclusively) either the left edge or the right edge in the MST. Specifically, $r$ must know which of its $k$ incident right edges must be included in the MST. Notice that there are $2^k$ equally likely possibilities from which $r$ must compute the correct outcome. This requires $r$ to learn a random variable $X$ that encodes these $2^k$ possibilities. Recall that the Shannon's entropy of $X$ given by $H(X) = - \sum_{i=1}^{2^k} (1/2^k) \log(1/2^k) = k$ is a lower bound on the number of bits that $r$ must learn. By design, for any $\sqrt{cW}\geq 1$, none of the middle edges can be used for communication as the latency of the middle edge is be greater than the allowable time complexity of $({5}/{27})(D+\sqrt{{W}/{c}})$  %
rounds. Therefore, these $k$ bits must be learned exclusively through the long path of $({2}/{9})\sqrt{{W}/{c}}$ %
edges. This would require $(({2}/{9})\sqrt{{W}/{c}})+({1}/{c})({\sqrt{cW}}/{3}) = ({5}/{9})\sqrt{{W}/{c}}$ %
rounds (which is $\leq$ the required time complexity of $({5}/{27})(D+\sqrt{{W}/{c}})$) %
and would require $(({2}/{9})\sqrt{{W}/{c}})\cdot ({\sqrt{cW}}/{3}) = 2W/27 = \Omega(W)$  %
messages. Suppose $r$ can learn the MST in  $o\left(W\right)$ messages. Then, it would imply that fewer than $k$ bits were required to learn $X$, a contradiction. %

\begin{theorem}\label{thm:lbz}
For the case where edge weights equal edge latencies, any MST construction algorithm in the weighted $\mathsf{CONGEST}$ model that runs in time $\leq \frac{5}{27}\left(D+\sqrt{\frac{W}{c}}\right)$ rounds requires $\Omega(W)$ messages in the worst case, given that the assumption of $\sqrt{cW}\geq 1$ holds. 
\end{theorem}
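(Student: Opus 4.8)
The plan is to formalize the construction sketched above (Figure~\ref{fig:lbz}) into a clean information‑bottleneck argument along the long path. Fix the graph $G$: the endpoints $\ell$ and $r$, a ``long path'' $\ell = p_0 - p_1 - \cdots - p_L = r$ of $L = \tfrac{2}{9}\sqrt{W/c}$ unit‑weight (hence unit‑latency) edges, and $k = \tfrac{1}{3}\sqrt{cW}$ parallel three‑edge paths $\ell - x_i - y_i - r$ with weights as specified. Let $X \in \{0,1\}^k$ record, for each $i$, whether the left edge $(\ell,x_i)$ got weight $\sqrt{W/c}$ or $\sqrt{W/c}-\tfrac{4}{9c}$; $X$ is the only randomness in the instance and $H(X)=k$. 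The strategy is: (a) show the MST at $r$ encodes $X$; (b) show that within the time budget only the long path is usable; (c) conclude by a per‑cut message count along the path.

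First I would establish the structural facts. A direct computation shows that the expected total edge weight is $W$, and that the latency diameter is $D = \tfrac{20}{9}\sqrt{W/c}-\tfrac{2}{9c}$, realized between some $x_i$ attached to $\ell$ by a heavy left edge and some $y_j$ with $j\neq i$ (taking, for the worst case, a typical $X$ that has at least one heavy left edge). For the MST characterization, compare within each parallel path the three ways to reconnect $x_i,y_i$ to the backbone $\ell$–(long path)–$r$: one checks that the cheapest is always ``middle edge $+$ the lighter of $\{(\ell,x_i),(y_i,r)\}$'', so the MST consists of the whole long path, all $k$ middle edges, and for each $i$ exactly one of $(\ell,x_i),(y_i,r)$ — the lighter one. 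Hence the set of $r$‑incident MST edges is a bijective function of $X$, so the algorithm's output at $r$ determines $X$.

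Next the communication argument. Put $T_0 = \tfrac{5}{27}(D+\sqrt{W/c}) = \tfrac{145}{243}\sqrt{W/c}-\tfrac{10}{243c}$. Every left, middle, or right edge has latency $\ge \sqrt{W/c}-\tfrac{4}{9c}$, and using the hypothesis $\sqrt{cW}\ge 1$ (so $\sqrt{W/c}\ge 1/c$) one verifies $\sqrt{W/c}-\tfrac{4}{9c} > T_0$; hence within $T_0$ rounds no message traverses any such edge to completion, so the only ``live'' subgraph is the long path (in particular the $x_i$'s, which also know individual bits of $X$, are isolated within $T_0$, and $r$ receives every message over $(p_{L-1},r)$). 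Consider the $L$ path cuts $(\{p_0,\dots,p_{j-1}\},\{p_j,\dots,p_L\})$; each is crossed, among live edges, only by $(p_{j-1},p_j)$. Since $\ell=p_0$ is the only communicating node with any knowledge of $X$ and $r=p_L$ must learn $H(X)=k$ bits, a standard two‑party/cut argument forces $\Omega(k)$ messages across each edge $(p_{j-1},p_j)$ before time $T_0$; summing over the $L$ edges gives $\Omega(kL)=\Omega(W)$ messages. Finally I would check non‑vacuousness: $\ell$ can actually push all $k$ bits to $r$ in time $(k-1)/c + L = \tfrac{5}{9}\sqrt{W/c} \le T_0$, so ``fast'' algorithms in the stated sense exist, and any one of them pays $\Omega(W)$ messages. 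A Yao/averaging step extends the bound to randomized algorithms: for each fixed coin sequence the bound holds in expectation over $X$, hence on some worst‑case instance.

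The main obstacle I anticipate is the quantitative bookkeeping that makes the ``live‑edges'' claim and the per‑cut count simultaneously tight. The constants ($\tfrac29$, $\tfrac13$, $\tfrac{5}{27}$, the offsets $\tfrac{2}{9c},\tfrac{4}{9c}$) must be chosen so that the left/middle/right latencies strictly exceed $T_0$ even in the borderline regime $cW=\Theta(1)$, the MST comparisons break ties in the intended direction, and the honest protocol still meets the $T_0$ deadline. Second, the per‑cut message lower bound must be handled with care: since CONGEST messages carry $O(\log n)$ bits, a literal count yields $\Omega(k/\log n)$ messages per cut, i.e.\ $\tilde\Omega(W)$ overall, so obtaining the stated $\Omega(W)$ requires working with the instance in which the $k$ relevant bits are transmitted one at a time (equivalently, counting only information‑carrying messages), and one must confirm that the capacity‑$c$ spacing together with the $L$‑round path latency still leaves a sending window at $\ell$ wide enough for this counting to be tight. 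The contribution of the $x_i$'s and $y_i$'s (which also hold parts of $X$) is dispatched easily once the live‑edges claim is in place, but it must be argued explicitly.
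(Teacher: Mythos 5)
Your proposal mirrors the paper's argument essentially line for line: the same construction with a long path of $L = \tfrac29\sqrt{W/c}$ unit-weight edges and $k=\tfrac13\sqrt{cW}$ parallel three-edge paths, the same random vector $X$ with entropy $k$, the same observation that every left/middle/right edge has latency exceeding the time budget $T_0$ (so only the long path is usable), and the same final message count $kL=\Omega(W)$ along that path. What you add is welcome rigor --- the explicit per-cut tally across all $L$ path edges, a verification of which edges the MST must contain, the non-vacuousness computation showing that a protocol meeting the deadline actually exists, and the Yao averaging step for randomized algorithms --- but these are refinements of the same route, not a genuinely different one. The $\log n$ issue you anticipate is a real soft spot, and it is equally present in the paper's own proof: since a CONGEST message carries $O(\log n)$ bits and an algorithm may batch $\Theta(\log n)$ bits of $X$ per message, the entropy argument on its own forces only $\Omega(k/\log n)$ crossings per cut and hence $\Omega(W/\log n)=\tilde\Omega(W)$ total; neither your suggested workaround (``counting only information-carrying messages'') nor the paper's one-sentence contradiction actually closes this gap, so what is rigorously established is $\tilde\Omega(W)$, consistent with the $\tilde\Theta$ notation the paper uses elsewhere.
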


\section{Unrelated  Weights and  Latencies} \label{sec:uncorrelated}
In this section, we consider the case when there is no relationship between the edge weights and the latencies, and either can take arbitrary values. We show that unlike the $\Tilde{\Theta}(D'+\sqrt{n})$ rounds tight bounds for MST construction in the standard \textsf{CONGEST} model (where $D'$ refers to the  diameter with unit latencies), the best that can be achieved in this case is $\Tilde{\Theta}(D+{n/c})$. %

\subsection{Lower Bound} 
We now present a construction (cf. Figure~\ref{fig:lb1}) that shows a $\Omega(n/c)$ lower bound on the time complexity for computing MST when edge weights are independent of latencies. The construction is similar in flavour to the message complexity lower bound given in the previous section. As before, the lower bound graph comprises two nodes designated as $\ell$ and $r$, that are connected here by an edge $(\ell,r)$ of weight $1$. We then add $k = n/2 - 1$ parallel paths between $\ell$ and $r$, with each $i$th path comprising a left edge $(\ell, x_i)$, a middle edge $(x_i, y_i)$, and a right edge $(y_i, r)$. Each of the left  edges has weight either 1 or 3 chosen %
uniformly and independently at random, while all the middle and right edges have weights 1 and 2, respectively.  All the middle edges, in this case, have very high latencies of the form $\omega(n/c)$, whereas all other edges have a latency 1. All edges uniformly have a capacity $c$. 
Basically, due to its low weight, the middle edge will always be included in the MST, but we must include (exclusively) either the left edge or the right edge in the MST. Specifically, $r$ must know which of its $k$ incident right edges must be included in the MST. \onlyShort{Using Shannon's entropy function, as done in the previous section, we show that $k$ is a lower bound on the number of bits that $r$ must learn.} \onlyLong{Notice that there are $2^k$ equally likely possibilities from which $r$ must compute the correct outcome. This requires $r$ to learn a random variable $X$ that encodes these $2^k$ possibilities. Recall that the Shannon's entropy of $X$ given by $H(X) = - \sum_{i=1}^{2^k} (1/2^k) \log(1/2^k) = k$ is a lower bound on the number of bits that $r$ must learn.}  These $k$ bits must be learned exclusively through the edge $(\ell, r)$. Suppose $r$ can learn the MST in  $o\left(\frac{n}{c \log n}\right)$ rounds. Then, this algorithm can be used to learn $X$ with fewer than $k$ bits which leads to a contradiction.

\onlyShort{
\begin{figure}[htb]
	\begin{center}
		\includegraphics[page=12,scale=0.4,clip,trim=3.3in 1.2in 4in 3.4in]{figures.pdf}
	\end{center}
	\caption{Construction for showing time complexity lower bound when edge weights are different from edge latencies.}
	\label{fig:lb1}
\end{figure}
}

\onlyLong{
\begin{figure}[htb]
	\begin{center}
		\includegraphics[page=9,scale=0.55,clip,trim=3.3in 1.2in 4in 1in]{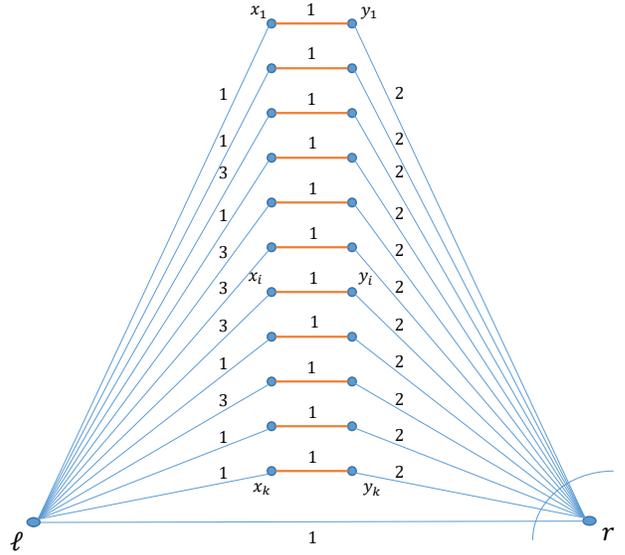}
	\end{center}
	\caption{Construction for showing lower bound when edge weights are different from edge latencies.}
	\label{fig:lb1}
\end{figure}
}

The lower bound $\Omega(D)$  has already been addressed in Section~\ref{subsec:lbcorrelated} (in the proof of Theorem \ref{thm:lb-l=w}).

This implies the following result:
\begin{theorem}
Any algorithm (deterministic or randomized) for computing  the MST of a network in which the edge weights and the latencies are independent of each other requires $\tilde{\Omega}(D+ n/c)$ time.
\end{theorem}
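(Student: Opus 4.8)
The plan is to establish the two lower bounds separately and then combine them. The $\Omega(D)$ part is already in hand: the ring construction from the proof of Theorem~\ref{thm:lb-l=w} uses edges whose weights equal their latencies, which is a special case of the unrelated setting, so it carries over verbatim and gives $\Omega(D)$ time for any (possibly randomized) MST algorithm. The bulk of the work is the $\tilde\Omega(n/c)$ bound, for which I would use the construction already sketched before the theorem statement (Figure~\ref{fig:lb1}): two poles $\ell$ and $r$ joined by a direct edge $(\ell,r)$ of weight $1$, together with $k = n/2 - 1$ three-hop parallel paths $\ell - x_i - y_i - r$, where the left edges $(\ell,x_i)$ independently get weight $1$ or $3$ uniformly at random, the middle edges $(x_i,y_i)$ have weight $1$ but latency $\omega(n/c)$, the right edges $(y_i,r)$ have weight $2$, and every edge has latency $1$ except the middle ones and capacity $c$.

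The key steps, in order: (i) argue that in the unique MST each middle edge is always present (weight $1$, cheapest on its path) and that exactly one of $\{(\ell,x_i),(y_i,r)\}$ must be in the MST, with the choice determined by whether the random left weight is $1$ (include left) or $3$ (include right); hence the output at $r$ — which of its $k$ incident right edges are in the MST — is an injective function of the random vector $X \in \{1,3\}^k$. (ii) Fix the time budget to $o(n/(c\log n))$ rounds; since the middle edges have latency $\omega(n/c)$, no message can traverse any $x_i - y_i$ edge within the budget, so the only channel through which information about the left weights (known initially only to $\ell$ and the $x_i$'s on one side) can reach $r$ is the single edge $(\ell,r)$. (iii) Bound the information crossing that edge: in $t$ rounds at capacity $c$, at most $ct$ messages, hence $O(ct\log n)$ bits, can cross $(\ell,r)$ in the direction toward $r$; for $r$ to determine $X$ it must learn $H(X) = k = n/2-1$ bits (Shannon entropy of the uniform distribution on $\{1,3\}^k$), and by Fano / a standard cut-information argument this forces $ct\log n = \Omega(n)$, i.e. $t = \Omega(n/(c\log n))$. (iv) Combine with the $\Omega(D)$ bound via the usual $\max \ge \tfrac12(\text{sum})$ trick to get $\tilde\Omega(D + n/c)$, and note the argument is a distributional lower bound so it applies to randomized (Las Vegas or bounded-error Monte Carlo) algorithms as well, by Yao's principle.

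The main obstacle is making the information-theoretic step fully rigorous for \emph{randomized} algorithms with error: one has to set up the communication-complexity reduction carefully — treat the cut $(\{\ell\}\cup\{x_i\}, \{r\}\cup\{y_i\})$, observe that $r$'s transcript together with its own input and private randomness must determine the answer with good probability, and invoke a lower bound of the form $I(X; \text{transcript received by } r\text{'s side}) = \Omega(k)$ which, divided by the per-round bit capacity $O(c\log n)$ across the single crossing edge, yields the round bound. A subtlety worth flagging is that $y_i$ sits on $r$'s side of the cut but learns nothing about $X$ within the time budget (its only informative edge, the middle one, is too slow), so it contributes no ``side information'' that could shortcut the reduction; I would state this explicitly. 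Everything else — the MST structure, the latency-diameter bookkeeping $D = \Theta(\text{middle-edge latency}) = \omega(n/c)$ being irrelevant since we only need $D = \Omega(1)$ here and the separate ring gives the real $\Omega(D)$ — is routine.
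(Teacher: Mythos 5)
Your proposal matches the paper's proof essentially step for step: the same two-pole graph with $k=n/2-1$ three-hop paths and slow middle edges, the same observation that the choice of left-vs-right edge is an injective function of the random weight vector, the same Shannon-entropy count of $k$ bits that must cross the single fast edge $(\ell,r)$, and the same appeal to the ring construction from Theorem~\ref{thm:lb-l=w} for the $\Omega(D)$ term. The only difference is that you flag the randomized/Fano/Yao bookkeeping more explicitly than the paper does, which is a reasonable strengthening but not a new idea.
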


\subsection{Upper Bound} \label{subsubsec:ub-unequal}

Here, we provide an ${O}(D+n/c)$ time algorithm for constructing an MST when there is no relationship between an edge's latency and its weight. The algorithm is based on the pipeline algorithm \cite{Pipeline,Gopalsurvey} for the standard \textsf{CONGEST} model. 
The key idea here is to create a shortest path tree where nodes upcast information, starting from the leaf nodes while filtering non-essential information. The root computes the MST and broadcasts it to all the nodes.

\onlyLong{The basic outline of our algorithm is as follows. 
First, a particular node elects itself as the leader. Next, a shortest path tree w.r.t. latency is created with the leader as the root node. 
Nodes upcast information, starting from the leaf nodes while filtering non-essential information. The root computes the MST and broadcasts it to all the nodes.}

\onlyShort{\para{MST Algorithm for Arbitrary Weights and Latencies}}
To determine the MST, the leaf nodes of the shortest path tree start by upcasting its adjacent edges in non-decreasing order of weight. Intermediate nodes begin only after having received at least one message from each of its children in the shortest path tree. From the set of all the edges received until the current round (along with its own adjacent edges) intermediate nodes filter and upcast only the lightest $n-1$ edges that do not create a cycle. %
Notice that for any intermediate node, after it receives the first message from all of its children, all subsequent messages (at most $n-1$ from each child) arrive in a pipelined manner with an interval of $1/c$ (as edges have capacity $c$). Moreover, as the intermediate nodes start upcasting immediately after receiving the first message from each of its children, they also send at most $n-1$ messages up in a pipelined fashion, while filtering the heavier cycle edges. \onlyLong{Waiting for at least one message from each child in the shortest path tree and the fact that messages are always upcast in a non-decreasing order ensures that in every round (after receiving at least one message from each child), nodes have sufficient data to upcast the lightest edge.}

To identify edges that form a cycle, all nodes excepting the root, maintain two edge lists, $Q$ and $U$. Initially, for a vertex $v$, $Q_v$ contains all the edges adjacent to $v$, and $U_v$ is empty. At the time of upcast, $v$ determines the minimum-weight edge set in $Q_v$ that does not create any cycle with the edges in $U_v$ and upcasts it to its parent while moving this edge from $Q_v$ to $U_v$. Every parent node adds all the messages received from its children to its $Q$ list. Finally, $v$ sends a terminate message to its parent when $Q_v$ is empty. This filtering guarantees that each node upcasts at most $n-1$ edges to its parent.
As edges have capacity $c$, this requires at most $O(n/c)$ rounds. Considering any path of the shortest path tree from a leaf \onlyLong{node} to the root, the maximum number of messages that are sent in parallel at any point of time on this path is at most $n-1$. Since messages are \onlyLong{always} upcast in a pipelined fashion  the time complexity is $O(D+{n}/{c})$ rounds. As each node sends at most $n-1$ messages, the message complexity is $O(n^2)$. 
Thus, we have shown the following result:

\begin{theorem}
In the weighted CONGEST model, there exists an algorithm that computes the MST in $O(D + {n}/{c})$ rounds and with $O(m\log n + n^2)$ messages w.h.p.
\end{theorem}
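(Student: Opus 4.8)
The plan is to establish the final theorem by analyzing the three phases of the algorithm separately: (i) leader election and shortest path tree construction, (ii) the pipelined upcast that delivers the lightest non-cycle-forming edges to the root, and (iii) the downcast of the computed MST. For phase (i), I would invoke Lemma~\ref{lem:spt}, which already gives an $O(D)$-time, $O(m\log n)$-message randomized construction of a shortest path tree $\tau$ rooted at an elected leader, with high probability. This immediately contributes the $O(D)$ additive term to the running time and the $O(m\log n)$ additive term to the message complexity, and it is the only randomized ingredient, which is why the theorem is stated w.h.p.

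For phase (ii), the heart of the argument, I would first establish the filtering invariant: at any intermediate node $v$, the union $Q_v \cup U_v$ together with all edges it will still receive contains an MST of $G$ restricted to the relevant cut structure; more precisely, I would argue by the cycle property of MSTs that discarding the heaviest edge on any cycle never removes an MST edge, so each node safely upcasts at most $n-1$ edges. Next I would argue the pipelining bound: once $v$ has received at least one message from each of its children, it has, in every subsequent round, enough information to release its next-lightest valid edge, because messages travel in non-decreasing weight order along every child subtree; hence after an initial delay of at most $D$ (the latency depth of $\tau$) the messages flow to the root at a steady rate of one per $1/c$ rounds, and since at most $n-1$ messages traverse any root-to-leaf path, the total time is $O(D + n/c)$. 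The message count follows because each of the $n$ nodes sends at most $n-1$ edge messages (plus a terminate message), giving $O(n^2)$; combined with phase~(i) this yields $O(m\log n + n^2)$ messages.

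For phase (iii), I would note that once the root has collected the lightest $n-1$ acyclic edges it locally computes the MST and broadcasts the $O(n)$ resulting edge identifiers down $\tau$; this broadcast costs $O(D + n/c)$ time (again pipelined over edges of capacity $c$) and $O(n \cdot D_{\tau}) = O(n^2)$ messages, which is absorbed into the existing bounds. Assembling the three phases gives total time $O(D + n/c)$ and total messages $O(m\log n + n^2)$, w.h.p., as claimed.

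The main obstacle I anticipate is making the pipelining argument fully rigorous in the presence of arbitrary latencies and asynchrony-like behavior: I must carefully show that an intermediate node never stalls waiting for a child — i.e., that having seen one message from every child suffices to determine the globally next-lightest edge to forward, because any not-yet-arrived edge from a child has weight at least that of the last message received from that child. A secondary subtlety is bounding the additive latency term: the relevant quantity is the maximum latency-weighted depth of $\tau$, which is at most $D$ by construction of a shortest path tree (so that no edge of latency exceeding $D$ is ever needed), and I would state this explicitly to justify the $O(D)$ rather than $O(D_{\mathrm{SPT}} \cdot \text{something})$ term.
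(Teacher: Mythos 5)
Your proposal is correct and follows essentially the same route as the paper: invoke Lemma~\ref{lem:spt} for the $O(D)$-time, $O(m\log n)$-message shortest-path-tree construction, argue correctness via the cycle property and the non-decreasing-weight filtering invariant, decompose the convergecast time into an $O(D)$ latency term plus an $O(n/c)$ congestion term bounded by the at-most-$(n-1)$ messages per node, and absorb the symmetric downcast into the same bounds. The two subtleties you flag — that a node having one message from each child always suffices to release the next-lightest valid edge, and that the latency-weighted depth of $\tau$ is at most $D$ — are precisely the points the paper's proof also rests on.
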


\onlyLong{
\begin{proof}

The algorithm's correctness follows directly from the cycle property \cite{tarjan, Kleinberg:2005:AD:1051910}. The filtering rule of the algorithm ensures that any edge sent upward by a node $v$ does not close a cycle with the already sent edges (edges in list $U_v$). Since the edges are upcast in a non-decreasing order of weight, and intermediate nodes begin only after receiving at least one message from each of its neighbors, in ensures that each intermediate node has enough information to send the correct lightest edge of say weight $w$. This implies that, in no later round does that intermediate node receive a message of weight less than $w$. As such, the only edges filtered are the heaviest cycle edges, which implies that none of the MST edges are ever filtered. The root receives all the MST edges (and possibly additional edges) required to compute the MST correctly. Termination is guaranteed as each node sends at most $n-1$ edges upwards and then a termination message. For a more detailed correctness proof, refer to the proofs in \cite{Pipeline} and \cite{Gopalbook}. %

The time complexity is determined by the cost of creating the shortest path tree and the cost of doing a pipelined convergecast on this tree. 
The creation of the shortest path tree requires $O(D)$ time and $O(m\log n)$ messages (c.f. Lemma \ref{lem:spt}).
The pipelined convergecast is started by the leaf nodes by sending their lightest $n-1$ adjacent edges up.
Thereafter each intermediate node upcasts only after receiving at least one message from each of its children, and this upcast of lightest $n-1$ edges that does not create a cycle happens in a pipelined fashion. The maximum delay at any intermediate node would be due to waiting for the messages from its furthest sub-tree node. From the definition of the graph diameter, this delay is bounded by $D$. This implies that in the absence of congestion, the root node would receive all the required information in $O(D)$ time. Secondly, from the filtering (and cycle property), it is guaranteed that the congestion at any point is not more than $n-1$. As all edges in a path have capacity $c$, the delay due to congestion is at most $O(n/c)$. (Similar for root node broadcasting the MST over the shortest path tree). %
Therefore, the  total time complexity of weighted pipeline algorithm is $O(D+(n/c))$ (combining the cost of creating a shortest path tree with the cost of congestion). Furthermore, as each node can send at most $n-1$ edges, the message complexity is bounded by $O(m\log n + n^2)$.
\end{proof}
}

\section{Uniform Latencies, Different Weights} \label{subsubsec:ub-equal}

In this section, we consider all edges to have the exact same latency $\ell$, while each edge can have an arbitrarily different weight\onlyShort{.} \onlyLong{(ties can be broken using the node ids). It is here, where we emphasize the role of edge capacities in getting a faster solution. }Given that all edges have the same latency $\ell$, one would expect an $O(\ell)$ slowdown from the results for the standard unit-latency model. This is in fact true, if we consider the worst case capacity $c=1/\ell$, %
where a message can be sent over an edge only after the previous message has been delivered. However, for the case $c=1$, %
where a new message can be sent over an edge in every round; instead of a direct multiplicative factor slowdown to $\Tilde{O}(\ell(D'+\sqrt{n}))$ (where $D'$ is the graph diameter with unit latencies), we obtain an upper bound of $\Tilde{O}(D+\sqrt{n\ell})$ by exploiting the additional capacity by pipelining messages. 
Note, when all edge latencies are the same, then the latency diameter $D=\ell\cdot D'$. More generally, for any given latency $\ell \geq 1/c$, we give an algorithm that constructs an MST in $\Tilde{O}(D+\sqrt{{n\ell}/{c}})$ time.
This section best illustrates the power of having a larger edge capacity, which our algorithm leverages when pipelining messages over an edge.

\subsection{MST Algorithm for Uniform Latencies} 
To obtain an algorithm that is simultaneously both time and message optimal, we base our algorithm on Elkin's MST algorithm \cite{Elkin:2017:SDD:3087801.3087823} and the algorithm in Section \ref{subsubsec:ubcorrelated}. %
As earlier, the algorithm is primarily divided into two stages, the local and the global aggregation.

In order to obtain optimal time complexity, the key lies in determining the ideal switching point between stages and thereby obtaining a correct balance between the costs of local and global aggregation. With the presence of latencies and capacities, naively executing any unit latency MST construction algorithm (\cite{Elkin:2017:SDD:3087801.3087823}, \cite{Pandurangan:2017:TMD:3055399.3055449}) does not lead to this ideal balance point. Rather than only depending on the number of nodes $n$, this balance point %
now depend on the latency, capacity as well the number of nodes $n$. In this case, where all edges have latency $\ell$ and capacity $c$, the balance between the costs of local and global aggregation is achieved when the base fragment diameter equals $\sqrt{n\ell/c}$.

However, in order to obtain optimal message complexity, the algorithm distinguishes between two cases based on the latency diameter. If $D\leq \sqrt{n\ell/c}$, we build base fragments of diameter $\sqrt{n\ell/c}$ in the local aggregation phase, whereas when $D>\sqrt{n\ell/c}$ the base fragments are built with fragment diameter of $D$. As all edges have uniform latency, global aggregation is done using a BFS tree (rather than a shortest path tree) that is constructed using the BFS tree construction algorithm for the standard \textsf{CONGEST} model in $O(D)$ time and with $O(m)$ messages \cite{Elkin:2017:SDD:3087801.3087823}. Here also, the guarantee is that the number of fragments/components halves in each iteration.
We observe that, this careful determination of the parameter $\sqrt{n\ell/c}$ results in a speed-up (w.r.t. the expected running time of $\Tilde{O}(\ell(D'+\sqrt{n}))$, where $D'$ is the graph diameter with unit latencies) by taking advantage of the edge capacities for possible pipelining of messages.

\onlyShort{For the exact details of the MST algorithm for the uniform latency case, refer to the full version of the paper \cite{augustine2019latency}.
We summarize the result via the following theorem.}

\onlyLong{
\para{Algorithm} The algorithm begins by creating a BFS tree. Since all edges have uniform latency, we can construct a BFS tree using the BFS tree construction algorithm for the standard \textsf{CONGEST} model in $O(D)$ time and with $O(m)$ messages \cite{Elkin:2017:SDD:3087801.3087823}.
This can be easily done by scaling one round of the standard \textsf{CONGEST} model to $\ell$ rounds here. The time taken will be given by $O(\ell D')$ which is equal to $O(D)$. 

\para{Local Aggregation Stage} 
Local Aggregation begins with each node as a singleton fragment and thereafter in every iteration, fragments merge in a controlled and balanced manner, while ensuring that the number of fragments at least halve in each iteration. %
When $D\leq \sqrt{n\ell/c}$, we start by building base fragments of diameter $\sqrt{n\ell/c}$, whereas when $D>\sqrt{n\ell/c}$ the base fragments are built with fragment diameter of $D$.

In fact, we show that the total number of fragments that remain after the local aggregation part is $O(\sqrt{{cn\ell}})$, and the diameter of each fragment is at most $O(\sqrt{n\ell/c})$.%
}

\onlyLong{
Similar to the analysis of the algorithm in section \ref{subsubsec:ubcorrelated}, we define the set of fragments $\mathcal{F}_i$, the fragment graph $\mathcal{H}_i$, and the edge set $M_i'$. %
The pseudocode for building base fragments (when $D\leq \sqrt{n\ell/c}$) is shown in Algorithm \ref{algo:MSTF}  and uses similar techniques as in Section \ref{subsubsec:ubcorrelated} and the controlled-GHS algorithm of \cite{Gopalsurvey}.
}

\onlyLong{
\begin{algorithm} 
\setlength{\columnsep}{10pt}
\begin{algorithmic}[1]
\State $\mathcal{F}_1 = (V, \phi)$ // initial set consisting of $n$ (singleton) fragments.
\For {$i = 0, \dots , \lceil \log \sqrt{{n}/{c\ell}} \rceil$}
\State Each fragment $\mathcal{F} \in \mathcal{F}_i$ of diameter $\leq 2^i\ell$ determines the MOE of $\mathcal{F}$ and adds it to candidate set $M_i$.
\State Find a maximal matching $M_i' \subseteq M_i$ in the fragment graph $\mathcal{H}_i = (\mathcal{F}_i , M_i )$.	
\State If $\mathcal{F} \in \mathcal{F}_i$ of diameter at most $2^i\ell$ has no incident edge in $M_i'$, it adds its	
MOE edge into $M_i'$.	
\State  $\mathcal{F}_{i+1}$ is obtained by merging all the fragments along the edges selected in $M_i'$.	
\EndFor	

 \end{algorithmic}	
\caption{Uniform Local Aggregation: Outputs at most $\sqrt{c{n\ell}}$ base fragments of diameter $O(\sqrt{n\ell/c})$.}\label{algo:MSTF}	
\end{algorithm}	
}

 \onlyLong{

 \begin{lemma} \label{lem:fragdiam}	
At the start of the $i^{th}$ iteration, each fragment has diameter at most $O(2^i \ell)$. Specifically, at the end of the uniform local aggregation algorithm each fragment has diameter at most $O(\sqrt{n\ell/c})$. (hop diameter $O(\sqrt{{n}/{c\ell}})$)	
\end{lemma}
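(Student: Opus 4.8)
The plan is to prove the diameter bound for the uniform-latency local aggregation algorithm (Algorithm~\ref{algo:MSTF}) by an induction on the iteration number $i$, exactly paralleling the argument of Lemma~\ref{lem:frag_diam} in the equal-weights-and-latencies case. The key observation is that now the relevant ``size'' of a fragment is its \emph{hop} diameter (number of hops along fragment edges), and since every edge has latency $\ell$, a fragment with hop diameter $h$ takes $h\ell$ rounds to traverse. In iteration $i$ the algorithm only lets fragments of hop diameter at most $2^i$ (equivalently latency-diameter at most $2^i\ell$) search for their MOE and participate actively in the merging, so the structure of the induction is identical to the earlier proof with $2^i$ replaced by $2^i\ell$ for latency-diameter and $2^i$ for hop-diameter.

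First I would set up the induction hypothesis: at the start of iteration $i$, every fragment has hop diameter at most $6\cdot 2^i$. The base case $i=0$ is trivial since singleton fragments have hop diameter $0 \le 6$. For the inductive step, I would argue as in Lemma~\ref{lem:frag_diam}: a fragment grows only by merging along a matched edge of $\mathcal{H}_i$ together with possibly several unmatched fragments that attach their MOE edges (Line~5 of Algorithm~\ref{algo:MSTF}). Every actively participating fragment (one that found an MOE in this iteration) has hop diameter at most $2^i$ by the algorithm's condition; a matched edge can connect it to one fragment of hop diameter up to $6\cdot 2^i$; and at most a constant number (three, as in the earlier proof) of small fragments of hop diameter $\le 2^i$ can pile on via their single MOE edges, each contributing one additional hop. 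Adding these up gives hop diameter at most $6\cdot2^i + 3\cdot2^i + 3\cdot2^i \le 6\cdot2^{i+1}$ for $i\ge1$, closing the induction. Translating hops to latency, each fragment has latency diameter at most $6\cdot2^i\ell = O(2^i\ell)$ at the start of iteration $i$.

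Since the loop runs for $i = 0,\dots,\lceil\log\sqrt{n/(c\ell)}\rceil$, at termination we have $2^i = O(\sqrt{n/(c\ell)})$, so the final hop diameter is $O(\sqrt{n/(c\ell)})$ and the final latency diameter is $O(\ell\sqrt{n/(c\ell)}) = O(\sqrt{n\ell/c})$, which is exactly the claimed bound. I would remark that the case $D>\sqrt{n\ell/c}$ (running the loop up to $\lceil\log(D/\ell)\rceil$ iterations) follows identically, yielding fragments of hop diameter $O(D/\ell)$ and latency diameter $O(D)$.

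The main obstacle — or rather the only subtle point — is bookkeeping the constant blow-up correctly: one must be careful that the condition in Lines~3 and~5 is stated in terms of the \emph{current} fragment's diameter, so that a large passive fragment never reaches out through a long path, and that the maximal matching structure really does limit to a constant the number of small fragments that can attach to a single growing fragment in one iteration (each vertex of $\mathcal{H}_i$ is matched to at most one edge, and an unmatched fragment attaches via exactly one MOE edge, so the local tree structure around any grown fragment has bounded depth). Once that combinatorial fact is in hand — it is precisely what makes the earlier Lemma~\ref{lem:frag_diam} work — the rest is the routine geometric-series calculation already sketched above.
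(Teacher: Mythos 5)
Your proof is essentially the paper's own: the paper also proceeds by induction on the iteration number, arguing that at the start of iteration $i$ every fragment has latency diameter at most $C\cdot 2^i\ell$ (the paper uses $C=5$, working directly with latency diameter rather than first bounding hop diameter and multiplying by $\ell$ as you do, but the two views are equivalent since all edges have latency exactly $\ell$), and it closes the induction by the same count of a matched neighbor plus at most three attached unmatched fragments plus three MOE edges. One small internal inconsistency worth noting: your prose correctly says each of the three MOE edges contributes a single hop, yet your displayed bound $6\cdot 2^i + 3\cdot 2^i + 3\cdot 2^i$ carries over the $3\cdot 2^i$ MOE term verbatim from Lemma~\ref{lem:frag_diam} (where MOE edges can have weight up to $2^i$); in the uniform-latency case that term should be $3$, but since you are only overestimating, the induction still closes and the conclusion is unaffected.
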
	

 \begin{proof}	
We show via an induction on the iteration number $i$, that, at the start of the $i^{th}$ iteration, the diameter of each fragment is at most $5 \cdot 2^i \ell$.	
The base case, i.e., at the beginning of iteration $0$, the statement is trivially true, since $5 \cdot 2^i\ell = 5 \cdot 2^0\ell = 5\ell$	
which is greater than $0$, the diameter of a singleton fragment.	
For the induction hypothesis, assume that the diameter of each fragment at the start of	
the the $i^{th}$ iteration is at most $5 \cdot 2^i\ell$. We show that when the $i^{th}$ iteration ends  (i.e., at the start	
of iteration $i + 1$), the diameter of each fragment is at most $5 \cdot 2^{i+1}\ell$.	
We see that a fragment grows via merging with other fragments over a matching edge of the fragment graph. Also, from the algorithm, it is to be noted that	
at least one of the fragments that is taking part in the merging has diameter 	
at most $2^i\ell$ since only fragments with diameter at most $2^i\ell$ find MOE edges; the	
MOE edge may lead to a fragment with larger diameter, i.e., at most $5 \cdot 2^i\ell$. 	

 Additionally, some other fragments (with diameter at most $2^i\ell$) can possibly join with either merging fragments of the matching edge, if they did not have any adjacent matching edge (see Line $5$ of Algorithm \ref{algo:MSTF}). 	

 Therefore, the resulting diameter of the newly merged fragment at the end of	
iteration $i$ is at most $5 \cdot 2^i\ell + 3 \cdot 2^i\ell + 3\ell$, since the diameter of the combined fragment	
is determined by at most $4$ fragments, out of which at most one has diameter	
$5 \cdot 2^i\ell$ and the other three have diameter at most $2^i\ell$ and these are joined by $3$ MOE edges	
(which contributes to $3\ell$). Thus, the diameter at the end of iteration $i$	
is at most $5 \cdot 2^i\ell + 3 \cdot 2^i\ell + 3\ell \leq 8 \cdot 2^i\ell + 3\ell \leq 5 \cdot 2^{i+1}\ell$, for $i \geq 1$. 	

 Since the uniform local aggregation algorithm runs for $\lceil \log \sqrt{{n}/{c\ell}}\rceil$ iterations, the diameter of each fragment at the end of the algorithm is at most $O(2^{\lceil \log \sqrt{{n}/{c\ell}}\rceil}\ell) = O(\sqrt{{n}/{c\ell}}\cdot \ell) = O(\sqrt{n\ell/c})$.	
\end{proof}	

 \begin{lemma} \label{lem:fragsize}	
At the start of the $i^{th}$ iteration, each fragment has size at least $2^i$ and the number of fragments remaining is	
at most $n/2^i$. Specifically, at the end of the uniform local aggregation algorithm, the number of fragments remaining is at most $\sqrt{{cn\ell}}$.	
\end{lemma}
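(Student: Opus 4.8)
The plan is to prove both claims by a single induction on the iteration number $i$, carrying the invariant that at the start of iteration $i$ every fragment in $\mathcal{F}_i$ contains at least $2^i$ vertices. The bound $|\mathcal{F}_i| \le n/2^i$ on the number of fragments then comes for free: the fragments of $\mathcal{F}_i$ are vertex-disjoint and together cover $V$, so a lower bound on their sizes is an upper bound on their number, and it suffices to maintain the size invariant. The base case $i = 0$ is immediate, since $\mathcal{F}_1$ is the set of $n$ singletons. Structurally this mirrors the argument behind Lemmas~\ref{lem:frag_weight} and~\ref{lem:frag_number} from Section~\ref{sec:correlated}, with ``number of vertices'' replacing ``total weight''; the simplification here is that, since all edge latencies equal $\ell$, no fragment is ever ``blocked'' (its MOE has latency $\ell \le 2^i\ell$ for every $i \ge 0$).

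For the inductive step I would show that every fragment merges with at least one other fragment during iteration $i$. By Lemma~\ref{lem:fragdiam} every fragment has diameter $O(2^i\ell)$ at the start of iteration $i$, so every fragment is eligible in Line~3 of Algorithm~\ref{algo:MSTF} and proposes its MOE into $M_i$. The controlled-GHS merging rule --- take a maximal matching $M_i' \subseteq M_i$ in the fragment graph $\mathcal{H}_i$, then let each still-unmatched fragment add its own MOE edge to $M_i'$ --- leaves every fragment with degree at least one in $M_i'$, hence inside a connected component of $(\mathcal{F}_i, M_i')$ of size at least two. Contracting these components yields $\mathcal{F}_{i+1}$, so every fragment of $\mathcal{F}_{i+1}$ is a union of at least two fragments of $\mathcal{F}_i$, each of size $\ge 2^i$ by the hypothesis, and therefore has size $\ge 2^{i+1}$. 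For the ``specifically'' part I would instantiate the invariant at $i = \lceil \log \sqrt{n/(c\ell)} \rceil$: when the algorithm stops, every remaining fragment has at least $2^{\lceil \log \sqrt{n/(c\ell)} \rceil} \ge \sqrt{n/(c\ell)}$ vertices, so at most $n / \sqrt{n/(c\ell)} = \sqrt{cn\ell}$ of them remain.

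The step I expect to be the real obstacle is justifying that every fragment participates in every iteration. Lemma~\ref{lem:fragdiam} only bounds a fragment's diameter by a constant multiple of $2^i\ell$ (its proof gives $5 \cdot 2^i\ell$), which exceeds the eligibility threshold $2^i\ell$ stated in Line~3 of Algorithm~\ref{algo:MSTF}; read literally, a ``stretched-out'' fragment --- say a path on roughly $2^i$ vertices --- can be ineligible for a few consecutive iterations and thus skip some merges, which would break the clean per-iteration doubling. There are two natural fixes: (a) set the eligibility threshold in Line~3 to exactly the constant supplied by Lemma~\ref{lem:fragdiam}, so every fragment is always eligible and the induction goes through verbatim; or (b) keep the threshold at $2^i\ell$ but note that an ineligible fragment still has diameter at most $5 \cdot 2^i\ell$, hence becomes eligible again within at most three more iterations, so each fragment is stalled for only $O(1)$ iterations and its size still grows by a factor $2$ every $O(1)$ iterations --- yielding a final count of $O(\sqrt{cn\ell})$, which is what the later sections actually use. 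In either case one should also confirm, as the proof of Lemma~\ref{lem:fragdiam} already does, that Line~5 attaching several unmatched fragments to a single matched fragment does not blow the diameter past the $O(2^i\ell)$ guarantee.
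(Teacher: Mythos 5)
Your inductive framework matches the paper's --- same base case, same size invariant, and the ``fragments are disjoint, so a lower bound on size is an upper bound on count'' reduction --- but your inductive step takes a different and ultimately weaker route. You try to show that \emph{every} fragment in $\mathcal{F}_i$ is eligible in Line~3 and hence merges, and you correctly observe that Lemma~\ref{lem:fragdiam} only gives a diameter bound of $5\cdot 2^i\ell$, which does not imply eligibility ($\le 2^i\ell$). The paper does not attempt this. Instead it cases on the diameter of the fragments in $\mathcal{F}_{i+1}$ and exploits the fact (specific to \emph{uniform} latency $\ell$, and entirely absent from your proposal) that latency diameter is a proxy for size: a fragment of latency diameter $\ge 2^{i+1}\ell$ has a path with $\ge 2^{i+1}$ hops, hence $> 2^{i+1}$ vertices, so it satisfies the invariant with no merging argument at all. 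For the remaining fragments (diameter $< 2^{i+1}\ell$) the paper argues they must have resulted from a merge during iteration $i$, at which point the induction hypothesis applied to the two constituents gives the doubling. This diameter-to-size observation is the key idea you should have found; it is also precisely what makes the analogue in Section~\ref{sec:correlated} (Lemmas~\ref{lem:frag_weight}--\ref{lem:frag_number}) require the blocked/non-blocked distinction, whereas here a single count suffices.

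That said, the concern you flag is legitimate and, if read strictly, also touches the paper's proof. A fragment of $\mathcal{F}_i$ whose latency diameter lies in $(2^i\ell,\,2^{i+1}\ell)$ is not eligible in Line~3, need not be matched by any neighbor, and therefore may survive unchanged into $\mathcal{F}_{i+1}$; such a fragment has at least $2^i+2$ but possibly fewer than $2^{i+1}$ vertices, so the paper's Case~2 claim that ``all such fragments merge'' is not justified as written. Your proposed fix (a), widening the eligibility threshold in Line~3 to the constant delivered by Lemma~\ref{lem:fragdiam}, is a clean way to close this hole; your fix (b) only yields the weaker claim that fragments double every $O(1)$ iterations, which changes the iteration count in Lemma~\ref{lem:base} and is not what the paper argues. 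So: your approach is genuinely different (universal eligibility vs.\ the diameter/size case split), you missed the paper's central uniform-latency trick, but your diagnosis of the eligibility-threshold mismatch is sharper than the paper's own treatment of it.
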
	

 \begin{proof}	
We prove the above lemma via an induction on the iteration number $i$. For the base case, i.e. at the start of the iteration $0$, there exist only singleton fragments which are of size at least $2^0$. For the induction hypothesis, we assume that the statement is true for iteration $i$, i.e. at the start of the $i^{th}$ iteration, the size of each fragment is at least $2^{i}$ and show that the statement also holds for iteration $i+1$, i.e. at the start of iteration $i+1$, the size of each fragment is at least $2^{i+1}$. To show this, consider all the fragments in iteration $i+1$, each fragment would either have diameter $\geq 2^{i+1}\ell$ or less than that. It is easy to see that if a fragment has diameter of $\geq 2^{i+1}\ell$, it would have greater than $\geq 2^{i+1}$ nodes, as latency of each edge is $\ell$. For the second case, where the fragment diameter is $< 2^{i+1}\ell$, we know from the algorithm (see line $3$ and $5$), that all such fragments merge with at least one more fragment. This other fragment is of size at least $2^i$ (from the induction hypothesis), and therefore the size of the resulting fragment at least doubles  i.e., becomes at least $2^{i+1}$. %

 Since fragments are disjoint, this implies that the number of fragments	
at the start of the $i^{th}$ iteration, is at most $n/2^i$. Thus, after $\log \sqrt{{n}/{c\ell}}$ iterations, the number of fragments is at most $\sqrt{{cn\ell}}$.	
\end{proof}	
}

\onlyLong{
\begin{lemma} \label{lem:base}	
Uniform Local Aggregation algorithm outputs at most $\sqrt{{cn\ell}}$ MST fragments each of diameter at most $O(\sqrt{n\ell/c})$ in $O(\sqrt{{n\ell}/{c}} \log^* n)$ rounds and requiring $O(m+ n \log^* n \log ({n}/{c\ell}))$ messages. 
\end{lemma}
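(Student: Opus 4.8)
The plan is to follow the template of the proof of Lemma~\ref{lem:CGHS_result}, adapted to the diameter-based growth control of Algorithm~\ref{algo:MSTF} and to uniform edge latency $\ell$. The two structural claims of the lemma are already in hand: Lemma~\ref{lem:fragsize} gives that at most $\sqrt{cn\ell}$ fragments survive the $\lceil\log\sqrt{n/c\ell}\rceil$ iterations, and Lemma~\ref{lem:fragdiam} gives that each surviving fragment has weighted diameter $O(\sqrt{n\ell/c})$ (hop-diameter $O(\sqrt{n/c\ell})$) and, more generally, that every fragment at the start of iteration $i$ has hop-diameter $O(2^i)$ and weighted diameter $O(2^i\ell)$. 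So what remains is to bound the rounds and messages spent over the $\lceil\log\sqrt{n/c\ell}\rceil = O(\log(n/c\ell))$ iterations.

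For the time bound I would analyze a single iteration $i$ and then sum geometrically. Iteration $i$ consists of (a) each node locating its lightest outgoing edge, (b) a convergecast of the per-node candidates to the fragment leader keeping only the lightest, (c) computing a maximal matching $M_i'$ in the fragment graph $\mathcal{H}_i$, and (d) adding the unmatched fragments' MOE edges and propagating the new fragment ids. Since all edges have latency $\ell$, a single edge-test in (a) is an $O(\ell)$ round trip, and since a node only advances its persistent weight-ordered pointer until it hits an outgoing edge, (a) costs $O(\ell)$ per node per iteration plus a re-test of the one boundary edge that may have turned internal. Step (b) runs over fragment (MST) edges with one message per edge, so it costs $O(2^i\ell)$ rounds by the hop-diameter bound, with no congestion. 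For (c) we run the $O(\log^* n)$-round deterministic symmetry-breaking routine of~\cite{matching, Pandurangan:2017:TMD:3055399.3055449} on the forest $\mathcal{H}_i$, emulating each of its rounds by letting neighboring fragment leaders exchange messages through their fragments and across the connecting MOE edge; as fragment diameters are $O(2^i\ell)$ and the MOE has latency $\ell$, one emulated round takes $O(2^i\ell)$ rounds, so (c) takes $O(2^i\ell\log^* n)$ rounds, and (d) takes another $O(2^i\ell)$. Summing $O(2^i\ell\log^* n)$ over $i=0,\dots,\lceil\log\sqrt{n/c\ell}\rceil$ is geometric and dominated by the last term $O\big(\sqrt{n/(c\ell)}\cdot\ell\cdot\log^* n\big) = O(\sqrt{n\ell/c}\,\log^* n)$.

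For the message bound I would split into edge-testing, convergecast/downcast, symmetry-breaking emulation, and the $M_i'$ additions. By the standard GHS-style amortization each edge is tested and found internal at most once over the whole run, contributing $O(m)$; the per-iteration re-test of the single boundary edge of each node contributes $O(n)$ per iteration. Steps (b) and (d) send $O(n)$ messages per iteration each (one per fragment edge), and emulating a round of symmetry breaking over MST edges costs $O(n)$ messages, so step (c) costs $O(n\log^* n)$ messages per iteration. Over $O(\log(n/c\ell))$ iterations the non-$m$ terms sum to $O(n\log^* n\log(n/c\ell))$, giving a total of $O(m + n\log^* n\log(n/c\ell))$.

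The part I expect to be the main obstacle is the careful verification of step (c): that a single round of the deterministic matching routine on the forest $\mathcal{H}_i$ can genuinely be emulated within $O(2^i\ell)$ rounds using only $O(n)$ MST-edge messages, even though a fragment may contain $\Theta(2^i)$ nodes and have arbitrary degree, and that the overall $O(\log^* n)$ round count of that routine is preserved under the emulation. A smaller point to pin down is the amortization in (a) --- confirming that an edge found internal is never re-tested and that only a node's current boundary edge is re-tested --- so that re-testing contributes $O(n)$ per iteration rather than $O(m)$. Beyond these, the lemma follows by recombining the already-established Lemmas~\ref{lem:fragdiam} and~\ref{lem:fragsize} with a geometric sum.
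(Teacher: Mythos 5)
Your proposal is correct and follows essentially the same route as the paper's own proof: it invokes Lemmas~\ref{lem:fragdiam} and~\ref{lem:fragsize} for the fragment count and diameter, decomposes each iteration into MOE-finding, intra-fragment convergecast, and matching-based merging, bounds iteration $i$ by $O(2^i\ell\log^* n)$ rounds and sums geometrically, and uses the same GHS-style amortization (each edge tested once as internal, plus an $O(n)$-per-iteration re-test term) together with $O(n)$ MST-edge messages per simulated round of the $O(\log^* n)$ symmetry-breaking routine. The two caveats you flag --- the emulation cost of one matching round and the boundary-edge re-test accounting --- are exactly the points the paper states without elaboration, so your plan is faithful to, and no less careful than, the paper's argument.
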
	
	
\begin{proof}	

 Each iteration of the uniform local aggregation algorithm performs three major functions, namely finding the MOE, convergecast within the fragment and merging with adjacent fragment over the matched MOE edge. 	
For finding the MOE, in each iteration, every node checks each of its neighbor (in $O(\ell)$ time) in non-decreasing order of weight of the connecting edge starting from the last checked edge (from the previous iteration). Thus, each node contacts each of its neighbors at most once, except for the last checked node (which takes one message per iteration). Hence total message complexity (over $\log \sqrt{{n}/{c\ell}}$ iterations) is 	
\[ \sum_{v\in V}2d(v) + \sum_{i=1}^{\log \sqrt{{n}/{c\ell}}} \sum_{v\in V}1 = m+n\left(\frac{1}{2} \log \frac{n}{c\ell}\right)\] \[ = O\left(m+ n \log \frac{n}{c\ell}\right) \] where $d(v)$ refers to the degree of a node. 	
The fragment leader determines the MOE for a particular iteration $i$, by convergecasting over the fragment, which requires at most $O(2^i\ell)$ rounds since the diameter of any fragment is bounded by $O(2^i\ell)$ (by Lemma \ref{lem:fragdiam}). 	
The fragment graph, being a rooted tree, uses a $O(\log^* n)$ round deterministic symmetry-breaking algorithm \cite{matching, Pandurangan:2017:TMD:3055399.3055449} to obtain the required matching edges in the case without latencies.	
Taking into account the required scale-up in case of the presence of latencies (one round for the non-latency case would be simulated as $\ell$ round in this case), the symmetry breaking algorithm	
is simulated by the leaders of neighboring fragments by communicating with	
each other; since the diameter of each fragment is bounded by $O(2^i\ell)$, the time	
needed to simulate one round of the symmetry breaking algorithm in the $i^{th}$ iteration	
is $O(2^i\ell)$ rounds. Also, as only the MST edges (MOE edges) are used in communication, the total	
number of messages needed is $O(n)$ per round of simulation. Since there are $\log \sqrt{{n}/{c\ell}} =	
O(\log ({n}/{c\ell}))$ iterations, the total time and message complexity for getting the	
maximal matching is $O(\sum^{ \log \sqrt{{n}/{c\ell}}}_{i=0} 2^i\ell \log^* n) = O(\sqrt{n\ell/c} \cdot \log^* n)$ and $O(n \log^* n)$	
respectively. Afterwards, adding selected edges into $M_i'$ (Line $5$ of the uniform local aggregation algorithm) can be done with additional $O(n)$ message complexity and $O(2^i\ell)$ time complexity in iteration $i$. The mergings are done over the matching edges and require one more round of convergecast to inform the nodes regarding the new fragment leader. This also takes $O(2^i\ell)$ time and $O(n)$ messages.	
Since there are $\log \sqrt{{n}/{c\ell}} = O(\log ({n}/{c\ell}))$ iterations in the 	
uniform local aggregation algorithm, the overall message complexity of the algorithm is	
$O(m+ n \log^* n \log ({n}/{c\ell}))$ and the overall time complexity is $O(\sqrt{{n\ell/c}} \cdot \log^* n)$.	
\end{proof}	
}
\onlyLong{	
\para{Global Aggregation Stage} \label{subsec:compmerge}	
Here components are merged using a BFS tree. %
The merging follow a similar procedure as in Section \ref{subsubsec:ubcorrelated}, however, the upcasting can now be done in a synchronous fashion as the latencies are uniform. We show that upcasting the MOE edges require  $O(D+\frac{1}{c} \sqrt{cn\ell})$ time and at most $O(D_{BFS}\cdot \sqrt{cn\ell})$ messages, where $D_{BFS}$ is the hop diameter of the BFS tree. However as $D$  is $\leq \sqrt{n\ell/c}$, it implies that $D_{BFS} \leq \sqrt{n/c\ell}$, which further implies that $O(D_{BFS}\cdot \sqrt{cn\ell}) = O(n)$. This trick of differentiating based on $D$ helps in limiting the messages to $\tilde{O}(m)$.}

\onlyLong{		
\begin{lemma} \label{lem:cm}	
For the case when $D \leq \sqrt{n\ell/c}$, merging components using the BFS tree requires $O((D+\sqrt{n\ell/c})\log n)$ rounds and ${O}(m\log n)$ messages.	
\end{lemma}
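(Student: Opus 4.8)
The plan is to bound the cost of a single iteration of the global aggregation stage, argue that the number of components is at least halved per iteration, and then multiply by the resulting $O(\log n)$ iteration count. First I would collect the facts already in hand: by Lemma~\ref{lem:base} the local aggregation stage leaves at most $\sqrt{cn\ell}$ base fragments, each of diameter $O(\sqrt{n\ell/c})$; and since all latencies are uniform, the BFS tree $\tau$ can be built by simulating the standard \textsf{CONGEST} BFS construction of \cite{Elkin:2017:SDD:3087801.3087823} with an $\ell$-round slowdown, costing $O(\ell D')=O(D)$ time and $O(m)$ messages, where $D'=D/\ell$ denotes the hop-diameter of $\tau$.

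Next I would analyze a single iteration as four phases. (1) Every node recomputes its MOE with respect to its current component by scanning its incident edges in non-decreasing weight order, resuming from the edge examined last; this takes $O(D)$ time and, amortized over all iterations, $O(m)$ messages in total (plus an $O(n)$ per-iteration overhead for the re-examined boundary edge). (2) The lightest such edge is convergecast to the base fragment leader along fragment edges only; since a base fragment has diameter $O(\sqrt{n\ell/c})$, this costs $O(\sqrt{n\ell/c})$ time and $O(n)$ messages. (3) Each of the at most $\sqrt{cn\ell}$ base fragment leaders upcasts the lightest outgoing edge of its component along $\tau$, where an intermediate node forwards as soon as it has received at least one message from every child; consequently, after the first message along any leaf-to-root path the remaining ones arrive pipelined at spacing $1/c$, so this phase takes $O\!\left(D'\cdot\ell+\frac{1}{c}\sqrt{cn\ell}\right)=O(D+\sqrt{n\ell/c})$ time and $O(D'\cdot\sqrt{cn\ell})$ messages. (4) The root simulates the local-aggregation merging rule on the received component MOEs and downcasts the new component ids along $\tau$ and then along fragment edges, again in $O(D+\sqrt{n\ell/c})$ time and $O(n)$ messages.

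The step that makes the message bound go through is the case hypothesis $D\le\sqrt{n\ell/c}$: because the latencies are uniform, $D=\ell D'$, hence $D'\le\sqrt{n/(c\ell)}$ and therefore $D'\cdot\sqrt{cn\ell}\le\sqrt{n/(c\ell)}\cdot\sqrt{cn\ell}=n$, so phases (3) and (4) each send only $O(n)$ messages per iteration. Finally, exactly as in the local-aggregation analysis the set of chosen merges forms a maximal matching in the component graph augmented by one MOE per unmatched component, so every component merges with at least one other one and their number drops by at least a factor of two; starting from at most $\sqrt{cn\ell}\le n$ components, this gives $O(\log n)$ iterations. Summing, the time is $O(D)+O(\log n)\cdot O(D+\sqrt{n\ell/c})=O((D+\sqrt{n\ell/c})\log n)$ and the message complexity is $O(m)$ for $\tau$ plus $O(m)$ amortized for the MOE scans plus $O(n\log n)$ for the remaining per-iteration work, i.e.\ $O(m\log n)$ in total.

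I expect the main obstacle to be the correctness of the pipelining in phase~(3): one has to verify that the ``wait for one message from every child'' rule, together with the fact that edges are upcast in non-decreasing weight order, guarantees that at every step an intermediate node already holds the next component edge it needs to forward, so that no message is delayed beyond the $O(D+\sqrt{n\ell/c})$ budget. The remaining accounting is routine given the uniform-latency structure, which essentially makes the whole stage a synchronous computation scaled by $\ell$ with pipelining factor $c$.
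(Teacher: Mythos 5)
Your proposal is correct and follows essentially the same decomposition as the paper's proof: bound the per-iteration cost of MOE computation, intra-fragment convergecast, pipelined upcast along the BFS tree, and downcast; use the case hypothesis $D\le\sqrt{n\ell/c}$ together with $D'=D/\ell$ to get $D'\cdot\sqrt{cn\ell}\le n$ so that BFS-tree traffic is $O(n)$ per iteration; and multiply by the $O(\log n)$ iterations coming from the halving invariant. The only cosmetic differences are that you bound MOE determination by $O(D)$ time where the paper uses $O(\ell)$ (both suffice since $\ell\le D$), and you amortize the MOE-scan messages slightly more carefully, whereas the paper simply charges $O(m)$ per iteration — either accounting yields the stated $O(m\log n)$.
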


\begin{proof}	

 In the first step of the iteration, determining the MOE requires $O(\ell)$ time and $O(m)$ messages. Upcasting the MOE's to the base fragment leader requires $O(\sqrt{n\ell/c})$ rounds (base fragment diameter) and $O(n)$ messages (as only lightest MOE edge is upcast along a fragment). Next the base fragment leaders upcast to the root of the BFS tree in a synchronous fashion. For each base fragment ($\sqrt{cn\ell}-1$ many) only one MOE edge is upcast to the root of the shortest path tree. Therefore, the maximum possible time required for upcasting to the root is  $O(D+\frac{1}{c} \sqrt{cn\ell})$ and the number of messages sent is at most $O(D_{BFS}\cdot \sqrt{cn\ell})$, where $D_{BFS}$ is the hop diameter of the BFS tree. However as $D$ in this case is $\leq \sqrt{n\ell/c}$, it implies that $D_{BFS} \leq \sqrt{n/c\ell}$, which further implies that $O(D_{BFS}\cdot \sqrt{cn\ell}) = O(n)$. Once the root of the BFS tree obtains the MOE edges of all the components, it locally computes the component mergings (by locally simulating the uniform local aggregation algorithm) and thereafter informs all the fragment leaders of their updated component ids requiring $O(D+\frac{1}{c} \sqrt{cn\ell})$  and $O(D_{BFS}\cdot \sqrt{cn\ell}) = O(n)$ messages. Finally, the base fragment leaders inform all the nodes of the base fragment which requires at most $O(\sqrt{n\ell/c})$ rounds and $O(n)$ messages. %

 The cost of one iteration of merging using the BFS Tree is  $(O(\ell)+ O(\sqrt{n\ell/c}) + O(D+\sqrt{n\ell/c}))) = O(D+\sqrt{n\ell/c})$ time and $O(m+n+n) = O(m)$ messages. Since there are $\log n$ iterations the time complexity becomes $O((D+\sqrt{n\ell/c})\log n)$ and the message complexity is $O(m\log n)$.	
\end{proof}	
}

\onlyLong{
 The overall time and message complexity is determined by the cost of creating the base fragments along with the cost of merging the mst-components using the BFS tree. Therefore, the running time for the case $D \leq \sqrt{n\ell/c}$ is calculated as $O(\sqrt{{n\ell}/{c}} \log^* n) + O((D+\sqrt{n\ell/c})\log n) = {O}(\sqrt{n\ell/c} \log n)$ and the message complexity is $O(m+ n \log^* n \log ({n}/{c\ell}))+ O(m\log n) = {O}(m\log n+ n \log^* n \log ({n}/{c\ell}))$.	
}

\onlyLong{
 The case where $D> \sqrt{n\ell/c}$, the algorithm runs in a similar fashion except here the base fragments grow until their diameter equals the graph diameter $D$. In this case, the uniform local aggregation algorithm runs for $\log (D/\ell) \leq O(\log n)$  iterations (as all edge latencies are $\ell$, in worst case $D=n\ell$ ) and outputs at most $O(n\ell/D)$ fragments, requiring a running time of $O(D\cdot \log^*n)$ time and $O(m + n\log (D/\ell)\log^*n)$ messages. Thereafter, the mergings over the BFS tree require $O((D+ (1/c)(n\ell/D))\log n) = O((D+\sqrt{n\ell/c})\log n)$ time (as $D> \sqrt{n\ell/c}$) and $O((m+n+n)\log n)$ messages. We see that for this case as well, we obtain a time complexity of ${O}((D+ \sqrt{n\ell/c})\log n)$ and a message complexity of ${O}(m\log n + m + n\log^*n\log n)$. These results are proved using the following lemmas.
}

 \onlyLong{	
\begin{lemma} \label{lem:base2}	
Uniform Local Aggregation algorithm for the case where  $D> \sqrt{n\ell/c}$ runs for $\log (D/\ell)$  iterations and outputs at most $O(n\ell/D)$ MST fragments, each of diameter at most $O(D)$ in $O(D\cdot \log^*n)$ rounds and requiring $O(m + n\log^*n\log n)$ messages.	
\end{lemma}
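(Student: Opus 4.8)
The plan is to mirror, almost verbatim, the analysis of the $D\le\sqrt{n\ell/c}$ case in Lemmas~\ref{lem:fragdiam}--\ref{lem:base}; the only structural change is that the loop of Algorithm~\ref{algo:MSTF} (Uniform Local Aggregation) is now run for $i=0,\dots,\lceil\log(D/\ell)\rceil$ iterations instead of $\lceil\log\sqrt{n/(c\ell)}\rceil$. First I would note that Lemmas~\ref{lem:fragdiam} and~\ref{lem:fragsize} carry over unchanged: each of their inductions argues only about a single step $i\to i+1$ (diameter at most $5\cdot 2^i\ell$ becomes at most $5\cdot 2^{i+1}\ell$; size at least $2^i$ becomes at least $2^{i+1}$, using that every fragment of diameter at most $2^i\ell$ has an outgoing edge and hence participates in a merge, so there is no ``blocked'' case to track as in Section~\ref{sec:correlated}), and neither step references the total iteration count. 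Hence at the start of iteration $i$ every fragment still has diameter $O(2^i\ell)$ and at least $2^i$ nodes, so at most $n/2^i$ fragments remain.

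Substituting the terminal index $i=\lceil\log(D/\ell)\rceil$ immediately gives the two structural claims of the lemma: each base fragment has diameter $O(2^{\log(D/\ell)}\ell)=O(D)$, and the number of base fragments is at most $n/2^{\log(D/\ell)}=O(n\ell/D)$. At this point I would also record the elementary fact that, since every edge has latency $\ell$, the latency diameter obeys $D\le(n-1)\ell$, so $\log(D/\ell)\le\log n$; this is what lets the iteration count, and every $\log(D/\ell)$ factor below, be absorbed into $\log n$.

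For the round and message bounds I would replay the accounting of Lemma~\ref{lem:base} iteration by iteration. Computing each fragment's MOE costs each node at most one probe per incident edge over the whole run plus one extra probe per iteration, i.e.\ $\sum_v 2d(v)+\sum_{i}\sum_v 1=O(m+n\log(D/\ell))=O(m+n\log n)$ messages; in iteration $i$ the intra-fragment convergecast, the simulated $O(\log^*n)$-round maximal-matching step (each simulated round costing $O(2^i\ell)$ real rounds, since the fragment diameter and the MOE latency are both $O(2^i\ell)$), the addition of unmatched fragments' MOEs, and the merge/leader-update step each take $O(2^i\ell)$ rounds and $O(n)$ messages, the matching step carrying an extra $\log^*n$ factor on both. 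Summing over $i=0,\dots,\lceil\log(D/\ell)\rceil$, the running time is a geometric series dominated by its last term, $O(\sum_i 2^i\ell\log^*n)=O(2^{\log(D/\ell)}\ell\log^*n)=O(D\log^*n)$, and the message total is $O(m+n\log(D/\ell))+O(n\log^*n\log(D/\ell))=O(m+n\log^*n\log n)$, as claimed.

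There is no genuinely hard step: this lemma is the $D>\sqrt{n\ell/c}$ twin of Lemma~\ref{lem:base}, exactly as Corollaries~\ref{corr:frag_diam}--\ref{corr:frag_number} are the $D>\sqrt{W/c}$ twins of the structural bounds of Section~\ref{sec:correlated}. The only two points that need a moment's care are (i) confirming that the invariants of Lemmas~\ref{lem:fragdiam}--\ref{lem:fragsize} are truly iteration-local, so that running the loop to a larger index remains legitimate, and (ii) the bound $\log(D/\ell)=O(\log n)$, which is what converts the raw estimates into the stated $O(D\log^*n)$ rounds and $O(m+n\log^*n\log n)$ messages.
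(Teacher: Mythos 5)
Your proof is correct and follows essentially the same approach as the paper's: reuse the iteration-local inductions of Lemmas~\ref{lem:fragdiam} and~\ref{lem:fragsize}, substitute the new terminal index $i=\lceil\log(D/\ell)\rceil$ to get the $O(D)$ diameter and $O(n\ell/D)$ fragment-count bounds, and then replay the per-iteration round/message accounting of Lemma~\ref{lem:base} with the geometric series now summing to $O(D\log^* n)$. You are slightly more explicit than the paper on two helpful points that the paper leaves implicit, namely that the uniform-latency algorithm has no blocked fragments (so Lemma~\ref{lem:fragsize} applies without the extra bookkeeping of Section~\ref{sec:correlated}), and that $D\le(n-1)\ell$ gives $\log(D/\ell)\le\log n$, which is the step that turns the raw $O(m+n\log^* n\log(D/\ell))$ message bound into the $O(m+n\log^* n\log n)$ stated in the lemma.
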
	

 \begin{proof}	

 From Lemma \ref{lem:fragdiam}, we see that for the uniform local aggregation algorithm at the start of the $i^{th}$ iteration, each fragment has diameter at most $O(2^i \ell)$.	
Therefore after  $\log (D/\ell)$ iterations, the fragment diameter would be at most $2^{\log D/\ell}\cdot\ell = O(D)$.
 From Lemma \ref{lem:fragsize}, we see that at the start of the $i^{th}$ iteration, each fragment has size at least $2^i$ and the number of fragments remaining is at most $n/2^i$. Therefore after  $\log (D/\ell)$ iterations here, the fragment size is at least $2^{\log D/\ell} = O(D/\ell)$ and the number of fragments remaining is at most $n/2^{\log D/\ell} = O(n\ell/D)$. Since here $D> \sqrt{n\ell/c}$, it implies that the number of fragments is $O(n\ell/D) \leq O(n\ell/\sqrt{n\ell/c}) = O(\sqrt{cn\ell})$.	

 Thereafter to determine the time and message complexity, we give a similar analysis as Lemma \ref{lem:base}	
Each iteration of the uniform local aggregation algorithm performs three major functions, namely finding the MOE, convergecast within the fragment and merging with adjacent fragment over the matched MOE edge. 	
For finding the MOE, in each iteration, every node checks each of its neighbor (in $O(\ell)$ time) in non-decreasing order of weight of the connecting edge starting from the last checked edge (from the previous iteration). Thus, each node contacts each of its neighbors at most once, except for the last checked node (which takes one message per iteration). Hence total message complexity (over $\log D/\ell$ iterations) is 	
\[ \sum_{v\in V}2d(v) + \sum_{i=1}^{\log D/\ell} \sum_{v\in V}1 = m+n\left( \log \frac{D}{\ell}\right)\] \[ = O\left(m+ n \log \frac{D}{\ell}\right) \] where $d(v)$ refers to the degree of a node. 	
The fragment leader determines the MOE for a particular iteration $i$, by convergecasting over the fragment, which requires at most $O(2^i\ell)$ rounds since the diameter of any fragment is bounded by $O(2^i\ell)$ (by Lemma \ref{lem:fragdiam}). 	
The fragment graph, being a rooted tree, uses a $O(\log^* n)$ round deterministic symmetry-breaking algorithm \cite{matching, Pandurangan:2017:TMD:3055399.3055449} to obtain the required matching edges in the case without latencies.	
Taking into account the required scale-up in case of the presence of latencies (one round for the non-latency case would be simulated as $\ell$ round in this case), the symmetry breaking algorithm	
is simulated by the leaders of neighboring fragments by communicating with	
each other; since the diameter of each fragment is bounded by $O(2^i\ell)$, the time	
needed to simulate one round of the symmetry breaking algorithm in iteration $i$	
is $O(2^i\ell)$ rounds. Also, as only the MST edges (MOE edges) are used in communication, the total	
number of messages needed is $O(n)$ per round of simulation. Since there are $O(\log D/\ell)$ iterations, the total time and message complexity for getting the maximal matching is $O(\sum^{ \log D/\ell}_{i=0} 2^i\ell \log^* n) = O(D \cdot \log^* n)$ and $O(n \log^* n)$	
respectively. Afterwards, adding selected edges into $M_i'$ (Line $5$ of the uniform local aggregation algorithm) can be done with additional $O(n)$ message complexity and $O(2^i\ell)$ time complexity in iteration $i$. The mergings are done over the matching edges and require one more round of convergecast to inform the nodes regarding the new fragment leader. This also takes $O(2^i\ell)$ time and $O(n)$ messages.	
Since there are $\log D/\ell$ iterations in the 	
uniform local aggregation algorithm when $D> \sqrt{n\ell/c}$, the overall message complexity of the algorithm is	
$O(m+ n \log^* n \log (D/\ell))$ and the overall time complexity is $O(D \log^* n)$.	
\end{proof}	
}	

 \onlyLong{	
\begin{lemma} \label{lem:cm2}	
For the case when $D > \sqrt{n\ell/c}$, merging components using the BFS tree requires $O((D+\sqrt{n\ell/c})\log n)$ rounds and ${O}(m\log n)$ messages.	
\end{lemma}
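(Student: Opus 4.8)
The plan is to mirror the argument used for Lemma~\ref{lem:cm}, adapting the parameters to the regime $D > \sqrt{n\ell/c}$ and feeding in the guarantees of Lemma~\ref{lem:base2} in place of Lemma~\ref{lem:base}. Concretely, in this regime local aggregation leaves at most $O(n\ell/D)$ base fragments, each of diameter $O(D)$, and since $D > \sqrt{n\ell/c}$ we have $n\ell/D = O(\sqrt{cn\ell})$; this last inequality is the quantitative fact that will be used repeatedly to absorb the congestion term.

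First I would analyze the cost of a single iteration of global aggregation. Each node computes its MOE with respect to its current component in $O(\ell)$ time and $O(m)$ messages. Then each base fragment upcasts, along fragment edges, only the lightest outgoing edge it has seen to its base fragment leader; since the fragment diameter is $O(D)$ and only one edge is forwarded, this costs $O(D)$ rounds and $O(n)$ messages. Next, the $O(n\ell/D)$ base fragment leaders upcast their component's lightest outgoing edge along the BFS tree in the usual pipelined fashion: the first message from each child of a node arrives within its subtree's hop-latency (at most $D$), and thereafter messages arrive at intervals of $1/c$, so the root collects all $O(n\ell/D)$ edges within $O\!\left(D + \tfrac{1}{c}\cdot\tfrac{n\ell}{D}\right)$ rounds. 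Because $D > \sqrt{n\ell/c}$, we have $\tfrac{1}{c}\cdot\tfrac{n\ell}{D} \le \tfrac{1}{c}\sqrt{cn\ell} = \sqrt{n\ell/c} < D$, so this is $O(D)$. The message count for the upcast is $O(D_{BFS}\cdot n\ell/D)$, where $D_{BFS}$ is the hop diameter of the BFS tree; since $D_{BFS} \le D/\ell$, this is $O(n)$.

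Then I would handle the symmetric half of the iteration: the BFS root locally simulates the uniform local aggregation rule on the received edges to decide which components merge, and downcasts the updated component ids first to the base fragment leaders and then, through the fragment trees, to all nodes; by the same estimates this costs $O(D)$ rounds and $O(m)$ messages. Finally, I would invoke the merging invariant---each iteration at least halves the number of components---so that $O(\log n)$ iterations suffice, and multiply: the total is $O((D + \sqrt{n\ell/c})\log n)$ rounds (keeping the $\sqrt{n\ell/c}$ term explicitly to match the statement, even though here it is dominated by $D$) and $O(m\log n)$ messages.

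The main obstacle I anticipate is the pipelining bookkeeping: one has to verify that the $O(n\ell/D)$ messages traversing the BFS tree never cause more than $O(n\ell/D)$-fold congestion on any edge, that the per-edge interval of $1/c$ is maintained despite intermediate nodes waiting for a first message from every child before upcasting, and that the bound $D_{BFS}\le D/\ell$ (which is what collapses the message total to $O(n)$) is justified by the BFS construction with uniform latency $\ell$. The remaining estimates are routine arithmetic, essentially identical to Lemma~\ref{lem:cm}.
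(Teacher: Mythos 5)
Your proposal matches the paper's proof essentially step for step: the same decomposition of an iteration into MOE discovery, fragment-internal upcast, pipelined upcast of the $O(n\ell/D)$ base-fragment edges along the BFS tree (with the same absorption $\tfrac{1}{c}\cdot\tfrac{n\ell}{D}\le\sqrt{n\ell/c}<D$ and the same $D_{\mathrm{BFS}}\cdot n\ell/D = O(n)$ message count via $D_{\mathrm{BFS}} = D/\ell$), the root's local simulation and downcast, and the final multiplication by $O(\log n)$ iterations. This is the paper's argument; the only cosmetic difference is that you flag the pipelining bookkeeping as a concern, whereas the paper asserts it directly by appeal to the synchronous, uniform-latency setting.
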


 \begin{proof}	

 In the first step of the iteration, determining the MOE requires $O(\ell)$ time and $O(m)$ messages. Upcasting the MOE's to the base fragment leader requires $O(D)$ rounds (base fragment diameter) and $O(n)$ messages (as only lightest MOE edge is upcast along a fragment). Next the base fragment leaders upcast to the root of the BFS tree in a synchronous fashion. For each base fragment only one MOE edge is upcast to the root of the shortest path tree. The total number of base fragments is $O(n\ell/D) \leq O(\sqrt{cn\ell})$. Therefore, the maximum possible time required for upcasting to the root is  $O(D+\frac{1}{c}\sqrt{cn\ell})$ and the number of messages sent is at most $O(D_{BFS}\cdot n\ell/D)$, where $D_{BFS}$ is the hop diameter of the BFS tree. As all edges have same latency, $D_{BFS}= D/\ell$. Therefore the total messages here is, $O(D/\ell\cdot n\ell/D) = O(n)$.	
Once the root of the BFS tree obtains the MOE edges of all the components, it locally computes the component mergings (by locally simulating the uniform local aggregation algorithm) and thereafter informs all the fragment leaders of their updated component ids requiring $O(D+\frac{1}{c} \sqrt{cn\ell})$  and $O(D_{BFS}\cdot n\ell/D) = O(n)$ messages. Finally, the base fragment leaders inform all the nodes of the base fragment which requires at most $O(D)$ rounds and $O(n)$ messages. %

 The cost of one iteration of merging using the BFS Tree is  $(O(\ell)+ O(D) + O(D+\sqrt{n\ell/c}))) = O(D+\sqrt{n\ell/c})$ time and $O(m+n+n) = O(m)$ messages. Since there are $\log n$ iterations the time complexity becomes $O((D+\sqrt{n\ell/c})\log n)$ and the message complexity is $O(m\log n)$.	
\end{proof}	
}

 \begin{theorem}	
In the weighted CONGEST model, when edges have uniform latencies, there exists a deterministic algorithm that computes the MST in ${O}((D + \sqrt{n\ell/c})\log n)$ rounds and with ${O}(m\log n + n \log^* n (\log ({n}/{c\ell})+ \log (D/{\ell}))$ messages.	
\end{theorem}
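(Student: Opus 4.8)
The plan is to assemble the theorem from the two-stage architecture already sketched: run the BFS construction once at the start, then the local-aggregation and global-aggregation stages, with a case split on how the latency diameter $D$ compares to the balance parameter $\sqrt{n\ell/c}$. First I would build a BFS tree rooted at an elected leader; since all edges share the latency $\ell$, one round of a standard \textsf{CONGEST} BFS construction is simulated in $\ell$ rounds here, so this costs $O(\ell D') = O(D)$ time and $O(m)$ messages via the construction of~\cite{Elkin:2017:SDD:3087801.3087823}. From this tree we also obtain a constant-factor estimate of $D$, which tells us which of the two cases we are in.

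For $D \le \sqrt{n\ell/c}$, I would invoke Lemma~\ref{lem:base}: running uniform local aggregation for $\lceil\log\sqrt{n/(c\ell)}\rceil$ iterations yields at most $\sqrt{cn\ell}$ base fragments, each of diameter $O(\sqrt{n\ell/c})$, in $O(\sqrt{n\ell/c}\log^* n)$ rounds and $O(m + n\log^* n\log(n/(c\ell)))$ messages; then Lemma~\ref{lem:cm} gives that the $\log n$ iterations of global aggregation over the BFS tree cost $O((D+\sqrt{n\ell/c})\log n)$ rounds and $O(m\log n)$ messages. Summing the two stages and the BFS cost, the running time is $O(\sqrt{n\ell/c}\log^* n) + O((D+\sqrt{n\ell/c})\log n) = O((D+\sqrt{n\ell/c})\log n)$ and the message count is $O(m\log n + n\log^* n\log(n/(c\ell)))$. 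For the symmetric case $D > \sqrt{n\ell/c}$, I would instead grow base fragments up to diameter $D$, so local aggregation runs for $\log(D/\ell)$ iterations, leaving at most $O(n\ell/D) = O(\sqrt{cn\ell})$ fragments in $O(D\log^* n)$ time and $O(m + n\log^* n\log(D/\ell))$ messages by Lemma~\ref{lem:base2}, followed by the $O((D+\sqrt{n\ell/c})\log n)$-round, $O(m\log n)$-message global aggregation of Lemma~\ref{lem:cm2}. Taking the worst of the two regimes gives the stated bounds $O((D+\sqrt{n\ell/c})\log n)$ rounds and $O(m\log n + n\log^* n(\log(n/(c\ell)) + \log(D/\ell)))$ messages.

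Correctness I would argue via the standard Bor\r{u}vka/GHS invariant: each iteration of both stages merges fragments only along minimum-weight outgoing edges, which never closes a cycle and always lies in the unique MST, while the maximal-matching-plus-unmatched-MOE rule forces the number of fragments to drop by at least half per iteration; hence after $O(\log n)$ total iterations a single fragment remains and it is exactly the MST, with every node knowing which of its incident edges belong to it. The one place needing care is the simulation of the $O(\log^* n)$-round deterministic symmetry-breaking routine on the fragment graph: a single round of it is emulated by neighboring fragment leaders communicating across an MOE edge and through their fragments, costing $O(2^i\ell)$ rounds in iteration $i$ but only $O(n)$ messages since traffic stays on MST edges.

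The step I expect to be the main obstacle — and the one that drives the whole design — is the pipelining analysis inside global aggregation (Lemmas~\ref{lem:cm} and~\ref{lem:cm2}): showing that the $O(\sqrt{cn\ell})$ MOE messages, one per base fragment, can be streamed up the BFS tree in $O(D + \tfrac{1}{c}\sqrt{cn\ell}) = O(D + \sqrt{n\ell/c})$ rounds by exploiting that an edge of capacity $c$ admits a new message every $1/c$ rounds, rather than suffering the naive per-hop $O(\ell)$ slowdown. Equally delicate is keeping the message complexity at $\tilde{O}(m)$: the case split on $D$ versus $\sqrt{n\ell/c}$ is precisely what guarantees that the hop-diameter of the BFS tree ($D_{BFS} = D/\ell$ under uniform latencies) times the number of base fragments is $O(n)$ in both regimes, so that broadcasting the recomputed component IDs back down the tree does not blow up the message count.
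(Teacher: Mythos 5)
Your proposal follows essentially the same route as the paper's own proof: build a BFS tree in $O(D)$ time, split on whether $D \le \sqrt{n\ell/c}$, invoke Lemmas~\ref{lem:base} and~\ref{lem:cm} in the first regime and Lemmas~\ref{lem:base2} and~\ref{lem:cm2} in the second, sum the stage costs, and argue correctness via the halving-of-fragments invariant. The identification of the two delicate points — pipelining $O(\sqrt{cn\ell})$ MOE messages up the tree in $O(D + \sqrt{n\ell/c})$ rounds, and using the case split to keep $D_{\mathrm{BFS}}$ times the fragment count at $O(n)$ — matches the reasoning in the paper's lemma proofs as well.
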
	

 \onlyLong{	
\begin{proof}	
The correctness of the algorithm immediately follows from the fact that in each iteration MST fragments merge with one another, and the total number of fragments reduce by at least half. This ensures after the said many iterations, there is only one fragment remaining, which is in fact the MST.	
We calculate the time and message complexity of the algorithm by considering two cases. For either case, the overall cost is determined by the cost of creating the base fragments and thereafter the cost of merging mst-components using the BFS tree. As shown in Section \ref{spt}, we can create a shortest path tree (or a BFS tree in this case) in %
 $O(D)$ time and $O(m\log n)$ messages. 
 	
\noindent\textbf{Case 1: When $D \leq \sqrt{n\ell/c}$}  \\	
The cost of creating the base fragments is $O(\sqrt{\tfrac{n\ell}{c}} \log^* n)$ time and $O(m+ n \log^* n \log ({cn}/{\ell}))$ messages (c.f. Lemma \ref{lem:base}). Thereafter, merging the $\sqrt{{cn\ell}}$  mst-components using the BFS tree requires $O((D+\sqrt{n\ell/c})\log n)$ rounds and ${O}(m\log n)$ messages (c.f. Lemma \ref{lem:cm}).

 Therefore, the running time for the case $D \leq \sqrt{n\ell/c}$ is calculated as $O(D) + O(\sqrt{{n\ell}/{c}} \log^* n) + O((D+\sqrt{n\ell/c})\log n) = {O}((D + \sqrt{n\ell/c})\log n)$ and the message complexity is $O(m\log n) + O(m+ n \log^* n \log ({cn}/{\ell})) + O(m\log n) = {O}(m\log n + n \log^* n \log ({cn}/{\ell}))$.

\noindent\textbf{Case 2: When $D > \sqrt{n\ell/c}$} \\ 	
The cost of creating the base fragments is $O(D \log^* n)$ time and $O(m+ n \log (D/{\ell} \log^* n))$ messages (c.f. Lemma \ref{lem:base2}). Thereafter, merging the $O(n\ell/D)$  mst-components using the BFS tree requires $O((D+\sqrt{n\ell/c})\log n)$ rounds and ${O}(m\log n)$ messages (c.f. Lemma  \ref{lem:cm2}).

 Therefore, the running time for the case $D > \sqrt{n\ell/c}$ is calculated as $O(D) + O(D \log^* n) + O((D+\sqrt{n\ell/c})\log n) = {O}((D + \sqrt{n\ell/c})\log n)$ and the message complexity is $O(m\log n) + O(m+ n \log^* n \log (D/{\ell})) + O(m\log n) = {O}(m\log n + n \log^* n \log (D/{\ell}))$.	
\end{proof}	

 }
 
\subsection{Lower Bound}
We show that our proposed algorithm is near-optimal, by providing an almost tight lower bound up to polylogarithmic factors. 
We show the lower bound for the case where all edges have the same latency $\ell$ by giving a simulation that relates the running time of an algorithm with uniform latencies to that of the standard $\mathsf{CONGEST}$ model and subsequently leveraging on the Das Sarma et al. \cite{DHK+12} lower bound for the standard $\mathsf{CONGEST}$ model.
\onlyShort{The proof of the following theorem has been deferred to the full version \cite{augustine2019latency}.}

\begin{theorem} \label{thm:lb-l=w1}
Any algorithm to compute the MST of a network graph in which all edges have the same latency $\ell$ and capacity $c$ must, in the worst case, take $\Omega(D + ({\sqrt{n\ell/c}})/{\log n})$ time. 
\end{theorem}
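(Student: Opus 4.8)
The plan is to establish the two terms of the bound separately and then combine them; since a worst-case bound is taken over all valid instances, $\Omega(a)$ on one family together with $\Omega(b)$ on another yields $\Omega(a+b)=\Omega(\max\{a,b\})$. For the $\Omega(D)$ term I would re-run the argument behind the $\Omega(D)$ term in the proof of Theorem~\ref{thm:lb-l=w}, modified so that \emph{every} edge carries the same latency $\ell$ (the original construction perturbed the latencies of two edges, which is not allowed here). Concretely: take the $n$-cycle, give each edge latency $\ell$ and capacity $c$, assign weight $1$ to all but two diametrically opposed edges $e_1,e_2$, and give $e_1,e_2$ weights $2$ or $3$ chosen uniformly and independently at random, so that $e_1,e_2$ are the two strictly heaviest edges and the (tie-broken) MST drops exactly the heavier of them. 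A node incident to $e_1$ cannot decide whether $e_1\in\mathrm{MST}$ without learning the weight of $e_2$, which is $\Theta(n)$ hops, i.e.\ $\Theta(n\ell)$ latency, away; since the cycle has latency diameter $D=\ell\lfloor n/2\rfloor=\Theta(n\ell)$, this forces $\Omega(D)$ rounds, even for randomized algorithms.

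For the $\Omega(\sqrt{n\ell/c}/\log n)$ term I would apply Lemma~\ref{lem:simulation} with $\ell_{\min}=\ell_{\max}=\ell$, so the slowdown factor is just $\alpha=2$: any algorithm $\mathcal{A}$ computing an MST in $T$ rounds on a graph with uniform latency $\ell$ and capacity $c$ can be simulated in $O(T/\ell+1)$ rounds of the standard $\mathsf{CONGEST}$ model with messages of $B=O(c\ell\log n)$ bits. I would then invoke the bandwidth-parametrized lower bound of Das Sarma et al.~\cite{DHK+12} exactly as in the proof of Theorem~\ref{thm:lb-l=w}: on their hard instance (with all weights offset by $1$ so they become $\{1,2\}$ and hence lie in our model), computing an MST in $\mathsf{CONGEST}$ with bandwidth $B$ requires $\Omega(\sqrt{n/(B\log n)})=\Omega(\sqrt{n/(c\ell\log^2 n)})$ rounds. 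Viewing that instance as a uniform-latency, capacity-$c$ weighted-$\mathsf{CONGEST}$ instance and feeding the hypothetical fast $\mathcal{A}$ through the simulation gives $O(T/\ell+1)=\Omega(\sqrt{n/(c\ell\log^2 n)})$; in the non-trivial regime where the right-hand side exceeds a constant this rearranges to $T=\Omega\big(\ell\sqrt{n/(c\ell\log^2 n)}\big)=\Omega(\sqrt{n\ell/c}/\log n)$. Since $D\ge\ell$ always and $\max$ and sum are within a factor $2$, the two parts combine to $T=\Omega(D+\sqrt{n\ell/c}/\log n)$.

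I expect the main obstacle here to be bookkeeping rather than conceptual difficulty: one must verify that the bandwidth blow-up to $\Theta(c\ell\log n)$ bits is threaded consistently through both the simulation lemma and the Das Sarma bound so that the $\ell$ and $c$ factors end up in the numerator and denominator as claimed, and one must note that the latency diameter of the Das Sarma instance is only $O(\ell\log n)$, so the $\Omega(D)$ term there is dominated by $\sqrt{n\ell/c}/\log n$ and the two lower bounds are genuinely complementary (one binding on large-diameter instances, the other on small-diameter ones). The model assumption $\ell\ge 1/c$ ensures $c\ell\ge 1$, so the simulated messages are $\Omega(\log n)$ bits and the quantity under the square root is well behaved; it is also worth stating explicitly that the bound applies to Monte Carlo randomized algorithms, since both Lemma~\ref{lem:simulation} and the Das Sarma lower bound do.
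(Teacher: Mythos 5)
Your proposal is correct and takes essentially the same approach as the paper: reduce the uniform-latency, capacity-$c$ model to standard $\mathsf{CONGEST}$ with bandwidth $O(c\ell\log n)$ bits and time scaled by $\ell$ (the paper argues this equivalence directly, you derive it by invoking Lemma~\ref{lem:simulation} with $\ell_{\min}=\ell_{\max}=\ell$, which is a cleaner way to say the same thing), then apply the Das Sarma et al.\ bandwidth-parametrized bound to get $\Omega(\sqrt{n\ell/c}/\log n)$ and combine with the $\Omega(D)$ term. The paper dismisses the $\Omega(D)$ part as ``trivial, following from standard $\mathsf{CONGEST}$ scaled by $\ell$'' without a construction, whereas you give an explicit uniform-latency cycle with two randomly weighted antipodal edges — a valid (and needed) adaptation of the Theorem~\ref{thm:lb-l=w} construction, which perturbed latencies rather than weights and therefore does not directly apply here.
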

\onlyLong{ 	
\begin{proof}

A lower bound of $\Omega(D)$ is trivial and follows immediately from the standard $\mathsf{CONGEST}$ model, albeit a scaling factor of $\ell$ due to the uniform latency edges. The scaling factor of $\ell$ appears while considering the hop diameter of the graph, however is absorbed back in the notation while considering latency diameter.  %

As all edges are of latency $\ell$ and capacity $c$, at any given instant of time, there can be at most $O(c\ell)$ messages on a link (i.e. at most $O(c\ell \log n)$ bits). This model is exactly equal to the standard $\mathsf{CONGEST}$ model where time is scaled by a factor of $\ell$ and with messages of size $O(c\ell \log n)$ bits.

From Das Sarma et al.~\cite{DHK+12}, we know that computing the MST in the CONGEST model with bandwidth $O(c\ell \log n)$ bits requires $\Omega(\sqrt{{n}/{B \log n}})$ rounds; where $B$ is the bandwidth term referring to the number of bits that can be sent over an edge per round. This, in turn, translates to a lower bound of $\Omega(\ell \cdot \sqrt{{n/{c \ell \log^2 n}}}) = \Omega(({\sqrt{(n\ell/c)}})/{\log n})$. 
\end{proof}
}

\section{Discussion} \label{sec:disc}
We have studied the problem of constructing an MST when the edges have a latency and a capacity associated with them. We provide several tight bounds on the time complexity depending on the relationship between the weights and the latencies. For the case where an edge weight equals its latency, we see that the algorithm is not always message optimal. For the case with unit latencies, several interesting recent works \cite{Pandurangan:2017:TMD:3055399.3055449, Elkin:2017:SDD:3087801.3087823, Haeupler:2018:RMD:3212734.3212737, ghaffari_et_al:LIPIcs:2018:9819} have shown that algorithms (both randomized and deterministic) can achieve simultaneous time and message optimality. These algorithms exploit the fact that in graphs with unit latencies both the latency diameter and the hop diameter are exactly the same. However, this does not hold for the case with arbitrary latencies, as there no longer exists a one-to-one correspondence between the latency diameter  and the hop diameter; and as such getting simultaneously optimal time and message complexities seems difficult. %
Simultaneous time and message optimality has been achieved in the classical \textsf{CONGEST} model with unit latencies. Whether such simultaneously optimal algorithms exist for our weighted \textsf{CONGEST} model remains an intriguing open problem.

To emphasize the challenge of optimizing both time and messages with latencies, consider the case of $k$-token aggregation on a given graph $G$, where initially $k$ different nodes possess a token and the goal is to collect these tokens at a predetermined sink node $s$. It is easy to see that the time complexity of doing so is $\Omega(D+k)$ and the corresponding message complexity is $\Omega(m+kD_{hop})$, where $D_{hop}$ is the hop diameter of $G$. For the case with unit latencies, both $D$ and $D_{hop}$ correspond to the exact same value and given any graph, a convergecast on a BFS tree with the sink node as the root gives an algorithm that is both time and message optimal. However, with latencies, there is an inherent trade-off.  Consider the case where the $k$ nodes having the token have a direct link to the sink node, albeit a link with a very high edge latency. Also, say there exists a long low latency path (consisting of multiple edges) from those $k$ nodes to the sink, that determines the latency diameter. Clearly, any time optimal algorithm taking $O(D+k)$ time would use the low latency path incurring a message cost of $O(m+k\cdot D)$; alternatively, any message optimal algorithm taking $(m+k)$ messages (as $D_{hop}=1$ due to the direct links) would need to use the high latency direct links which incur a cost of the latency time. Therefore, it is clear that for any $k$-aggregation protocol, in the presence of latencies, simultaneous time and message optimality is not possible. Thus one resolution to the open problem for MST construction would be to show that such an inherent trade-off exists there as well. (See \cite{gmyr_et_al:LIPIcs:2018:9821} for some interesting time-message trade-offs in the unit latency case.)

Another interesting open question regards capacities.  Here, we have considered the case where the capacity of every edge is exactly the same. However, the case where edges can have different edge capacities raises several questions.  For example, if the weights are equal to the capacities, the resulting MST would have useful data aggregation properties.

Several fundamental graph problems (e.g., those that rely on locality like MIS, matching and coloring) become interestingly more challenging when latencies are considered, leading to several open problems.\\

\bibliographystyle{abbrv}
\bibliography{bibliography}

\end{document}